\keywords{alpha-conversion, nominal syntax, unification, equational theories}
\newcommand{\figureboxed}[1]{%
  \hrule
  #1
  \hrule
}
\newdimen\proofrulebreadth \proofrulebreadth=.05em
\newdimen\proofdotseparation \proofdotseparation=1.25ex
\newdimen\proofrulebaseline \proofrulebaseline=2ex
\newdimen\proofrulebaseline \proofrulebaseline=1.7ex
\let\then\relax
\def\hfi{\hskip0pt plus.0001fil}
\mathchardef\squigto="3A3B
\newif\ifinsideprooftree\insideprooftreefalse
\newif\ifonleftofproofrule\onleftofproofrulefalse
\newif\ifproofdots\proofdotsfalse
\newif\ifdoubleproof\doubleprooffalse
\let\wereinproofbit\relax
\newdimen\shortenproofleft
\newdimen\shortenproofright
\newdimen\proofbelowshift
\newbox\proofabove
\newbox\proofbelow
\newbox\proofrulename
\def\shiftproofbelow{\let\next\relax\afterassignment\setshiftproofbelow\dimen0 }
\def\shiftproofbelowneg{\def\next{\multiply\dimen0 by-1 }%
\afterassignment\setshiftproofbelow\dimen0 }
\def\setshiftproofbelow{\next\proofbelowshift=\dimen0 }
\def\setproofrulebreadth{\proofrulebreadth}
\def\prooftree{% NESTED ZERO (\ifonleftofproofrule)
%
% first find out whether we're at the left-hand end of a proof rule
\ifnum  \lastpenalty=1
\then   \unpenalty
\else   \onleftofproofrulefalse
\fi
%
% some space on left (except if we're on left, and no infinity for outermost)
\ifonleftofproofrule
\else   \ifinsideprooftree
        \then   \hskip.5em plus1fil
        \fi
\fi
%
% begin our proof tree environment
\bgroup% NESTED ONE (\proofbelow, \proofrulename, \proofabove,
%               \shortenproofleft, \shortenproofright, \proofrulebreadth)
\setbox\proofbelow=\hbox{}\setbox\proofrulename=\hbox{}%
\let\justifies\proofover\let\leadsto\proofoverdots\let\Justifies\proofoverdbl
\let\using\proofusing\let\[\prooftree
\ifinsideprooftree\let\]\endprooftree\fi
\proofdotsfalse\doubleprooffalse
\let\thickness\setproofrulebreadth
\let\shiftright\shiftproofbelow \let\shift\shiftproofbelow
\let\shiftleft\shiftproofbelowneg
\let\ifwasinsideprooftree\ifinsideprooftree
\insideprooftreetrue
%
% now begin to set the top of the rule (definitions local to it)
\setbox\proofabove=\hbox\bgroup$\displaystyle % NESTED TWO
\let\wereinproofbit\prooftree
%
% these local variables will be copied out:
\shortenproofleft=0pt \shortenproofright=0pt \proofbelowshift=0pt
%
% flags to enable inner proof tree to detect if on left:
\onleftofproofruletrue\penalty1
}
\def\eproofbit{% NESTED TWO
%
% various hacks applicable to hypothesis list 
\ifx    \wereinproofbit\prooftree
\then   \ifcase \lastpenalty
        \then   \shortenproofright=0pt  % 0: some other object, no indentation
        \or     \unpenalty\hfil         % 1: empty hypotheses, just glue
        \or     \unpenalty\unskip       % 2: just had a tree, remove glue
        \else   \shortenproofright=0pt  % eh?
        \fi
\fi
%
% pass out crucial values from scope
\global\dimen0=\shortenproofleft
\global\dimen1=\shortenproofright
\global\dimen2=\proofrulebreadth
\global\dimen3=\proofbelowshift
\global\dimen4=\proofdotseparation
\global\count255=\proofdotnumber
%
% end the box
$\egroup  % NESTED ONE
%
% restore the values
\shortenproofleft=\dimen0
\shortenproofright=\dimen1
\proofrulebreadth=\dimen2
\proofbelowshift=\dimen3
\proofdotseparation=\dimen4
\proofdotnumber=\count255
}
\def\proofover{% NESTED TWO
\eproofbit % NESTED ONE
\setbox\proofbelow=\hbox\bgroup % NESTED TWO
\let\wereinproofbit\proofover
$\displaystyle
}%
\def\proofoverdbl{% NESTED TWO
\eproofbit % NESTED ONE
\doubleprooftrue
\setbox\proofbelow=\hbox\bgroup % NESTED TWO
\let\wereinproofbit\proofoverdbl
$\displaystyle
}%
\def\proofoverdots{% NESTED TWO
\eproofbit % NESTED ONE
\proofdotstrue
\setbox\proofbelow=\hbox\bgroup % NESTED TWO
\let\wereinproofbit\proofoverdots
$\displaystyle
}%
\def\proofusing{% NESTED TWO
\eproofbit % NESTED ONE
\setbox\proofrulename=\hbox\bgroup % NESTED TWO
\let\wereinproofbit\proofusing
\kern0.3em$
}
\def\endprooftree{% NESTED TWO
\eproofbit % NESTED ONE
% \dimen0 =     length of proof rule
% \dimen1 =     indentation of conclusion wrt rule
% \dimen2 =     new \shortenproofleft, ie indentation of conclusion
% \dimen3 =     new \shortenproofright, ie
%                space on right of conclusion to end of tree
% \dimen4 =     space on right of conclusion below rule
  \dimen5 =0pt% spread of hypotheses
% \dimen6, \dimen7 = height & depth of rule
%
% length of rule needed by proof above
\dimen0=\wd\proofabove \advance\dimen0-\shortenproofleft
\advance\dimen0-\shortenproofright
%
% amount of spare space below
\dimen1=.5\dimen0 \advance\dimen1-.5\wd\proofbelow
\dimen4=\dimen1
\advance\dimen1\proofbelowshift \advance\dimen4-\proofbelowshift
%
% conclusion sticks out to left of immediate hypotheses
\ifdim  \dimen1<0pt
\then   \advance\shortenproofleft\dimen1
        \advance\dimen0-\dimen1
        \dimen1=0pt
%       now it sticks out to left of tree!
        \ifdim  \shortenproofleft<0pt
        \then   \setbox\proofabove=\hbox{%
                        \kern-\shortenproofleft\unhbox\proofabove}%
                \shortenproofleft=0pt
        \fi
\fi
%
% and to the right
\ifdim  \dimen4<0pt
\then   \advance\shortenproofright\dimen4
        \advance\dimen0-\dimen4
        \dimen4=0pt
\fi
%
% make sure enough space for label
\ifdim  \shortenproofright<\wd\proofrulename
\then   \shortenproofright=\wd\proofrulename
\fi
%
% calculate new indentations
\dimen2=\shortenproofleft \advance\dimen2 by\dimen1
\dimen3=\shortenproofright\advance\dimen3 by\dimen4
%
% make the rule or dots, with name attached
\ifproofdots
\then
        \dimen6=\shortenproofleft \advance\dimen6 .5\dimen0
        \setbox1=\vbox to\proofdotseparation{\vss\hbox{$\cdot$}\vss}%
        \setbox0=\hbox{%
                \advance\dimen6-.5\wd1
                \kern\dimen6
                $\vcenter to\proofdotnumber\proofdotseparation
                        {\leaders\box1\vfill}$%
                \unhbox\proofrulename}%
\else   \dimen6=\fontdimen22\the\textfont2 % height of maths axis
        \dimen7=\dimen6
        \advance\dimen6by.5\proofrulebreadth
        \advance\dimen7by-.5\proofrulebreadth
        \setbox0=\hbox{%
                \kern\shortenproofleft
                \ifdoubleproof
                \then   \hbox to\dimen0{%
                        $\mathsurround0pt\mathord=\mkern-6mu%
                        \cleaders\hbox{$\mkern-2mu=\mkern-2mu$}\hfill
                        \mkern-6mu\mathord=$}%
                \else   \vrule height\dimen6 depth-\dimen7 width\dimen0
                \fi
                \unhbox\proofrulename}%
        \ht0=\dimen6 \dp0=-\dimen7
\fi
%
% set up to centre outermost tree only
\let\doll\relax
\ifwasinsideprooftree
\then   \let\VBOX\vbox
\else   \ifmmode\else$\let\doll=$\fi
        \let\VBOX\vcenter
\fi
% this \vbox or \vcenter is the actual output:
\VBOX   {\baselineskip\proofrulebaseline \lineskip.2ex
        \expandafter\lineskiplimit\ifproofdots0ex\else-0.6ex\fi
        \hbox   spread\dimen5   {\hfi\unhbox\proofabove\hfi}%
        \hbox{\box0}%
        \hbox   {\kern\dimen2 \box\proofbelow}}\doll%
%
% pass new indentations out of scope
\global\dimen2=\dimen2
\global\dimen3=\dimen3
\egroup % NESTED ZERO
\ifonleftofproofrule
\then   \shortenproofleft=\dimen2
\fi
\shortenproofright=\dimen3
%
% some space on right and flag we've just made a tree
\onleftofproofrulefalse
\ifinsideprooftree
\then   \hskip.5em plus 1fil \penalty2
\fi
}
\newcommand{\cent}{\vdash}
\newcommand{\theory}[1]{\ensuremath{\mathsf{#1}}}
\newcommand\aleq{\mathrel{{\approx}_{\scriptstyle {\alpha}}}}
\newcommand{\caleq}{\mathrel{\stackrel{\fixp}{\approx}_{\scriptstyle {\alpha, \theory{C}}}}}
\newcommand{\ealeq}[1]{\mathrel{\stackrel{\fixp}{\approx}_{\scriptstyle {\alpha, \theory{#1}}}}}
\newcommand\act{\cdot}
\newcommand{\rulefont}[1]{\ensuremath{(\mathbf{#1})}}
\newcommand{\tf}[1]{\mathsf{#1}}
\newcommand{\Id}{\mathit{Id}}
\newcommand{\fixp}{\ensuremath\curlywedge}
\newcommand{\efixp}[1]{\ensuremath\curlywedge_{\theory{#1}}}
\newcommand{\new}{\reflectbox{$\mathsf N$}}
\newcommand{\upsvar}[1]{\Upsilon|_{#1}}
\newcommand{\perm}[1]{\texttt{perm}(#1)}
\newcommand{\supp}[1]{\texttt{supp}(#1)}
\newcommand{\suppt}[2]{\texttt{supp}_{#1}(#2)}
\newcommand{\swap}[2]{(#1 \ #2)}
\newcommand{\var}[1]{\texttt{Var}(#1)}
\newcommand{\pair}[2]{\langle #1, #2\rangle}
\newcommand{\nftriple}[1]{\langle \mathcal{#1}\rangle_\texttt{nf}}
\newcommand{\probl}{\texttt{Pr}}
\newcommand{\permset}{\texttt{Perm}(\mathbb{A})}
\newcommand{\diffs}[2]{\texttt{ds}(#1, #2)}
\newcommand{\dom}[1]{\texttt{dom}(#1)}
\newcommand{\fn}[1]{{\tt fn}(#1)}
\newcommand{\appAE}{\approx_{\{\alpha, \theory{E}\}}}
\newcommand{\appAC}{\approx_{\{\alpha, \theory{C}\}}}
\newcommand{\appAA}{\approx_{\{\alpha, \theory{A}\}}}
\newcommand{\appAAC}{\approx_{\{\alpha, \theory{AC}\}}}
\newcommand{\appAll}{\approx_{\{\theory{A},\theory{C},\theory{AC}\}}}
\newcommand{\faleq}{\stackrel{\fixp}{\approx}_{\alpha}}
\newcommand{\ucaleq}{\stackrel{\fixp ?}{\approx}_\theory{C}}
\newcommand\sol[1]{\langle #1\rangle_{\mathtt{\it{sol}}}}
\theoremstyle{plain}
\newtheorem{proposition}[thm]{Proposition}
\newtheorem{corollary}[thm]{Corollary}
\theoremstyle{plain} %\crefname{satz}{Satz}{S\"atze}
\begin{document}

%\title[Alpha-Equality and Unification via Fixed Points]{Nominal Alpha-Equality and Unification via Fixed-point Constraints}
%\title[Nominal Rewriting and Unification]{Nominal Rewriting and Unification via Fixed-point Constraints}
\title[Nominal Syntax and Permutation Fixed Points]{On Nominal Syntax and Permutation Fixed Points}

\author[M.~Ayala-Rinc\'on]{Mauricio Ayala-Rinc\'on}
\address{Departments Mathematics and Computer Science, Universidade de Bras\'ilia, Bras\'ilia, Brazil}
\email{ayala@unb.br}
\thanks{Work supported by FAPDF grant 193001369/2016. M. Ayala-Rinc\'on partially funded by CNPq research grant number 307672/2017-4.}

\author[M.~Fern\'andez]{Maribel Fern\'andez}
\address{Department of Informatics, King's College London, London, UK}	
\email{maribel.fernandez@kcl.ac.uk}
%\thanks{thanks 2, optional.}

\author[D.~Nantes]{Daniele Nantes-Sobrinho}	
\address{Department Mathematics, Universidade de Bras\'ilia, Bras\'ilia, Brazil}
\email{dnantes@unb.br}
%\urladdr{@url3\quad\rm{(optionally, a web-page can be specified)}}  %optional
%\thanks{thanks 3, optional.}	%optional

%% the abstract has to PRECEDE the command \maketitle:
%% be sure not to issue the \maketitle command twice!

\begin{abstract}
  \noindent We propose a new axiomatisation of the alpha-equivalence relation for nominal terms, based on a primitive notion of fixed-point constraint. We show that the standard
freshness relation between atoms and terms can be derived from the more primitive notion of permutation fixed-point, and use this result to prove the correctness of the new $\alpha$-equivalence axiomatisation. 
This gives rise to a new notion of nominal unification, where solutions for unification problems are pairs of a fixed-point context and a substitution.
Although it may seem less natural than the standard notion of nominal unifier based on freshness constraints, the notion of unifier based on fixed-point constraints  behaves better when equational theories are considered: for example, nominal unification remains finitary in the presence of commutativity, whereas this is not the case 
%it becomes infinitary 
when unifiers are expressed using freshness contexts. 
We provide a definition of $\alpha$-equivalence modulo equational theories that takes into account  $\theory{A}$, $\theory{C}$ and $\theory{AC}$ theories.
Based on this notion of equivalence, we show that $\theory{C}$-unification is finitary and we provide a sound and complete \theory{C}-unification algorithm, as a first step  towards the development of nominal unification modulo \theory{AC} and other equational theories with permutative properties. 

\end{abstract}

\maketitle
%\tableofcontents

\section{Introduction}\label{S:one}

This paper presents a new approach for the definition of nominal languages, based on the use of permutation fixed points. More precisely, we give  a new axiomatisation of the $\alpha$-equivalence relation for nominal terms using permutation fixed-points, and revisit nominal unification in this setting. 

In nominal syntax~\cite{Urban2004},
\emph{atoms} are used to represent object-level variables
  and  \emph{atom permutations} are used  to implement renamings, following the nominal-sets approach advocated by Gabbay and Pitts~\cite{Gabbay2000,Gabbay2002a,pitts2013nominal}. Atoms can be abstracted over terms; the syntax $[a]s$ represents the abstraction of $a$ in $s$. To rename an abstracted atom $a$ to $b$, a \emph{swapping}
  permutation $\pi = (a\,b)$ is applied.
  Thus, the action of $\pi$ over  $[a]s$, written as $(a\,b)\act
  [a]s$,  produces the nominal term $[b]s'$, where $s'$ is the result of
  swapping  $a$ and $b$ in $s$. The $\alpha$-equivalence relation between nominal terms is specified using swappings together with a \emph{freshness relation} between atoms and terms, written $b\# s$,
  which roughly corresponds to $b$ not occurring free in $s$.

In this setting, checking $\alpha$-equivalence requires another
first-order specialised calculus to check freshness constraints. 
For instance, checking whether $[a]s
\approx_\alpha [b]t$ reduces to checking whether $s\approx_\alpha (b\,a)\act t$
and $a\# t$.   

The action of a permutation propagates
down the structure of nominal terms,  until a variable is reached: permutations suspend over variables.  Thus, $\pi\act s$ represents the action of a permutation
over a nominal term, but is not itself a nominal term unless $s$ is a
variable; for instance,  $\pi\act X$ is a \emph{suspension} (also called \emph{moderated variable}), which is a nominal term. 

The presence of moderated variables and atom-abstractions makes reasoning about equality
of nominal terms more involved than in standard first-order syntax. 
For example, $\pi\act X \approx_\alpha^? \rho\act X$  is only true 
when $X$ ranges over nominal terms, say $s$,  for
which all atoms in the difference set of $\pi$ and $\rho$ (i.e., the
set $\{a : \pi(a) \neq \rho(a)\}$) are fresh in $s$. 

If the domain of a permutation $\pi$ is fresh for $X$ then $\pi\act X \aleq Id\act X$.
Thus a set of freshness constraints (i.e., a freshness context) can be used to specify 
that a permutation will have no effect on the instances of $X$.
This is why in \emph{nominal unification}~\cite{Urban2004}, the solution for a problem is a pair consisting of a freshness context and a substitution.

The use of freshness contexts is natural when dealing with ``syntactic'' nominal unification, but in the presence of equational axioms (i.e.,  equational nominal unification) it is not straightforward. 
For example, in the case of \theory{C}-nominal unification (nominal unification modulo commutativity), to specify that a permutation has no effect on the instances of $X$ modulo \theory{C}, in other words, to specify that the permutation does not affect a given \theory{C}-equivalence class, we need something more than a freshness constraint (note that 
$\swap{a}{b}\act (a + b) = b+a =_\theory{C} a+b$, so the permutation $\swap{a}{b}$ fixes the term  $a + b$, despite the fact that $a$ and $b$ are not fresh).

In this paper, we propose to axiomatise $\alpha$-equivalence of nominal terms using permutation fixed-point constraints: we write $\pi \fixp t$ (read ``$\pi$ fixes $t$'') if $t$ is a fixed-point of $\pi$. We show how to derive fixed-point constraints $\pi \fixp t$ from primitive constraints of the form $\pi \fixp X$, and show the correctness of this approach by proving that the $\alpha$-equivalence relation generated in this way coincides with the one axiomatised via freshness constraints. 
We then show how   fixed-point  constraints can be used to specify $\alpha$-equivalence modulo equational theories containing $\theory{A}$, $\theory{C}$, and $\theory{AC}$ operators, and provide an algorithm to solve nominal unification  problems modulo $\theory{C}$, which outputs a finite set of most general solutions.

\paragraph{\bf Related Work}
Equational reasoning has been extensively explored since the early development of modern abstract algebra (see, e.g., the $E$-unification surveys by Siekmann~\cite{Siekmann90} and  Baader et al~\cite{BaaderSiekmann94} and \cite{Baader2001}).  For \theory{AC}-equality  checking, \theory{AC} matching and \theory{AC} unification, refined techniques have been applied. For instance, \theory{AC}-equality check and linear \theory{AC}-matching problems can be reduced 
to searching a perfect matching in a bipartite 
graph~\cite{BKN1987}, whereas \theory{AC} unification 
problems can be reduced to solving a system of Diophantine equations~\cite{Stickel81,fages1987}.

 Techniques to deal with $\alpha$-equivalence modulo the equational theories $\theory{A}$, $\theory{C}$ and $\theory{AC}$ were proposed in   \cite{Ayala-Rincon2016, ACFONantes_jv19, Ayala2017, Ayala-Rincon2018},  using the standard nominal approach via freshness constraints.
Solving nominal $\theory{C}$-unification problems requires to deal with fixed-point equations, for which there is no finitary representation of the set of solutions using only freshness constraints and substitutions~\cite{Ayala2017, Ayala-Rincon2018}. A combinatorial algorithm permits to find all the solutions of fixed-point equations~\cite{Ayala-Rincon2018}.

Fixed-point constraints arise also in other contexts: Schmidt-Schau\ss~ et al~\cite{Kutsia2016} show how nominal unification problems in a language with recursive let operators give rise to freshness constraints and nominal fixed-point equations. The approach to nominal unification via permutation fixed-points proposed in this paper could also be used to reason about equality in this language.

%Summarising,  we develop an extension of fixed-point constraints modulo equational theories including $\theory{A}$, $\theory{C}$ and $\theory{AC}$ symbols. We provide an algorithm to solve unification problems, which outputs a finitary representation of nominal \theory{C}-unification solutions, consisting only of primitive fixed-point  constraints and substitutions. 

This paper is a revised and extended version of~\cite{DBLP:conf/rta/Ayala-RinconFN18}, where nominal terms with fixed-point permutation constraints were first presented. In this paper we prove the correctness of the approach, by showing it is equivalent to the standard presentation via freshness constraints. We also provide proofs of soundness and completeness of the unification algorithm.  A generalisation of the notion of  $\alpha$-equivalence to  take into account equational theories including $\theory{A}$, $\theory{C}$ and $\theory{AC}$ is also provided  (only $\theory{C}$ was considered in~\cite{DBLP:conf/rta/Ayala-RinconFN18}),  as well as a $\theory{C}$-unification algorithm.

\paragraph{\bf Overview}
The paper is organised as follows. Section~\ref{sec:preliminaries} provides the necessary background material on nominal syntax and semantics. Section~\ref{sec:constraints} introduces fixed-point constraints and $\alpha$-equivalence, and shows that these relations behave as expected. In particular, we show that there is a two-way translation between the freshness-based $\alpha$-equivalence relation and its permutation fixed-point counter-part, which confirms that the fixed-point approach is equivalent to the standard approach via freshness constraints. 
Section~\ref{sec:unif} presents a nominal unification algorithm specified by a set of simplification rules, and proves its soundness and completeness. In Section~\ref{sec:alpha-equivalence-modulo} we generalise the approach to take into account equational theories: we define a notion of permutation fixed-point $\alpha$-equivalence  modulo $\theory{A}$, $\theory{C}$ and $\theory{AC}$ theories, and develop a $\theory{C}$-unification algorithm. Finally, Section~\ref{sec:conclusions} concludes and discusses future work.

\section{Preliminaries}
\label{sec:preliminaries}
 Let $\mathbb{A}$ be a fixed and countably infinite set of elements $a,b,c,\ldots$, which will be called \emph{atoms} (atomic names). 
A permutation on  $\mathbb{A}$ is a bijection on  $\mathbb{A}$ with finite domain. 

Fix a countably infinite set $\mathcal{X}=\{X,Y,Z,\ldots\}$ of variables and a countable set $\mathcal{F}= \{ \tf{f}, \tf{g}, \ldots\}$ of function symbols.

\begin{defi}[Nominal grammar] Nominal terms are generated by the following grammar.
$$s, t :=  a \mid   [a]t \mid (t_1,\ldots, t_n) \mid  \tf{f}^\theory{E}\ t \mid  \pi \cdot X $$
where  $a$ is an {\em atom term}, $[a]t$ denotes the {\em abstraction}
of the atom $a$ over the term $t$, $(t_1,\ldots, t_n)$ is a tuple, function symbols are equipped with an equational theory $\theory{E}$, hence $\tf{f}^\theory{E}\ t$ denotes the {\em application of $\tf{f}^\theory{E}$ to $t$}
and $\pi \act X$ is a {\em moderated variable} or {\em suspension}, where $\pi$ is an atom permutation. We write $\tf{f}^\emptyset$, or simply $\tf{f}$, to emphasise that no equational theory is assumed for $\tf{f}$, that is, $\tf{f}$ is just a function symbol. 
\end{defi}

We follow the \textit{permutative convention}~\cite[Convention~2.3]{Gabbay2008} for atoms
throughout the paper, i.e., atoms $a, b, c$ range permutatively over $\mathbb{A}$ so that they are always pairwise different, unless stated otherwise.

Atom {\em permutations} are represented by finite lists of \emph{swappings}, which are
pairs of different atoms $\swap{a}{b}$; hence, a permutation $\pi$  is generated by the following grammar: 
$$\pi := Id \ | \ \swap{a}{b}\pi.$$
where $Id$ is the identity permutation,  usually omitted from the list of swappings representing a permutation $\pi$. 
Suspensions of the form $Id\act X$ will be represented just by $X$.  
We write $\pi^{-1}$ for the \textit{inverse} of $\pi$, 
and use $\circ$ to denote the composition of permutations. For example, if
\(\pi = \swap{a}{b}\swap{b}{c}\) then \(\pi(c) = a\) and \(c = \pi^{-1}(a)\). 

The \emph{difference set} of two permutations $\pi$, $\pi'$ is
$\diffs{\pi}{\pi'} = \{a \mid \pi(a) \neq \pi'(a)\}$.

We write $\var{t}$ for the set of variables occurring in $t$. Ground terms are terms
without variables, that is, $\var{t} = \emptyset$. A ground term may still contain atoms,
for example $a$ is a ground term and $X$ is not.

\begin{defi}[Permutation action] \label{def:perm} The action of a permutation $\pi$ on a term $t$ is
defined by induction on the number of swappings in $\pi$:

$ Id\act t = t$ and $(\swap{a}{b}\pi )\act t = \swap{a}{b}\act(\pi \act  t),$
where
\[
\begin{array}{@{\hspace{-1mm}}c@{\hspace{-.2mm}}c@{\hspace{-.2mm}}c}
\begin{array}{c}
\swap{a}{b}\act a = b \\
\swap{a}{b}\act  b = a \\ 
\swap{a}{b} \act  c = c
\end{array}
& 
\begin{array}{c}
\swap{a}{b}\act  (\pi \act  X) = (\swap{a}{b} \circ \pi)\act  X \\
\swap{a}{b}\act  \tf{f}\ t = \tf{f}\  \swap{a}{b}\act  t \\  
\swap{a}{b} \act  (t_1 ,\ldots, t_n ) = (\swap{a}{b}\act  t_1 ,\ldots, \swap{a}{b}\act t_n )
\end{array}
&
\begin{array}{c}
\swap{a}{b}\act  [a]t = [b]\swap{a}{b}\act t\\
\swap{a}{b}\act  [b]t = [a]\swap{a}{b}\act t\\
\swap{a}{b}\act  [c]t = [c]\swap{a}{b}\act t\\

\end{array}
\end{array}
\]
\end{defi}

\begin{defi}[Substitution] 
 {\em Substitutions} are generated by the grammar
$$\sigma ::= id \ | \ [X\mapsto s]\sigma.$$
Postfix notation is used for substitution application and $\circ$ for composition: $t(\sigma \circ \sigma') =(t\sigma)\sigma'$. Substitutions act on terms elementwise in the natural way:
$t \ id = t, \ 
t[X\mapsto s]\sigma$ $=$ $(t[X\mapsto s])\sigma$, % fix the place of the linebreak
 where
\[
\begin{array}{cc}
\begin{array}{c}
a[X\mapsto s] = a\\
(\tf{f}\ t)[X\mapsto s] = \tf{f} (t[X \mapsto s])\\
([a]t)[X\mapsto s] = [a](t[X\mapsto s])
\end{array}
&
\begin{array}{c}
(t_1 ,\ldots, t_n )[X\mapsto s] = (t_1 [X\mapsto s], \ldots, t_n [X\mapsto s])\\
(\pi \cdot X)[X\mapsto s] =\pi \cdot s\\
(\pi \cdot Y )[X \mapsto s] = \pi \cdot Y
\end{array}
\end{array}
\]
\end{defi}

The following well-known property of substitution and permutation justifies the notation $\pi\act s\sigma$ (without brackets), see ~\cite{Urban2004} for more details. 

\begin{lem}
\label{lem:subst-perm-commute}
Substitution and permutation commute:
$\pi\act (s\sigma) = (\pi\act s)\sigma$.
\end{lem}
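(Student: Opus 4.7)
The plan is to prove the identity by structural induction on $s$, with an inner induction on the number of swappings in $\pi$. The base of the outer structure is the atom case $s = a$, where both sides reduce to $\pi \act a$ since $a\sigma = a$ and $\pi\act a$ is itself an atom, which $\sigma$ leaves untouched.

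For the inductive step, I would first reduce the general claim to the single-swapping instance. Indeed, if $\pi = \swap{a}{b}\pi'$ then by Definition~\ref{def:perm} we have $\pi\act (s\sigma) = \swap{a}{b}\act (\pi' \act (s\sigma))$, so an induction on the length of $\pi$ shows that it is enough to establish $\swap{a}{b}\act (t\sigma) = (\swap{a}{b}\act t)\sigma$ for every term $t$. This reduction is clean because both swappings and substitutions propagate down term structure without interaction.

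The single-swapping identity is then checked by structural induction on $t$. The tuple and function-symbol cases follow immediately from the corresponding clauses in the definitions of $\pi\act\_$ and $\_\sigma$ together with the inductive hypothesis applied componentwise. The abstraction case $t = [c]u$ splits, via the permutative convention, into $c = a$, $c = b$, or $c$ distinct from both, and in each sub-case the appropriate clauses for $\swap{a}{b}\act [c]u$ and $([c]u)\sigma$ combine with the inductive hypothesis on $u$ to give the result. The suspension case $t = \rho\act X$ is the most delicate: on the left, the substitution clause $(\rho\act X)\sigma = \rho\act (X\sigma)$ yields $\swap{a}{b}\act(\rho\act(X\sigma))$, while on the right the definition of permutation action on a suspension gives $((\swap{a}{b}\circ\rho)\act X)\sigma = (\swap{a}{b}\circ\rho)\act (X\sigma)$, and the two expressions agree because permutation composition satisfies $(\pi_1\circ\pi_2)\act u = \pi_1\act(\pi_2\act u)$ on all terms, applied here to $u = X\sigma$.

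The main obstacle is the suspension case, since this is the only place where the permutation being tracked meets the term that $\sigma$ actually substitutes in; all other cases are either trivial or pure propagation. Handling it cleanly depends on having the composition law $(\pi_1\circ\pi_2)\act u = \pi_1\act (\pi_2\act u)$ already available, which itself is a routine induction on the length of $\pi_1$ that I would prove as a preliminary observation if it is not already assumed.
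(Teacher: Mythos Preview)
The paper does not give its own proof of this lemma; it is stated as well-known with a reference to \cite{Urban2004}. Your structural-induction plan is the standard approach, and most of it is fine, but there is a genuine circularity in the suspension case.

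You invoke ``the substitution clause $(\rho\act X)\sigma = \rho\act (X\sigma)$''. In this paper, however, substitutions are \emph{lists} $[X_1\mapsto s_1]\cdots[X_n\mapsto s_n]$ applied left to right, and the only primitive clauses on suspensions are $(\rho\act X)[X\mapsto u]=\rho\act u$ and $(\rho\act X)[Y\mapsto u]=\rho\act X$. The identity you cite is not a definitional clause for a general $\sigma$: if $\sigma=[X\mapsto u]\sigma'$ then $(\rho\act X)\sigma=(\rho\act u)\sigma'$ whereas $\rho\act(X\sigma)=\rho\act(u\sigma')$, and equating these two is exactly the statement of the lemma for the permutation $\rho$, the term $u$, and the substitution $\sigma'$. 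None of your inductive hypotheses supplies this, since $u$ is not a subterm of $\rho\act X$ and $\rho$ may be longer than the single swapping you have reduced to. (In presentations where a substitution is a \emph{function} $X\mapsto\sigma(X)$ and one \emph{defines} $(\rho\act X)\sigma:=\rho\act\sigma(X)$, your step would indeed be definitional; with the list-based definition used here it is not.)

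The repair is easy: add an outermost induction on the length of $\sigma$, reducing first to the single-substitution case $\sigma=[Y\mapsto u]$. Your structural induction on $s$ then goes through unchanged in every other case, and in the suspension case $s=\rho\act X$ the two primitive clauses above together with the composition law $(\pi_1\circ\pi_2)\act t=\pi_1\act(\pi_2\act t)$ finish the argument directly, with no self-reference.
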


\subsection{Alpha-equivalence via freshness constraints}
\label{sec:alpha-fresh}
In the standard nominal approach (see, e.g., ~\cite{pitts2003nominal,Urban2004,Fernandez2007}),  the $\alpha$-equivalence relation $s\aleq t$ is defined using a freshness relation between atoms and terms, written $a\# t$ -- read ``$a$ fresh for $t$'' -- which, intuitively,  corresponds to the idea of an atom not occurring free in a term. 
These relations are axiomatised using the rules in Figures~\ref{fig.rules.freshness} and \ref{fig.rules.equaf}, respectively.

\begin{figure*}[ht]
%\hrule
\small{
\figureboxed{
$$
\begin{array}{c@{\hspace{1.1cm}}c@{\hspace{1.1cm}}c}
\begin{prooftree}
%\pi(a) = a
\justifies \Delta\cent a\# b
\using \rulefont{\# a}
\end{prooftree}
&
\begin{prooftree}
\pi^{-1}(a) \# X \in \Delta
\justifies \Delta\cent a\#\pi \cdot X
\using \rulefont{\# var}
\end{prooftree}
&
\begin{prooftree}
\Delta\cent a\# t
\justifies 
\Delta\cent a\# \tf{f}\ t
\using \rulefont{\# \tf{f}} 
\end{prooftree}
\\[4ex]
\begin{prooftree}
\Delta\cent a\# t_1  \quad \ldots \quad \Delta\cent a\# t_n
\justifies
\Delta\cent a \#  (t_1,\ldots, t_n)
\using \rulefont{\# tuple}
\end{prooftree}
&
\begin{prooftree}
\justifies 
\Delta\cent a \#  [a]t
\using \rulefont{\# [a]}
\end{prooftree}
&
\begin{prooftree}
\Delta \cent a \# t
\justifies 
\Delta\cent a \#  [b]t
\using \rulefont{\# abs}
\end{prooftree}
\end{array}
$$}}
\caption{Rules for freshness}
\label{fig.rules.freshness}
\end{figure*}

We call $s \aleq t$ and $a\# t$ $\alpha$-equality and freshness constraints, respectively.
Note that to define $\aleq$ we use the  difference set of two permutations in rule $\rulefont{\aleq var}$; we denote by $\diffs{\pi}{\pi'}\# X$ the following set of freshness constraints:
$$\diffs{\pi}{\pi'}\# X = \{a \# X \ |\ a\in \diffs{\pi}{\pi'}\}.$$

\begin{figure*}[ht]
\small{
\figureboxed{$$
\begin{array}{c@{\hspace{1cm}}c}
\begin{prooftree}
\justifies \Delta \cent a \aleq a
\using \rulefont{\aleq a}
\end{prooftree}
&
\begin{prooftree}
\diffs{\pi}{\pi'}\# X \subseteq \Delta
%\supp{ (\pi')^{-1}\circ \pi} \subseteq \supp{\perm{\upsvar{X}}} 
\justifies 
\Delta\cent  \pi \cdot X \aleq \pi'\cdot X
\using \rulefont{\aleq var} 
\end{prooftree}
\\[4ex]
\begin{prooftree}
\Delta\cent t\aleq t'
\justifies 
\Delta\cent  \tf{f}\ t \aleq \tf{f}\ t'
\using \rulefont{\aleq\tf{f}} 
\end{prooftree}
&
\begin{prooftree}
\Delta\cent t_1\aleq t_1' \quad \ldots \quad\Delta \cent t_n\aleq t_n'
\justifies 
\Delta\cent  (t_1,\ldots, t_n) \aleq (t_1',\ldots, t_n')
\using \rulefont{\aleq tuple} 
\end{prooftree}
\\[4ex]
\begin{prooftree}
\Delta\cent  t\aleq t' 
\justifies 
\Delta\cent  [a]t \aleq [a]t'
\using \rulefont{\aleq [a]} 
\end{prooftree}
&
\begin{prooftree}
\Delta\cent\ s\aleq (a \ b )\act t \quad \Delta \cent  a \# t
\justifies
\Delta\cent [a] s \aleq [b] t
\using \rulefont{\aleq {\tf ab}} 
\end{prooftree}
\end{array}
$$}
}
\caption{Rules for $\alpha$-equality via freshness}
\label{fig.rules.equaf}
\end{figure*}

The symbols $\Delta$ and $\nabla$ denote \emph{freshness contexts}, which are sets of freshness constraints of the form $a\# X$. The domain of a freshness context $\Delta$, denoted by $\dom{\Delta}$,  consists of the atoms occurring in $\Delta$; $\Delta|_X$ consists of the restriction of $\Delta$ to the freshness constraints on variable $X$, that is, the set $\{a\# X \ | \ a\# X \in \Delta\}$. 

%\subsection{Equational Reasoning via $\#$}
%Add the following definitions via the $\#$ approach:
%\begin{itemize}
%\item nominal unification
%\item nominal rewriting
%\item closed rewriting
%\end{itemize}

\subsection{Nominal sets and support}

Let  $S$ be a set equipped with an action of the group $\permset$ of finite permutations of $\mathbb{A}$. 

\begin{defi}
\label{def:support}
A set $A\subset \mathbb{A}$  is a \emph{support} for an element $x\in S$ if for all $\pi \in \permset$, the following holds
\begin{equation}
((\forall a \in A) \  \pi(a)=a) \Rightarrow \pi \cdot x = x
\end{equation}

A \emph{nominal set} is a set equipped with an action of the group $\permset$, that is, a $\permset$-set, all of whose elements have finite support. 
\end{defi}

As in~\cite{pitts2013nominal}, we denote by $\texttt{ supp}_S (x)$ the least finite support of $x$, that is, 
\begin{equation*}
\texttt{supp}_S(x):= \bigcap \{A \in \mathcal{P}(\mathbb{A}) \ | \  A \mbox{ is a finite support for } x \}.
\end{equation*}
We write $\supp{x}$ when $S$ is clear from the context.
 Clearly, each $a\in \mathbb{A}$ is finitely supported by $\{a\}$, therefore $\supp{a}=\{a\}$.

\subsection{The ``new'' quantifier}
\label{sec:new}
In Nominal Logic~\cite{pitts2003nominal}, the ``new'' quantifier $\new$ is used to quantify over new names. 
Given two elements $x$, $y$  of a nominal set, write $x  \#  y$ as an abbreviation of  $\supp{x} \cap \supp{y} = \emptyset$; then $\new a. P(a,x)$ (read ``for some/any new atom $a$, $P(a,x)$'') abbreviates $\forall a \in \mathbb{A}. a \# x \Rightarrow P(a,x)$ or equivalently $\exists a \in \mathbb{A}. a \# x \wedge  P(a,x)$ (see Theorem 3.9 in~\cite{pitts2013nominal} for more details).  

Intuitively, the ``new'' quantifier is used to indicate that a predicate holds for some (any) new atoms. The formula $\new a. \phi$ means ``for a fresh name $a$, $\phi$ holds''. Since there is an infinite supply of names and elements of nominal sets have a finite support, it is always possible to pick a fresh name. Moreover, it can be shown that if a property $\phi(a)$ holds for some fresh name $a$ then it holds for all fresh names, so the way in which the fresh atom is chosen is not important.

 In proof systems for Nominal Logic, several approaches have been proposed to define rules to introduce and eliminate the $\new$ quantifier (see, e.g.,~\cite{pitts2003nominal,Cheney2005, Cheney2008}). We follow Cheney's approach~\cite{Cheney2005}, where the formula $\new a. \phi$ (read ``$a$ is a \emph{newly quantified atom} in  $\phi$'') is well formed if $\phi$ is a well-formed formula and $a \in \mathbb{A}$ is semantically fresh. In other words, the introduction of a $\new$ quantifier for $a$ in a formula $\phi$ requires $a$ to be a new atom, fresh for all the variables in $\phi$ (see~\cite{Cheney2005}, Figure 5, where the well-formedness rules are given). We implicitly assume that $a$ is a newly generated atom when we introduce $\new$ in our formulas in the next section.
 %to add in section 4
Operationally, proving a formula $\new a. \phi$ boils down to  picking a fresh atom $c$ and proving that $\phi$ holds when the occurrences of $a$ are replaced by the fresh atom $c$.

%%%%%%%%%%%%%%%%%%%%%%%%%%%%%%%%%%%%%%%%%%%%%%%%%%%%%%%%%%%%%%%%%%%%%%

\section{Fixed-point Constraints}~\label{sec:constraints}
The native notion of equality on nominal terms is $\alpha$-equivalence, written $s\aleq t$. As mentioned in section~\ref{sec:alpha-fresh}, this relation is usually axiomatised using the \emph{freshness relation} between atoms and terms. The notion of freshness is derived from the notion of support, which in turn is defined using permutation fixed-points (see Definition~\ref{def:support}). We can define freshness using the quantifier $\new$  combined with a notion of fixed-point, as shown by Pitts~\cite{pitts2013nominal} (page 53):
$$a \# X\Leftrightarrow \new a'. \swap{a}{a'}\cdot X=X.$$

In this paper, instead of defining $\alpha$-equivalence using freshness, we define it using  the more primitive notion of \emph{fixed-point} under the action of permutations. We will denote this relation $\faleq$, and show that it coincides with $\aleq$ on ground terms, i.e., the relation  defined using permutation fixed points  corresponds to the relation defined using freshness. For non-ground terms, there is also a correspondence, but under different kinds of assumptions (fixed-point constraints vs.\ freshness constraints).

%%%%%%%%%%%%%%%%%%%
\subsection{Fixed-points of permutations and term equality}
\label{sec:fixpte}
We start by defining a binary relation that describes which elements of a nominal set $S$ are fixed-points of a permutation $\pi \in \permset$:

 \begin{defi}[Fixed-point relation]
 \label{def:fixed-point}
Let $S$ be a nominal set. The \emph{fixed-point relation} $\fixp \subseteq  \permset\!  \times\! S$ is defined as:
$\pi \fixp x \Leftrightarrow \pi \act x = x.$ 
%$\pi \fixp x \Leftrightarrow dom(\pi)\cap \supp{x}=\emptyset.$
Read ``$\pi\fixp x$'' as  \emph{``$\pi$ fixes $x$''}.
\end{defi}

It is easy to see that  if $\dom{\pi}\cap \supp{x}=\emptyset$ then $\pi \fixp x$ holds (this follows directly from the definition of support, see Definition~\ref{def:support}). However, the other direction does not hold in general. For example, consider expressions built using atoms and a binary commutative operator $+$; the permutation $\pi = (a\ b)$ fixes the equivalence class of the expression $a+b$ despite the fact that its support coincides with $\dom{\pi}$.
 
Permutation fixed-points will play an important role in the definition of $\alpha$-equality of nominal terms. Below we define    {\em fixed-point constraints} and {\em equality constraints} using  predicates $\fixp$  and   $\faleq$ and then give  deduction rules to derive fixed-point and equality judgements.

Intuitively, for $s$ and $t$ ground nominal terms
\begin{itemize}
\item $s\faleq t$ will mean that $s$ and $t$ are $\alpha$-equivalent, i.e., equivalent modulo renaming of abstracted atoms.
\item  $\pi \fixp t$ will mean that the permutation $\pi$ fixes the nominal term $t$, that is,  $\pi \cdot  t\faleq t$. This  means that $\pi$ has ``no effect'' on $t$ except for the renaming of bound names, for instance,  $\swap{a}{b}\fixp [a]a$ but not  $\swap{a}{b} \fixp \tf{f}\ a$. 
\end{itemize}

In the case of non-ground terms, a fixed-point or $\alpha$-equality constraint has to be evaluated in a context, which provides information about permutations that fix  the variables.

\begin{defi}[Fixed-point and equality constraints]
 A \emph{fixed-point constraint} is a pair $\pi\fixp t$ of a permutation $\pi$ and a term $t$.  An $\alpha$-{\em equivalence constraint} is a pair of the form $s\faleq t$. 
 
 We call a fixed-point constraint of the form $\pi\fixp X$ a \emph{primitive fixed-point constraint} and a finite set of such constraints is called  a {\em fixed-point context}. 
 $\Upsilon, \Psi, \ldots$ range over fixed-point  contexts. 
 
 We write $\pi \fixp \var{t}$ as an abbreviation for the set of constraints $\{\pi \fixp X \mid X \in \var{t}\}$.
\end{defi}

We now introduce some notation: 

The set $\var{\Upsilon}$ of variables  is defined as expected: it contains all the variables mentioned in the fixed-point context $\Upsilon$. 

The set of permutations of a fixed-point  context $\Upsilon$ with respect to the variable $X\in \var{\Upsilon}$, denoted by $\perm{\upsvar{X}}$, is defined as $\perm{\upsvar{X}}:=\{\pi\ | \ \pi\fixp X \in \Upsilon\}$.

For a substitution $\sigma$  and  a fixed-point  context $\Upsilon$  we define $\Upsilon\sigma:= \{\pi \fixp X\sigma \, | \, \pi \fixp X\in \Upsilon\}$.

To axiomatise the relation $\fixp$, we rely on the notion of \emph{conjugation} of permutations. 
Conjugacy plays an important role in group theory~\cite{hungerford73}. Recall that in a group $G$, the element $b$ is a conjugate of  $a$  if there exists an element $g$ such that $g a g^{-1} = b$. It can be easily shown that conjugacy is an equivalence relation: if $b$ is a conjugate of $a$ then $a$ is a conjugate of $b$. The conjugacy class of $a$ consists of the elements of the form $g a g^{-1}$, for $g\in G$. In the case of permutation groups, 
 the conjugate of $\pi$ with respect to $\rho$, denoted as $\pi^\rho$, is the result of the composition: $ \rho \circ \pi \circ \rho^{-1}$.
\[
\begin{array}{cccccccc}
\pi^\rho: & A &\overset{\rho^{- 1}}{\rightarrow} & A
& \overset{\pi}{\rightarrow}& A&\overset{\rho}{\rightarrow}& A\\
& a &\mapsto & \rho^{-1}(a) &\mapsto & \pi(\rho^{-1}(a)) &\mapsto & \rho(\pi(\rho^{-1}(a)))
\end{array}
\]

%Intuitively, the permutation $\pi^\rho$ consists of the action of $\rho$ on the atoms affected by the permutation $\pi$ and in the support of $\rho$:

The notion of \emph{support} of a permutation or nominal term will be necessary for the next results. Considering permutations as elements of a nominal set, it follows from Definition~\ref{def:support} that  $\supp{\pi}=\dom{\pi}$; similarly, in the case of a ground term $t$, the support coincides with its set of free names, i.e., $\supp{t}=\fn{t}$. In the case a term $t$ is not ground, it should be considered in a  context, say $\Upsilon$, which gives information about its variables. We define it after introducing fixed-point and $\alpha$-equality judgements.

\begin{defi}[Judgements]
A \emph{fixed-point  judgement} is a tuple $\Upsilon \cent \new \overline{c}.\ \pi \fixp t$ of a fixed-point  context and a fixed-point  constraint where the atoms $\overline{c}$ are newly quantified. 
Here $ \overline{c}=c_1, \ldots, c_n$, and $n=0$ stands for empty quantification; in that case we may omit the $\new$ quantifier and write simply $\Upsilon \cent  \pi \fixp t$.

An \emph{$\alpha$-equivalence judgement} is a tuple 
%$\Psi \cent s\faleq t$ of a fixed-point  context and an equality constraint.
$\Psi \cent  \new \overline{c}.\ s\faleq t$ of a fixed-point  context and an equality constraint
where the atoms $\overline{c}$ are newly quantified. 
\end{defi}
The derivable fixed-point  and $\alpha$-equivalence judgements are defined by the rules in  Figures~\ref{fig.rules.fixp} and \ref{fig.rules.equal}. 
We give an example before explaining the rules in more detail.

\begin{exa}\label{ex:fixedpoint}
The term $[a]\tf{f}a$ is a fixed-point  for the permutation $(a \ b)$, since $(a \ b)\act [a] \tf{f} a  = [b]\tf{f}b$, which is $\alpha$-equivalent to $[a]\tf{f}a$. As expected, we can derive   $\cent (a \ b) \fixp [a]\tf{f}a$. For this we use the rule \rulefont{\fixp abs}, which requires to prove $\cent \new c_1.(a \ b) \fixp (a\ c_1)\act \tf{f}a$. Since $(a\ c_1)\act \tf{f}a = \tf{f}c_1$, we need to prove  $\cent \new c_1.(a \ b) \fixp c_1$, which holds by \rulefont{\fixp a}.
However, $\tf{f}a$  is not a fixed-point  for $(a \ b)$: we cannot derive $\cent (a \ b) \fixp \tf{f}a$. This is to be expected, since $\tf{f}a$ and $\tf{f}b$ are not $\alpha$-equivalent; we cannot derive $\cent (a \ b) \act \tf{f}a \faleq \ \tf{f} a$. 
\end{exa}

\begin{figure*}[ht]
\hrule
\small{
$$
\begin{array}{l@{\hspace{1cm}}c} 
\begin{prooftree}
\pi(a) = a
\justifies 
\Upsilon\cent \new \overline{c}.\pi\fixp a
\using \rulefont{\fixp a}
\end{prooftree}
&
\begin{prooftree}
\supp{\pi^{{\pi'}^{-1}}}\setminus \{\overline{c}\}\subseteq \supp{\perm{\upsvar{X}}}
\justifies 
\Upsilon\cent \new \overline{c}.\pi\fixp\pi'\cdot X
\using \rulefont{\fixp var}
\end{prooftree}
\\[4ex]
\begin{prooftree}
\Upsilon\cent \new \overline{c}.\pi\fixp t
\justifies 
\Upsilon\cent \new \overline{c}.\pi\fixp \tf{f}\ t
\using \rulefont{\fixp \tf{f}} 
\end{prooftree}
&
\begin{prooftree}
\Upsilon\cent \new \overline{c}.\pi\fixp t_1  \quad \ldots \quad \Upsilon\cent \new \overline{c}.\pi\fixp t_n
\justifies
\Upsilon\cent \new\overline{c}.\pi\fixp  (t_1,\ldots, t_n)
\using \rulefont{\fixp tuple}
\end{prooftree}\\[4ex]
\begin{prooftree}
\Upsilon\cent \new \overline{c}, c_1.\pi \fixp   \swap{a}{c_1}\act t 
 \justifies 
\Upsilon\cent \new \overline{c}.\pi \fixp  [a]t
\using \rulefont{\fixp abs}  %~~~~ (c_1 ~~ new) 
\end{prooftree}
&
\end{array}
$$
}
\hrule
\caption{Fixed-point  rules.  }
\label{fig.rules.fixp}
\end{figure*}

The premise of rules \rulefont{\fixp abs} and \rulefont{\faleq\!{\tf ab}} introduces a newly quantified atom $c_1$. Recall that the introduction of a  $\new$ quantifier requires the use of a new atom by well formedness (see Section~\ref{sec:new}). Therefore, in rules  \rulefont{\fixp abs} and \rulefont{\faleq\!{\tf ab}} we can assume that the atom $c_1$ does not occur in the conclusion.

In rule $\rulefont{\fixp var}$, the condition 
$\supp{\pi^{{\pi'}^{-1}}}\setminus \{\overline{c}\}\subseteq \supp{\perm{\upsvar{X}}}$
means that the permutation $\pi^{{\pi'}^{-1}}$ can only affect atoms in  $\perm{\Upsilon|_X}$ and new atoms, hence it fixes $X$. 
Rules $\rulefont{\fixp {\tf f}}$ and $\rulefont{\fixp\texttt{tuple}}$ are straightforward.  Rule $\rulefont{\fixp abs}$ is the most interesting one. 
The intuition behind this rule is the following: $[a]t$ is a fixed-point of $\pi$ if $\pi\act [a]t$ is $\alpha$-equivalent to $[a]t$, that is, $[\pi(a)]\pi\act t$ is $\alpha$-equivalent to $[a]t$;
the latter means that the only free atom in $t$ that could be affected by $\pi$ is $a$, hence, if we  replace occurrences of $a$ in $t$ with another, new atom $c$, $\pi$ should have no effect.
Note that if a judgement $\Upsilon \cent \new \overline{c}. \pi \fixp t$ is derivable, then  so is $\Upsilon \cent \new \overline{c}, \overline{c'}. \pi \fixp t$ for a new set of atoms $\overline{c'}$ (there is an infinite supply of new atoms).

%VERSION WITHOUT NEW QUANTIFIER
% \begin{figure*}[ht]
% \hrule
% \small{
% $$
% \begin{array}{c@{\hspace{1cm}}c}
% \begin{prooftree}
% \justifies \Upsilon \cent a \faleq a
% \using \rulefont{\faleq\!a}
% \end{prooftree}
% &
% \begin{prooftree}
% \supp{ (\pi')^{-1}\circ \pi} \subseteq \supp{\perm{\upsvar{X}}} 
% \justifies 
% \Upsilon\cent  \pi \cdot X \faleq \pi'\cdot X
% \using \rulefont{\faleq\!var} 
% \end{prooftree}
% \\[4ex]
% \begin{prooftree}
% \Upsilon\cent t\faleq t'
% \justifies 
% \Upsilon\cent  \tf{f}\ t \faleq \tf{f} \ t'
% \using \rulefont{\faleq\!\tf{f}} 
% \end{prooftree}
% &
% \begin{prooftree}
% \Upsilon\cent t_1\faleq t_1' \quad \ldots \quad\Upsilon \cent t_n\faleq t_n'
% \justifies 
% \Upsilon\cent  (t_1,\ldots, t_n) \faleq (t_1',\ldots, t_n')
% \using \rulefont{\faleq\!tuple} 
% \end{prooftree}
% \\[4ex]
% \begin{prooftree}
% \Upsilon\cent  t\faleq t' 
% \justifies 
% \Upsilon\cent  [a]t \faleq [a]t'
% \using \rulefont{\faleq\![a]} 
% \end{prooftree}
% &
% \begin{prooftree}
% \Upsilon\cent s\faleq (a \ b )\act t \qquad \Upsilon\cent \new c. (a\ c) \fixp t
% \justifies
% \Upsilon\cent [a] s \faleq [b] t
% \using \rulefont{\faleq\!{\tf ab}}
% \end{prooftree}
% \end{array}
% $$}
% \hrule
% \caption{Rules for $\alpha$-equality. }
% \label{fig.rules.equal}
% \end{figure*}

\begin{figure*}[ht]
\hrule
\small{
$$
\begin{array}{c@{\hspace{1cm}}c}
\begin{prooftree}
\justifies \Upsilon \cent \new \overline{c}.\ a \faleq a
\using \rulefont{\faleq\!a}
\end{prooftree}
&
\begin{prooftree}
\supp{ (\pi')^{-1}\circ \pi} \setminus \{\overline{c}\} \subseteq \supp{\perm{\upsvar{X}}} 
\justifies 
\Upsilon\cent  \new \overline{c}.\ \pi \cdot X \faleq \pi'\cdot X
\using \rulefont{\faleq\!var} 
\end{prooftree}
\\[4ex]
\begin{prooftree}
\Upsilon\cent \new \overline{c}.\ t\faleq t'
\justifies 
\Upsilon\cent  \new \overline{c}.\ \tf{f}\ t \faleq \tf{f} \ t'
\using \rulefont{\faleq\!\tf{f}} 
\end{prooftree}
&
\begin{prooftree}
\Upsilon\cent \new \overline{c}.t_1\faleq t_1' \quad \ldots \quad\Upsilon \cent \new \overline{c}.t_n\faleq t_n'
\justifies 
\Upsilon\cent  \new \overline{c}.(t_1,\ldots, t_n) \faleq (t_1',\ldots, t_n')
\using \rulefont{\faleq\!tuple} 
\end{prooftree}
\\[4ex]
\begin{prooftree}
\Upsilon\cent  \new \overline{c}.\ t\faleq t' 
\justifies 
\Upsilon\cent  \new \overline{c}.\ [a]t \faleq [a]t'
\using \rulefont{\faleq\![a]} 
\end{prooftree}
&
\begin{prooftree}
\Upsilon\cent \new \overline{c}.\ s\faleq (a \ b )\act t \qquad \Upsilon\cent \new \overline{c},c_1. (a\ c_1) \fixp t
\justifies
\Upsilon\cent \new \overline{c}.\ [a] s \faleq [b] t
\using \rulefont{\faleq\!{\tf ab}}
\end{prooftree}
\end{array}
$$}
\hrule
\caption{Rules for $\alpha$-equality.}
\label{fig.rules.equal}
\end{figure*}

The $\alpha$-equivalence relation is defined using fixed-point judgements (see rule $(\faleq{\tf{ab}})$). Rules $(\faleq {\tf a})$, $(\faleq {\tf f})$, $(\faleq {\tf{[a]}})$ and $(\faleq {\tf{tuple}})$ are defined as expected, whereas the intuition behind  rule $(\faleq  {\tf{ var}})$ is similar to the corresponding rule in Figure~\ref{fig.rules.fixp}.
The most interesting rule is $(\faleq{\tf{ab}})$. Intuitively, it states that for two abstractions $[a]s$ and $[b]t$ to be equivalent, we must obtain equivalent terms if we rename in one of them, in our case $t$, the abstracted atom $b$ to $a$, so that they both use the same atom. Moreover, the atom $a$ should not occur free in $t$, which is checked by stating that $\swap{a}{c_1}$ fixes $t$ for some newly generated atom $c_1$.

\begin{exa} Notice that $(b\,e)\fixp X \vdash  (a\, b) \fixp [a](\tf{f}a, (c\, d)\cdot X)$. This is derived as follows: by  rule $\rulefont{\fixp{\tf abs}}$ we have to show $(b\,e)\fixp X \vdash  \new g. (a\, b) \fixp (\tf{f}g, (a\,g)(c\, d)\cdot X)$;  by rule $\rulefont{\fixp{\tf tuple}}$ this requires   $(b\,e)\fixp X \vdash  \new g. (a\, b) \fixp \tf{f}g$ and $(b\,e)\fixp X \vdash  \new g. (a\, b) \fixp (a\,g)(c\, d)\cdot X$; the former holds by application of rules $\rulefont{\fixp\tf{f}}$  and $\rulefont{\fixp{\tf a}}$ since $(a\,b)\act g = g$, and the latter by rule $\rulefont{\fixp{\tf var}}$, since $\supp{(a\ b)^{(c\ d)(a \ g)} } \setminus \{g\}  = \{g, b\} \setminus \{g\} = \{b\}$ and $\supp{(b\ e)} = \{b,e\}$. 

However, it is not possible to derive $(a\,e)\fixp X \vdash  (a\, b) \fixp [a](\tf{f}a, (c\, d)\cdot X)$ since in this case $b$ is not in the support of the permutations that fix $X$ (only $(a\ e)$  fixes $X$ according to our context).

\end{exa}

We prove below that $\faleq$ is indeed an equivalence relation, for which we need to study  the properties of the relations $\faleq$ and $\fixp$, starting with \emph{inversion} and \emph{equivariance}.

\begin{lem}[Inversion]
The inference rules for $\faleq$ are invertible.
\end{lem}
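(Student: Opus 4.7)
The proof plan is to proceed by inspection of the derivation system in Figure~\ref{fig.rules.equal}, observing that the inference rules for $\faleq$ are syntax-directed in the following sense: given an $\alpha$-equivalence judgement $\Upsilon \cent \new \overline{c}.\ s \faleq t$, the shape of the pair $(s,t)$ determines uniquely which rule could have been applied last in any derivation. Concretely, only $\rulefont{\faleq\!a}$ concludes $a \faleq a$; only $\rulefont{\faleq\!var}$ concludes $\pi \cdot X \faleq \pi' \cdot X$; only $\rulefont{\faleq\!\tf{f}}$ concludes $\tf{f}\ t \faleq \tf{f}\ t'$; only $\rulefont{\faleq\!tuple}$ concludes a tuple equation; only $\rulefont{\faleq\![a]}$ concludes $[a]t \faleq [a]t'$; and only $\rulefont{\faleq\!{\tf ab}}$ concludes $[a]s \faleq [b]t$ with $a \neq b$. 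The last two cases are disjoint thanks to the permutative convention, which guarantees that $a$ and $b$ in the statement of $\rulefont{\faleq\!{\tf ab}}$ denote distinct atoms, so this rule cannot overlap with $\rulefont{\faleq\![a]}$.

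From this observation, invertibility follows by a uniform argument: for each rule, we take a derivation of the conclusion and inspect its last step. Since the shape of the conclusion admits only one rule as its final step, the sub-derivations given by that rule's premises are exactly the derivations witnessing invertibility. For the axiom $\rulefont{\faleq\!a}$ there is nothing to invert (no premises). For $\rulefont{\faleq\!var}$ the statement reduces to showing that the side condition $\supp{(\pi')^{-1} \circ \pi} \setminus \{\overline{c}\} \subseteq \supp{\perm{\upsvar{X}}}$ must hold whenever the conclusion is derivable, which is immediate since this is the only rule that can yield that conclusion. For the congruence rules $\rulefont{\faleq\!\tf{f}}$, $\rulefont{\faleq\!tuple}$ and $\rulefont{\faleq\![a]}$, the premises are recovered directly as subtrees. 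For $\rulefont{\faleq\!{\tf ab}}$, both premises (the $\alpha$-equivalence premise and the fixed-point premise involving a freshly quantified atom $c_1$) are extracted from the unique last step.

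I expect no real obstacle here, since the proof is a case analysis using only the shape of conclusions. The one subtlety worth flagging is the handling of the newly quantified atom $c_1$ introduced by $\rulefont{\faleq\!{\tf ab}}$: invertibility of this rule yields a premise of the form $\Upsilon \cent \new \overline{c}, c_1.\ (a\ c_1) \fixp t$ for the specific $c_1$ appearing in the derivation at hand, but by the semantics of $\new$ discussed in Section~\ref{sec:new}, the choice of $c_1$ is irrelevant as long as it is fresh, so this premise holds for any new atom. This point is routine but worth noting when the lemma is applied later.
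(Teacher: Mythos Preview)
Your proposal is correct and follows essentially the same approach as the paper: the paper's proof is a one-line observation that there is only one rule for each syntactic class of terms, together with the remark that the permutative convention ensures $a$ and $b$ are distinct in $\rulefont{\faleq\!{\tf ab}}$, which is exactly your argument spelled out in more detail. Your additional remark about the newly quantified $c_1$ is not in the paper's proof but is a reasonable clarification that does not change the method.
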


\begin{proof}
Straightforward since  there is only one rule for each syntactic class of terms. Note that in the case of rule $\rulefont{\faleq\!{\tf ab}}$, the permutative convention ensures that $a$ and $b$ are different atoms. 
\end{proof}

Equivariance is an important notion in nominal logic: equivariant subsets of nominal sets are closed under permutation (see Definition 1.8 in~\cite{pitts2013nominal}).  Below we show that the fixed-point and $\alpha$-equivalence relations are equivariant. 
%The Equivariance Lemma~\ref{lem:equivariance} relies on the conjugation of the permutation $\pi$ by $\rho$,  $\pi^\rho$.

\begin{lem}[Equivariance]
\label{lem:equivariance}\hfill
\begin{enumerate}
%\item 
%$\Upsilon\cent \pi \fixp \  t$  iff  $ \Upsilon  \cent \pi^\rho \fixp \rho \act t$, for all permutation $\rho$.
\item 
$\Upsilon\cent \new \overline{c}.\pi \fixp \  t$  iff  $ \Upsilon  \cent\new \overline{c}. \pi^\rho \fixp \rho \act t$, for any permutation $\rho$.

%\item 
%$\Upsilon \cent s\faleq t$ iff  $ \Upsilon \cent \pi \act  s\faleq \pi \act t$, for any permutation $\pi$.
\item 
%VERSION WITH NEW
$\Upsilon \cent \new \overline{c}.s\faleq t$ iff  $ \Upsilon \cent \new \overline{c}.\pi \act  s\faleq \pi \act t$, for any permutation $\pi$.
\end{enumerate}
\end{lem}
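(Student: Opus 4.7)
The plan is to prove part~(1) first by induction on $t$ (equivalently on the derivation of $\pi \fixp t$), and then part~(2) by induction on the derivation of $s \faleq t$, invoking~(1) at the abstraction rule $\rulefont{\faleq {\tf ab}}$. In both parts the two directions of the biconditional can be handled simultaneously, since conjugation by $\rho$ and the action of $\pi$ are themselves invertible (by $\rho^{-1}$ and $\pi^{-1}$) and each inference rule is invertible by the inversion lemma.

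For part~(1), the atom case is the identity $\pi(a)=a \iff (\rho\pi\rho^{-1})(\rho(a)) = \rho(a)$. The function-symbol and tuple cases are immediate from the induction hypothesis. The suspension case $t = \pi'\cdot X$ reduces, using $\rho\cdot(\pi'\cdot X)=(\rho\pi')\cdot X$, to the algebraic identity $(\pi^\rho)^{(\rho\pi')^{-1}} = \pi^{(\pi')^{-1}}$, so the side condition of $\rulefont{\fixp var}$ is literally the same on both sides. The abstraction case $t=[a]u$ is the most subtle: one picks the witness $c_1$ for the newly quantified atom also outside $\dom{\rho}$, so that $\rho(c_1)=c_1$ and hence $\rho\circ\swap{a}{c_1} = \swap{\rho(a)}{c_1}\circ \rho$; then the IH applied to the premise $\pi \fixp \swap{a}{c_1}\cdot u$ yields $\pi^\rho \fixp \swap{\rho(a)}{c_1}\cdot(\rho\cdot u)$, which via $\rulefont{\fixp abs}$ gives $\pi^\rho \fixp [\rho(a)](\rho\cdot u) = \rho\cdot[a]u$.

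For part~(2), the cases $\rulefont{\faleq\!a}$, $\rulefont{\faleq\!\tf{f}}$, $\rulefont{\faleq\!tuple}$ and $\rulefont{\faleq\![a]}$ follow routinely from IH (the last again using that $\pi\circ\swap{a}{c_1}=\swap{\pi(a)}{c_1}\circ\pi$ for $c_1$ chosen fresh for $\pi$); and $\rulefont{\faleq\!var}$ between $\pi_1\cdot X$ and $\pi_2\cdot X$ uses the simplification $(\pi\pi_2)^{-1}(\pi\pi_1)=\pi_2^{-1}\pi_1$, which leaves the support side condition unchanged. The critical case is $\rulefont{\faleq\!{\tf ab}}$: from $s\faleq\swap{a}{b}\cdot t$ the IH gives $\pi\cdot s \faleq \swap{\pi(a)}{\pi(b)}\cdot(\pi\cdot t)$; from $\swap{a}{c_1}\fixp t$, part~(1) applied with permutation $\pi$ (and $c_1$ fresh also for $\pi$) gives $\swap{\pi(a)}{c_1}\fixp \pi\cdot t$; these are exactly the premises that $\rulefont{\faleq\!{\tf ab}}$ requires to conclude $\pi\cdot[a]s = [\pi(a)](\pi\cdot s) \faleq [\pi(b)](\pi\cdot t) = \pi\cdot[b]t$. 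The main obstacle is the bookkeeping of the $\new$-quantified atoms at the abstraction cases: each fresh witness must be chosen outside the finite domains of all additional permutations in play so that swappings commute appropriately with conjugation, and one must verify the permutation-group identities $(\pi^\rho)^{\rho'}=\pi^{\rho'\rho}$ and the calculation above cleanly so that the support side conditions transfer between the two sides of each equivalence.
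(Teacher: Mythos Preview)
Your proposal is correct and follows essentially the same route as the paper's proof: rule induction for both parts, with part~(1) invoked at the $\rulefont{\faleq\!{\tf ab}}$ case of part~(2), and the abstraction cases handled by choosing the $\new$-witness $c_1$ outside the domain of the ambient permutation so that $\rho(c_1)=c_1$ (respectively $\pi(c_1)=c_1$). One small slip: your parenthetical about $\rulefont{\faleq\![a]}$ needing $\pi\circ\swap{a}{c_1}=\swap{\pi(a)}{c_1}\circ\pi$ is misplaced, since that rule introduces no fresh atom and the case follows directly from the induction hypothesis; but this does not affect the argument.
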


\begin{proof} 
For both parts the ``if'' statement is trivial, by taking the identity permutation, $\Id$. We prove the ``only if'' statements.
\begin{enumerate}
\item 
The proof is by rule induction using the rules in Figure~\ref{fig.rules.fixp}. We consider cases depending on the last rule applied in the derivation. 

\begin{enumerate}
\item The rule applied is $\rulefont{\fixp a}$.

In this case, $t=a$. Notice that, 
$\Upsilon\cent\pi'\fixp a $ if, and only if, $\pi'(a)=a$. We want to show that $\Upsilon\cent \new \overline{c}.\pi^\rho \fixp \rho (a)$, for any permutation $\rho$.  Notice that  $\pi^\rho(\rho(a)) = (\rho \circ \pi \circ \rho^{-1}) (\rho(a))= (\rho \circ \pi) (a)= \rho(\pi(a))=\rho(a)$, and the result follows. 
\item The rule applied is $\rulefont{\fixp var}$.

In this case one has  $t= \pi_1\cdot X$ and $\Upsilon\cent \new \overline{c}.\pi\fixp \pi_1\cdot X$. 
We need to prove that 
$\Upsilon\cent \new\overline{c}.\pi^\rho\fixp  \rho \act (\pi_1 \cdot X)$, 
 which requires 
 $\supp{(\rho \circ \pi_1)^{-1}\circ \pi^\rho \circ (\rho \circ \pi_1)}\setminus\{\overline{c}\} \subseteq \supp{\perm{\upsvar{X}}}$.

Note that, since $\Upsilon \cent\new \overline{c}.\pi \fixp \pi_1\cdot X $, it follows that $\supp{\pi_1^{-1} \circ \pi  \circ \pi_1}\setminus\{\overline{c}\} \subseteq \supp{\perm{\Upsilon|_X}}$. 
%\blue{$\supp{\pi_1^{-1} \circ \pi  \circ \pi_1} \subseteq \supp{\perm{\Upsilon|_X}}$.}
By definition of $\pi^\rho$, it follows that 
\begin{equation*}
\begin{aligned}
\supp{(\rho \circ \pi_1)^{-1}\circ \pi^\rho \circ (\rho \circ \pi_1)}&= \supp{(\rho \circ \pi_1)^{-1}\circ (\rho \circ \pi\circ \rho^{-1}) \circ (\rho \circ \pi_1)}\\
&=\supp{(\pi_1^{-1}\circ \rho^{-1})}\circ (\rho \circ \pi\circ \rho^{-1}) \circ (\rho \circ \pi_1)\\
&=\supp{\pi_1^{-1}\circ \pi\circ \pi_1}\\
\end{aligned}
\end{equation*}
Therefore, the result follows.

\item The rule applied is  $\rulefont{\fixp abs}$.

In this case, $t=[a]t'$ and $\Upsilon \cent \new\overline{c}.\pi \fixp [a]t'$, therefore, there exists a proof $\Pi$ :

\begin{prooftree}
\[\Pi 
\justifies 
\Upsilon \cent  \new \overline{c}, c_1. \  \pi \fixp \swap{a}{c_1} \cdot t' \]
\justifies
\Upsilon \cent  \new  \overline{c}. \ \pi \fixp [a]t'
\using \rulefont{\fixp abs}
\end{prooftree}

%where $c_1$ and $c_2$ are new names, not occurring anywhere in the problem.
By induction hypothesis, for an arbitrary permutation $\rho$, there exists a derivation $\Pi'$ for : 
$$\Upsilon\cent \new \overline{c},c_1. \ \pi^\rho \fixp \rho \act (\swap{a}{c_1}\act t').$$

Since $\rho \act (\swap{a}{c_1}\act t')= \swap{\rho(a)}{\rho(c_1)}\act (\rho \act t')= \swap{\rho(a)}{c_1}\act (\rho \act t')$ because $c_1$ is a new name, we have

\begin{prooftree}
\[
\Pi'
\justifies
\Upsilon \cent \new \overline{c},c_1. \ \pi^\rho\fixp \swap{\rho(a)}{c_1}\act (\rho\act t')\] 
\justifies
\Upsilon \cent \new \overline{c}. \  \pi^\rho \fixp [\rho(a)]\rho \act t'
\end{prooftree}

and the result follows.

\item The rule applied is $\rulefont{\fixp \tf{f}}$ or  $\rulefont{\fixp tuple}$.

These cases follow directly by induction hypothesis.
\end{enumerate}

%%%%Part 2:
\item 
The proof is by rule induction using  the rules in Figure~\ref{fig.rules.equal}.
%, applied on $\Upsilon \cent s\faleq t$.
 
\begin{enumerate}
\item The rule applied is $\rulefont{\faleq a}$.

In this case, $s\faleq t$ is an instance of $a \faleq a$. It is straightforward to check that $\Upsilon\cent \new \overline{c}.\pi \cdot a \faleq \pi \cdot a$, via an application of the rule $\rulefont{\faleq a}$.

\item The rule applied is $\rulefont{\faleq var}$.

In this case, $\Upsilon\cent \new \overline{c}.\ \pi_1 \cdot X \faleq \pi_2 \cdot X$ and $\supp{\pi_2^{-1} \circ \pi_1}\setminus\{\overline{c}\} \subseteq \supp{\perm{\upsvar{X}}}$.

Notice that $\supp{(\pi \circ \pi_2)^{-1} \circ \pi \circ \pi_1}= \supp{\pi_2^{-1}\circ \pi_1}$. Therefore, by rule $\rulefont{\faleq var}$,
$\Upsilon\cent \new \overline{c}.\ \pi \cdot( \pi_1 \cdot X ) \faleq \pi \cdot (\pi_2 \cdot X)$.

\item The cases corresponding to rule $\rulefont{\faleq f}$ and rule $\rulefont{\faleq [a]}$ follow directly by induction hypothesis.

% \item The rule applied is $\rulefont{\faleq f}$.

% In this case, there is a derivation $\Pi_1$ such that

% \begin{prooftree}
% \[\Pi_1
% \justifies
% \Upsilon \cent t\faleq t'\]
% \justifies
% \Upsilon \cent ft \faleq ft'
% \using \rulefont{\faleq f}
% \end{prooftree}

% By induction hypothesis,  there exists $\Pi_1'$ such that 

% \begin{prooftree}
% \[\Pi_1'
% \justifies
% \Upsilon \cent \pi \cdot t\faleq  \pi \cdot t'\]
% \justifies
% \Upsilon \cent f\pi \cdot t \faleq f\pi \cdot t'
% \using \rulefont{\faleq f}
% \end{prooftree}

% Because $ f\pi \cdot t \faleq f\pi \cdot t'\faleq \pi \cdot f t \faleq \pi \cdot f t'$, the result follows.
% \item The rule applied is $\rulefont{\faleq [a]}$.

% In this case, there is a derivation $\Pi_1$ such that

% \begin{prooftree}
% \[\Pi_1
% \justifies
% \Upsilon \cent t\faleq t'\]
% \justifies
% \Upsilon \cent [a]t \faleq [a]t'
% \using \rulefont{\faleq [a]}
% \end{prooftree}

% By induction hypothesis,  $\Upsilon \cent \pi \cdot t\faleq  \pi \cdot t'$, for any permutation $\pi$. Therefore,  there exists $\Pi_1'$ such that 

% \begin{prooftree}
% \[\Pi_1'
% \justifies
% \Upsilon \cent \pi \cdot t\faleq  \pi \cdot t'\]
% \justifies
% \Upsilon \cent [\pi \cdot a]\pi \cdot t \faleq [\pi \cdot a]\pi \cdot t'
% \using \rulefont{\faleq [a]}
% \end{prooftree}

% and  the result follows.

\item The rule applied is $\rulefont{\faleq ab}$.

In this case, there exist derivations $\Pi_1$ and $\Pi_2$ such that

\begin{prooftree}
\[\Pi_1
\justifies
\Upsilon \cent \new \overline{c}.\ t\faleq \swap{a}{b}\cdot s\] \qquad \[\Pi_2
\justifies
\Upsilon \cent \new \overline{c},c_1. \ \swap{a}{c_1}\fixp s\]
\justifies
\Upsilon \cent \new \overline{c}.\ [a]t \faleq [b]s
\using \rulefont{\faleq ab}
\end{prooftree}

By induction hypothesis, and part (1) of this lemma:
\begin{enumerate}
\item there exists a proof $\Pi_1'$ of  $\Upsilon \cent \new \overline{c}.\ \pi \act t\faleq \pi \act (\swap{a}{b}\cdot s)$.
\item there exists a proof $\Pi_2'$ of $\Upsilon\cent \new \overline{c}, c_1. \ \swap{a}{c_1}^\pi\fixp \pi \act s$.
\end{enumerate}

It is easy to show that $\pi \act (\swap{a}{b} \act s)=\swap{\pi(a)}{\pi(b)}\act(\pi\act s)$, and since $c_1$ is new, $\pi(c_1) = c_1$, therefore, the result follows. 

\begin{prooftree}
\[\Pi_1'
\justifies
\Upsilon \cent \new \overline{c}.\ \pi\act t\faleq \swap{\pi(a)}{\pi(b)}\cdot \pi \act s\] \qquad \[\Pi_2'
\justifies
\Upsilon \cent \new \overline{c}, c_1. \ \swap{\pi(a)}{c_1}\fixp \pi \act s\]
\justifies
\Upsilon \cent \new \overline{c}.\ [\pi(a)]\pi \cdot t \faleq [\pi(b)]\pi \cdot s
\using \rulefont{\faleq ab}
\end{prooftree}

%where $c_1'$ and $c_2'$ are new names not occurring anywhere in the problem.
\end{enumerate}
\end{enumerate}
\end{proof}

%%%%%%%%%%%%%%%%%%%%%%%%%

\begin{exa}
Notice that $(a \ c) \fixp X \cent (a \ b ) \fixp (b \ c)\cdot X$, for 
$\swap{a}{c} \fixp X \cent \swap{a}{c}\fixp  X$ and $\swap{a}{b}^{\swap{b}{c}}  = \swap{a}{c}$. Using the  Equivariance Lemma, with $\rho = (b\ c)$, we deduce:
\begin{equation}
\begin{aligned}
\swap{a}{c} \fixp X &\cent \swap{a}{b}^ {\swap{b}{c}} \fixp X\Leftrightarrow (a \ c) \fixp X \cent (a \ b ) \fixp (b \ c)\cdot X \\
%&\Leftrightarrow \swap{a}{c}\fixp X \cent \swap{a}{c}\cdot X \faleq X
\end{aligned}
\end{equation}
\end{exa}

%%%%%%%%%%%%%%%%%%%%%%%%%

\begin{proposition}[Strengthening for $\fixp$]\label{prop:streng.fixp}\mbox{}\\
If $\Upsilon, \pi \fixp X \cent \new \overline{c}. \ \pi' \fixp s$ and $\supp{\pi }\subseteq \supp{\perm{\upsvar{X}}}$ or $X\notin \var{s}$ then $ \Upsilon \cent \new \overline{c}.\pi' \fixp s$.
\end{proposition}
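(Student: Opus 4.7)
The plan is to proceed by rule induction on the derivation of $\Upsilon, \pi \fixp X \cent \new \overline{c}.\ \pi' \fixp s$, splitting on the last rule applied. At each step I need to verify that the side hypothesis (either $\supp{\pi} \subseteq \supp{\perm{\upsvar{X}}}$ or $X \notin \var{s}$) is preserved when passing to the premises, so that the induction hypothesis applies and the rule can be reapplied over the weaker context $\Upsilon$.

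The cases $\rulefont{\fixp a}$, $\rulefont{\fixp \tf{f}}$, $\rulefont{\fixp tuple}$, and $\rulefont{\fixp abs}$ are essentially routine. The base case makes no reference to the context at all. The two structural rules reduce by applying the induction hypothesis to each structural premise. In the abstraction case, with $s = [a]t$ and premise on $\swap{a}{c_1}\act t$ for a new atom $c_1$, both forms of the hypothesis transfer to the premise: the condition $\supp{\pi} \subseteq \supp{\perm{\upsvar{X}}}$ depends only on $\Upsilon$, and if $X \notin \var{s}$ then $X \notin \var{\swap{a}{c_1}\act t}$, since swapping atoms does not introduce or remove variables.

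The only substantial case is $\rulefont{\fixp var}$, where $s = \pi_1 \cdot Y$ and the side condition is
\[
\supp{(\pi')^{\pi_1^{-1}}} \setminus \{\overline{c}\} \;\subseteq\; \supp{\perm{(\Upsilon, \pi \fixp X)|_Y}}.
\]
If $Y \neq X$, then $(\Upsilon, \pi \fixp X)|_Y = \upsvar{Y}$, so the condition is literally the same over $\Upsilon$ and I can reapply $\rulefont{\fixp var}$ directly. If $Y = X$, then $X \in \var{s}$, so the applicable hypothesis is $\supp{\pi} \subseteq \supp{\perm{\upsvar{X}}}$; consequently
\[
\supp{\perm{(\Upsilon, \pi \fixp X)|_X}} \;=\; \supp{\perm{\upsvar{X}}} \cup \supp{\pi} \;=\; \supp{\perm{\upsvar{X}}},
\]
so the side condition reduces to the same condition over $\Upsilon$. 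I do not anticipate any serious obstacle; the proof is essentially bookkeeping, and the key observation is that the hypothesis $\supp{\pi} \subseteq \supp{\perm{\upsvar{X}}}$ was designed precisely to guarantee that extending the context by $\pi \fixp X$ adds no new atoms to the support of the permutation set attached to $X$, so the side condition in $\rulefont{\fixp var}$ survives when the extension is removed.
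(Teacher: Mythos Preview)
Your proposal is correct and follows essentially the same approach as the paper: induction on the derivation (equivalently, on the structure of $s$), with the only nontrivial case being $\rulefont{\fixp var}$, handled by splitting on whether the moderated variable is $X$ or some other $Y$ and using the support hypothesis in the former case. The paper's proof is organised identically, with the $Y \neq X$ subcase dismissed as trivial and the $Y = X$ subcase argued via the same chain of inclusions you describe.
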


\begin{proof}
By induction on the structure of $s$.
\begin{itemize}
\item $s= a$

In this case, $\Upsilon, \pi \fixp X \cent \new \overline{c}. \ \pi' \fixp a$. 
By rule $\rulefont{\fixp a}$, 

\begin{equation*}
\begin{aligned}
\Upsilon, \pi \fixp X \cent \new \overline{c}. \  \pi' \fixp a & \Leftrightarrow \pi'(a)=a
 \Leftrightarrow  \Upsilon\cent \new \overline{c}. \ \pi' \fixp a
%&\Leftrightarrow  \pi'(a) = a\Rightarrow \Upsilon\cent \new \overline{c}. \ \pi' \fixp a
\end{aligned}
\end{equation*}
%and the result follows.

\item $s=\pi_1 \cdot X$ (the case $ s = \pi_1 \cdot Y$ is trivial).

In this case,  
$\Upsilon, \pi \fixp X \cent \new \overline{c}. \ \pi' \fixp \pi_1 \cdot X $.

\begin{equation*}
\begin{aligned}
\Upsilon, \pi \fixp X \cent \new \overline{c}. \ \pi' \fixp \pi_1\!\cdot\! X &\Leftrightarrow \supp{(\pi')^{\pi_1^{-1}}}\setminus\{\overline{c}\}\subseteq\supp{\perm{\upsvar{X}}\cup \{\pi\}}  \;\; (rule ~ \rulefont{\fixp var})\\
%\Upsilon, \pi \fixp X \cent \new \overline{c}. \ \pi' \fixp \pi_1 \cdot X &\Leftrightarrow \supp{(\pi')^{\pi_1^{-1}}} \subseteq\supp{\perm{\upsvar{X}}\cup \{\pi\}}\\
&\Leftrightarrow \supp{(\pi')^{\pi_1^{-1}}}\setminus\{\overline{c}\}  \subseteq\supp{\perm{\upsvar{X}}} \qquad (by ~ assumption)\\
&\Leftrightarrow \Upsilon \cent \new \overline{c}.\pi' \fixp \pi_1 \cdot X \hspace{4cm} (by ~ rule ~ \rulefont{\fixp var})
\end{aligned}
\end{equation*}
%From the hypotheses of the lemma one has $\supp{\pi} \subseteq \supp{\perm{\Upsilon|_X}}$ therefore, $\supp{(\pi')^{\pi_1^{-1}}}\setminus\{\overline{c}\}\subseteq\supp{\perm{\upsvar{X}}\cup \{\pi\}}$  and the result follows.

\item $s = [a]s'$

In this case, $\Upsilon, \pi \fixp X \cent \new \overline{c}. \ \pi' \fixp [a]s'$, $\supp{\pi}\subseteq \supp{\perm{\upsvar{X}}}$ or $X\notin \var{s'}$. Then there exists a derivation 

\vspace{2mm}

%\[
\begin{prooftree} 
\[\mathcal{D}
\justifies 
\Upsilon, \pi\fixp X \cent \new \overline{c}, c_1. \ \pi'\fixp \swap{a}{c_1} \act s'\]
\justifies
\Upsilon, \pi\fixp X \cent \new \overline{c}. \ \pi' \fixp [a]s'
\using \rulefont{\fixp abs}
\end{prooftree}
%\]

\vspace{2mm}

By induction hypothesis, there exists a derivation $\mathcal{D}'$ such that

\vspace{2mm}

%\[
\begin{prooftree} 
\[\mathcal{D'}
\justifies 
\Upsilon\cent \new \overline{c},c_1. \ \pi'\fixp \swap{a}{c_1}\act s'\]
\justifies
\Upsilon \cent \pi' \fixp [a]s'
\using \rulefont{\fixp abs}
\end{prooftree}
%\]

\vspace{2mm}

 \item $s = \tf{f}s'$ or $s=(s_1,\ldots s_n)$

These cases follow easily by induction hypothesis.
\qedhere

\end{itemize}
\end{proof}

\begin{proposition}[Strengthening for $\faleq$]\label{prop:streng.faleq}\mbox{}\\
If $\Upsilon, \pi \fixp X \cent \new \overline{c}.\ s\faleq t$ and 
$\supp{\pi}\subseteq  \supp{\perm{\upsvar{X}}}$ 
or $X\not \in \var{s,t}$, then 
$ \Upsilon \cent \new \overline{c}.\ s\faleq t$.
\end{proposition}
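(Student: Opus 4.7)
The plan is to prove this by induction on the derivation of $\Upsilon, \pi \fixp X \cent \new \overline{c}.\ s\faleq t$, proceeding by case analysis on the last rule applied. This parallels the proof of Strengthening for $\fixp$ (Proposition~\ref{prop:streng.fixp}), but now whenever the rule $\rulefont{\faleq ab}$ appears in the derivation, its second premise is a fixed-point judgement, so we must invoke Strengthening for $\fixp$ to discharge it; the equality premises are handled by the induction hypothesis.

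The easy cases are $\rulefont{\faleq\!a}$ (where $s=t=a$ and the rule re-applies verbatim), and the congruence-style cases $\rulefont{\faleq\!\tf{f}}$, $\rulefont{\faleq\!tuple}$ and $\rulefont{\faleq\![a]}$, each of which follows by applying the induction hypothesis to the premises and re-applying the same rule (the side conditions on variables and supports transfer without change, and the set $\var{s,t}$ only shrinks in the subterms). The interesting cases are $\rulefont{\faleq\!var}$ and $\rulefont{\faleq\!ab}$.

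For $\rulefont{\faleq\!var}$, we have $s = \pi_1\cdot Y$, $t = \pi_2\cdot Y$, with premise $\supp{\pi_2^{-1}\!\circ\!\pi_1}\setminus\{\overline{c}\}\subseteq \supp{\perm{(\Upsilon,\pi\fixp X)|_Y}}$. If $Y\neq X$, then $\perm{(\Upsilon,\pi\fixp X)|_Y}=\perm{\upsvar{Y}}$, so the side condition is unchanged and $\rulefont{\faleq\!var}$ re-applies. If $Y=X$, then either the hypothesis gives $X\notin\var{s,t}$, which is impossible, or $\supp{\pi}\subseteq\supp{\perm{\upsvar{X}}}$; in the latter case
\[
\supp{\perm{(\Upsilon,\pi\fixp X)|_X}} \;=\; \supp{\perm{\upsvar{X}}\cup\{\pi\}} \;\subseteq\; \supp{\perm{\upsvar{X}}}\cup\supp{\pi} \;\subseteq\; \supp{\perm{\upsvar{X}}},
\]
so the side condition of $\rulefont{\faleq\!var}$ still holds for $\Upsilon$ alone, and we conclude $\Upsilon\cent\new\overline{c}.\;\pi_1\cdot X \faleq \pi_2\cdot X$.

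For $\rulefont{\faleq\!ab}$, where $s=[a]s'$, $t=[b]t'$, the derivation has premises $\Upsilon,\pi\fixp X\cent\new\overline{c}.\;s'\faleq (a\ b)\act t'$ and $\Upsilon,\pi\fixp X\cent\new\overline{c},c_1.\;(a\ c_1)\fixp t'$. The first premise is handled by the induction hypothesis (noting that $\var{s',(a\ b)\act t'}\subseteq\var{s,t}$), yielding $\Upsilon\cent\new\overline{c}.\;s'\faleq(a\ b)\act t'$. The second premise is a fixed-point judgement, so we invoke Proposition~\ref{prop:streng.fixp} to obtain $\Upsilon\cent\new\overline{c},c_1.\;(a\ c_1)\fixp t'$; the side condition $\supp{\pi}\subseteq\supp{\perm{\upsvar{X}}}$ (or $X\notin\var{t'}$) is inherited from the hypothesis of the current proposition. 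Reapplying $\rulefont{\faleq\!ab}$ gives the conclusion. The main subtlety, and the step most likely to require care, is the $\rulefont{\faleq\!var}$ case with $Y=X$, where one must verify that adding $\pi$ to the context does not enlarge the relevant support in a way that matters, which is exactly where the hypothesis $\supp{\pi}\subseteq\supp{\perm{\upsvar{X}}}$ is used.
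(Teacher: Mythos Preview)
Your proposal is correct and follows essentially the same approach as the paper, which simply states that the proof is similar to that of Proposition~\ref{prop:streng.fixp} and omits the details. Your explicit treatment of the $\rulefont{\faleq\!var}$ case (paralleling the $\rulefont{\fixp var}$ case in Proposition~\ref{prop:streng.fixp}) and your use of Proposition~\ref{prop:streng.fixp} to handle the fixed-point premise in the $\rulefont{\faleq\!ab}$ case are exactly the expected steps.
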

\begin{proof}
The proof is similar to that of Proposition~\ref{prop:streng.fixp} and omitted.
\end{proof}

%%%%%%%%%%%%%%%%%%%%%%%%%%%%%
The following auxiliary lemma permits to deduce  a fixed-point constraint for a permutation $\pi$ and term $t$ from constraints involving $t$. Intuitively, it states that we can deduce that $\pi$ fixes $t$ if the support of $\pi$ is contained in permutations that fix $t$. 

\begin{lem}
\label{lem:fixp-smallerpi}
Let $I$ be a set of indices.
If $\Upsilon \cent \new \overline{c_i}. \pi_i \fixp t$ for all $i \in I$, and $\pi$ is a permutation such that 
$\supp{\pi}\setminus \{\overline{c}\} \subseteq \bigcup_{i\in I}\supp{\pi_i} \setminus \bigcup_{i\in I} \overline{c_i}$ then $\Upsilon \cent \new \overline{c}. \pi \fixp t$.
\end{lem}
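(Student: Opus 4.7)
The plan is to prove the lemma by induction on the structure of $t$, keeping in mind that the goal at each step is to verify the side condition required by the corresponding introduction rule for $\fixp$ in Figure~\ref{fig.rules.fixp}.

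First, for the base case $t=a$: since each $\pi_i$ fixes $a$ by rule $\rulefont{\fixp a}$, we have $a \notin \supp{\pi_i}$ for all $i\in I$. Because the atoms $\overline{c}$ are newly quantified (hence fresh for $t$, so in particular $a \notin \overline{c}$), the hypothesis forces $a \notin \supp{\pi}$, i.e.\ $\pi(a)=a$, and rule $\rulefont{\fixp a}$ applies. For $t = \pi' \cdot X$, I would use the identity $\supp{\sigma^\rho} = \rho(\supp{\sigma})$, valid for any two permutations, to rewrite the supports $\supp{\pi_i^{(\pi')^{-1}}}$ and $\supp{\pi^{(\pi')^{-1}}}$ as $(\pi')^{-1}(\supp{\pi_i})$ and $(\pi')^{-1}(\supp{\pi})$ respectively. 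Since the $c_i$ and $c$ are new (in particular, outside $\supp{\pi'}$), we have $(\pi')^{-1}(\overline{c})=\overline{c}$ and likewise for the $\overline{c_i}$; applying the bijection $(\pi')^{-1}$ to both sides of the set inclusion in the hypothesis and using that $(\pi')^{-1}(\supp{\pi_i}) \setminus \overline{c_i} \subseteq \supp{\perm{\upsvar{X}}}$ (which follows from $\Upsilon\cent\new\overline{c_i}.\pi_i\fixp \pi'\cdot X$ and rule $\rulefont{\fixp var}$), we obtain the side condition of $\rulefont{\fixp var}$ needed to conclude $\Upsilon\cent\new\overline{c}.\pi\fixp \pi'\cdot X$.

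For the function and tuple cases, the inductive hypothesis applies directly to each subterm, and the conclusion follows from rules $\rulefont{\fixp\tf{f}}$ and $\rulefont{\fixp tuple}$ respectively, since all premises $\Upsilon\cent\new\overline{c_i}.\pi_i\fixp t$ must have been derived from corresponding premises on the subterms by inversion. The abstraction case $t = [a]t'$ is the most delicate: by inverting rule $\rulefont{\fixp abs}$, each assumption $\Upsilon \cent \new \overline{c_i}.\pi_i \fixp [a]t'$ is witnessed by a derivation of $\Upsilon \cent \new \overline{c_i},c_i'. \pi_i \fixp \swap{a}{c_i'}\act t'$ for some fresh $c_i'$. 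To apply the inductive hypothesis coherently, I would first pick a single atom $c^\star$ fresh for all $\pi_i$, $\pi$, $t'$, $\overline{c_i}$ and $\overline{c}$, and use the Equivariance Lemma (Lemma~\ref{lem:equivariance}) with the transposition $\swap{c_i'}{c^\star}$ to replace $c_i'$ by $c^\star$ throughout each premise (this leaves $\pi_i$ and $t'$ unchanged since $c_i'$ and $c^\star$ are both fresh for them). This gives $\Upsilon \cent \new \overline{c_i}, c^\star. \pi_i \fixp \swap{a}{c^\star}\act t'$ uniformly in $i$.

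The hard part is then checking that the enlarged support condition is still available: we need $\supp{\pi}\setminus\{\overline{c},c^\star\}\subseteq \bigcup_i \supp{\pi_i}\setminus \bigcup_i\{\overline{c_i},c^\star\}$, which follows immediately from the original inclusion together with the fact that $c^\star$ lies in neither $\supp{\pi}$ nor any $\supp{\pi_i}$ by choice. The inductive hypothesis applied to the subterm $\swap{a}{c^\star}\act t'$ then yields $\Upsilon \cent \new \overline{c},c^\star. \pi \fixp \swap{a}{c^\star}\act t'$, and one application of $\rulefont{\fixp abs}$ closes the case. The principal obstacle to watch for is exactly this synchronisation of the various fresh atoms introduced in the separate derivations, which is resolved by the combined use of equivariance and the freshness of newly quantified atoms with respect to all data present in the judgement.
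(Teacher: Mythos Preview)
Your proposal is correct and follows the same overall strategy as the paper's proof: induction on the structure of $t$, checking the side condition of the relevant $\fixp$-rule in each case. The atom and variable cases match the paper's argument closely (the paper is slightly more laconic in the variable case, asserting directly that the conjugated-support inclusion follows from the original one, whereas you spell out the bijection argument via $\supp{\sigma^\rho}=\rho(\supp{\sigma})$).

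The one noteworthy difference is the abstraction case. The paper dispatches it, together with the function and tuple cases, with a single line: ``The other cases follow directly by induction.'' You, by contrast, identify and carefully resolve the issue that inverting rule $\rulefont{\fixp abs}$ for each $i$ a priori introduces distinct fresh atoms $c_i'$, so the resulting premises are about different terms $\swap{a}{c_i'}\act t'$ and the inductive hypothesis is not immediately applicable. Your solution --- picking a single globally fresh $c^\star$ and using equivariance (Lemma~\ref{lem:equivariance}) to align all the premises to the common term $\swap{a}{c^\star}\act t'$ --- is exactly the right way to make this step rigorous. The paper is presumably relying implicitly on the semantics of $\new$ (any fresh atom will do, so one may as well choose the same one everywhere), but your explicit synchronisation argument is a genuine improvement in precision over the published proof.
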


\begin{proof} By induction on $t$.
\begin{itemize}
    \item If $t = a$ then by assumption  $\pi_i(a) =a$ for all $i \in I$ since $\Upsilon \cent \new \overline{c_i}. \pi_i \fixp a$ (rule \rulefont{\fixp a}). Then,   $\pi(a) = a$
    by the assumption on the support of $\pi$. The result follows by rule \rulefont{\fixp a}. 
    \item If $t = \pi'\act X$ then $\supp{\pi_i^{{\pi'}^{-1}}}\setminus \{\overline{c_i}\}  \subseteq \supp{\perm{\Upsilon|_X}}$ 
    because $\Upsilon \cent \new \overline{c_i}. \pi_i \fixp t$ for all $i \in I$ (rule \rulefont{\fixp var}). 
    
   Since $\supp{\pi}\setminus \{\overline{c}\} \subseteq \bigcup_{i\in I}\supp{\pi_i} \setminus \bigcup_{i\in I} \overline{c_i}$,  also   $\supp{\pi^{{\pi'}^{-1}}}\setminus \{\overline{c}\} \subseteq \bigcup_{i\in I}\supp{\pi_i^{{\pi'}^{-1}}} \setminus \bigcup_{i\in I} \overline{c_i}$. The result follows by rule \rulefont{\fixp var}.
  
  \item The other cases follow directly by induction.
    \qedhere
\end{itemize}
\end{proof}

The following correctness property states that  $\fixp$ is indeed the fixed-point relation.

\begin{thm}[Correctness]
\label{th:fix.alpha}
Let $\Upsilon$,  $\pi$  and $t$ be a fixed-point context, a permutation and a nominal term, respectively.  $\Upsilon \cent\new \overline{c}.\ \pi \fixp t$ ~~~ iff ~~~ $\Upsilon \cent  \new \overline{c}.\ \pi \cdot t \faleq t$.

\end{thm}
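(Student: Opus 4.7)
The plan is to proceed by induction on the structure of $t$, exploiting the invertibility of the rules defining $\fixp$ and $\faleq$ (each syntactic class of $t$ matches a unique rule on each side). The base and congruence cases are routine; the abstraction case is the main obstacle.

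For $t=a$, both sides are equivalent to $\pi(a)=a$ by $\rulefont{\fixp a}$ and $\rulefont{\faleq a}$. For $t=\pi'\act X$ the direction $\Rightarrow$ and $\Leftarrow$ both reduce to the same support condition, since by definition of conjugacy $\pi^{{\pi'}^{-1}}={\pi'}^{-1}\circ\pi\circ\pi'$, so rule $\rulefont{\fixp var}$ applied to $\pi\fixp\pi'\act X$ and rule $\rulefont{\faleq var}$ applied to $(\pi\circ\pi')\act X\faleq\pi'\act X$ impose literally the same condition modulo $\{\overline{c}\}$. The cases $t=\tf{f}\,t'$ and $t=(t_1,\ldots,t_n)$ go through by a one-step induction using the matching congruence rules on each side.

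The interesting case is $t=[a]t'$. For $(\Rightarrow)$: assume $\Upsilon\cent\new\overline{c}.\,\pi\fixp[a]t'$. Inversion on $\rulefont{\fixp abs}$ gives $\Upsilon\cent\new\overline{c},c_1.\,\pi\fixp\swap{a}{c_1}\act t'$; the induction hypothesis then yields $\Upsilon\cent\new\overline{c},c_1.\,\pi\act(\swap{a}{c_1}\act t')\faleq\swap{a}{c_1}\act t'$, and applying equivariance of $\faleq$ (Lemma on equivariance, part~2) with $\rho=\swap{a}{c_1}$ gives $\pi^{\swap{a}{c_1}}\act t'\faleq t'$. Now case split. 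If $\pi(a)=a$, then since $c_1$ is new, direct calculation shows $\pi^{\swap{a}{c_1}}=\pi$, so $\pi\act t'\faleq t'$ and we conclude $[\pi(a)]\pi\act t'=[a]\pi\act t'\faleq[a]t'$ by $\rulefont{\faleq [a]}$. If $\pi(a)\neq a$, we must derive $[\pi(a)]\pi\act t'\faleq[a]t'$ via $\rulefont{\faleq ab}$; this requires both $\pi\act t'\faleq\swap{\pi(a)}{a}\act t'$ and $\swap{\pi(a)}{c_2}\fixp t'$ for a fresh $c_2$. Both follow from $\pi^{\swap{a}{c_1}}\fixp t'$ by inverting the induction hypothesis and then invoking Lemma~\ref{lem:fixp-smallerpi} to replace the fresh $c_1$ by the desired fresh $c_2$ and to bound the supports of the required permutations by that of $\pi^{\swap{a}{c_1}}$ (up to the newly quantified atoms).

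For $(\Leftarrow)$: assume $\Upsilon\cent\new\overline{c}.\,[\pi(a)]\pi\act t'\faleq[a]t'$. Inversion on $\faleq$ yields either (if $\pi(a)=a$) an equality $\pi\act t'\faleq t'$ via $\rulefont{\faleq [a]}$, or (if $\pi(a)\neq a$) the two premises of $\rulefont{\faleq ab}$. In both sub-cases the induction hypothesis converts the $\faleq$-premises back into fixed-point statements about $t'$; an application of equivariance of $\fixp$ (Lemma, part~1) with $\rho=\swap{a}{c_1}$ and Lemma~\ref{lem:fixp-smallerpi} then reassembles these into $\pi\fixp\swap{a}{c_1}\act t'$ for a fresh $c_1$, whence rule $\rulefont{\fixp abs}$ gives $\pi\fixp[a]t'$.

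The chief difficulty is bookkeeping in the abstraction case: one must manage the interaction between the newly quantified atom $c_1$ introduced by $\rulefont{\fixp abs}$, the atom $c_1$ appearing in the fixed-point premise of $\rulefont{\faleq ab}$, and the conjugate $\pi^{\swap{a}{c_1}}$ produced by equivariance. The twin tools that make this manageable are the Equivariance Lemma (letting us pass fixed-point constraints through $\swap{a}{c_1}$) and Lemma~\ref{lem:fixp-smallerpi} (letting us replace one fresh atom by another and shrink the permutation whose fixed-point we need to certify).
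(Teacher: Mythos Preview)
Your proposal is correct and follows essentially the same route as the paper: induction on $t$, with the abstraction case handled via the Equivariance Lemma and Lemma~\ref{lem:fixp-smallerpi}. One minor difference is that you explicitly case-split on whether $\pi(a)=a$, whereas the paper silently works only through rule $\rulefont{\faleq ab}$; your version is slightly more careful here. The phrase ``inverting the induction hypothesis'' in the $(\Rightarrow)$ abstraction case is loosely worded (you actually obtain $\pi^{\swap{a}{c_1}}\fixp t'$ directly from equivariance of $\fixp$ applied to the premise, not from the IH), but the substance is right.
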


\begin{proof}
In both directions the proof is by induction on the structure of the term $t$, distinguishing cases according to the  last rule applied in the derivation.

\noindent ($\Rightarrow$) 
The interesting cases correspond to variables and abstractions; the other cases follow directly by the induction hypothesis.
\begin{enumerate}
%%\item  The last rule is  $\rulefont{\fixp a}$.
%
%%In this case,  $t =a$ and $\pi(a)=a.$ Therefore, it is straightforward that $\pi(a)\faleq a$.
%
\item The last rule is $\rulefont{\fixp var}$.

In this case, $t=\pi'\cdot X$ and $\supp{(\pi')^{-1}\circ \pi \circ \pi'}\setminus \{\overline{c}\}\subseteq \supp{\perm{\Upsilon|_X}}$, and therefore  $\Upsilon\cent \new \overline{c}.\ \pi\act (\pi' \cdot X) \faleq \pi' \cdot X$, via rule $\rulefont{\faleq var}$.

\item The last rule is $\rulefont{\fixp abs}$.
In this case,  $t=[a]t'$ and by assumption there is a derivation of the form:

\begin{prooftree}
\[\Pi
\justifies
\Upsilon\cent \new \overline{c}, c_1. \ \pi \fixp\swap{a}{c_1}\act t'\]
\justifies 
\Upsilon\cent \new\overline{c}.\pi\fixp [a]t'
\end{prooftree}

We need to prove that  $\Upsilon \cent \new \overline{c}.\ [\pi(a)]\pi\act t' \faleq [a]t'$, that is, 
$\Upsilon \cent \new \overline{c}.\ \pi\act t' \faleq \swap{\pi(a)}{a} \act t'$ and also 
$\Upsilon \cent \new \overline{c}, c_1.\swap{\pi(a)}{c_1} \fixp t'$.

By IH, there exists a proof $\Pi'$ for $\Upsilon \cent \new \overline{c}, c_1.\pi \act(\swap{a}{c_1}.t') \faleq \swap{a}{c_1}\act t'$. 
%Let $\Upsilon'= \Upsilon, \swap{c_1}{c_2}\fixp \var{t'}$. 
The following equivalence holds, since $c_1$ is newly quantified:
\begin{equation}\label{eq:ind.hyp}
\begin{aligned}
\Upsilon \cent \new \overline{c}, c_1.\pi \cdot(\swap{a}{c_1}\act t') \faleq \swap{a}{c_1}\act t'&\iff \Upsilon\cent \new \overline{c}, c_1.\swap{\pi(a)}{c_1}\act (\pi \act t') \faleq \swap{a}{c_1} \act t'\\
\end{aligned}
\end{equation}
Also, $\Upsilon\cent \new \overline{c}, c_1.(\pi \act t') \faleq \swap{\pi(a)}{c_1} \act (\swap{a}{c_1}\act t')$ by Equivariance.  Note that  $ \swap{\pi(a)}{c_1} \act (\swap{a}{c_1}\act t')= \swap{\pi(a)}{a} \act t'$, hence $\Upsilon \cent \new \overline{c}.\pi\act t' \faleq \swap{\pi(a)}{a} \act t'$ as required.

Since $\Upsilon\cent \new \overline{c}, c_1. \ \pi \fixp\swap{a}{c_1}\act t'$, we also have 
$\Upsilon\cent \new \overline{c}, c_1. \ \pi^{(a\ c_1)} \fixp t'$ by equivariance 
and we deduce
$\Upsilon \cent \new \overline{c}, c_1.\swap{\pi(a)}{c_1} \fixp t'$ as required.

\end{enumerate}

\noindent ($\Leftarrow$) Similarly,  we proceed by induction on the derivation of $\Upsilon \cent \new \overline{c}.\ \pi\act t\faleq t$. We show the interesting cases, the others follow directly by induction.

\begin{enumerate}
\item The rule is $\rulefont{\faleq var}$.

In this case $t=\pi_1\cdot X$ and there exists a proof of $\Upsilon\cent\new \overline{c}.\ \pi \cdot(\pi_1\cdot X)\faleq \pi_1\cdot X$. Therefore, $\supp{\pi_1^{-1} \circ \pi \circ \pi_1}\setminus\{\overline{c}\} \subseteq \supp{\perm{\Upsilon|_X}}$, and one can conclude that $\Upsilon \cent \new \overline{c}.\ \pi\fixp \pi_1\cdot X$, via application of rule $\rulefont{\fixp var}$.

% \item The rule is $\rulefont{\faleq f}$.

% In this case $t= f t'$ and there exists a proof $\Pi$ such that 

% \begin{prooftree}
% \[\Pi
% \justifies
% \Upsilon \cent \pi \cdot t'\faleq t'\]
% \justifies
% \Upsilon \cent \pi \cdot(ft')\faleq ft'
% \using \rulefont{\faleq f}
% \end{prooftree}

% By induction hypothesis, there exists a proof $\Pi'$:

% \begin{prooftree}
% \[\Pi'
% \justifies
% \Upsilon \cent \new \overline{c}.\pi\fixp t'\]
% \justifies
% \Upsilon \cent \new \overline{c}.\pi\fixp ft'
% \using \rulefont{\fixp f}
% \end{prooftree}

% and the result follows.
% \item The rule is $\rulefont{\faleq tuple}$ or $\rulefont{\faleq [a]}$.

% These cases follow by induction hypothesis similarly to the previous case.

\item The rule is $\rulefont{\faleq ab}$.

In this case, $t = [a]t'$ and 
since $\Upsilon \cent \new \overline{c}.\ [\pi(a)]\pi\act t' \faleq [a]t'$, we know (Inversion):

$\Upsilon \cent \new \overline{c}.\ \pi\act t' \faleq \swap{\pi(a)}{a}\act t'$ and $\Upsilon \cent \new \overline{c}, c_1. \swap{\pi(a)}{c_1}\fixp t'$. 

By Equivariance $\Upsilon \cent \new \overline{c}.(\swap{\pi(a)}{a}\circ\pi)\act t' \faleq  t'$ and by induction
$\Upsilon \cent \new \overline{c}. (\swap{\pi(a)}{a}\circ\pi)\fixp t'$.

From $\Upsilon \cent \new \overline{c}, c_1. \swap{\pi(a)}{c_1}\fixp t'$ and
$\Upsilon \cent \new \overline{c}. (\swap{\pi(a)}{a}\circ\pi)\fixp t'$ we deduce 
$\Upsilon \cent \new \overline{c},c_1. \pi^{(a \ c_1)} \fixp t'$ by Lemma~\ref{lem:fixp-smallerpi},
since $\supp{\pi^{(a \ c_1)}} \setminus \{\overline{c},c_1\} \subseteq (\supp{\swap{\pi(a)}{a}\circ\pi}\setminus \{\overline{c}\}) \cup (\supp{(\pi(a)\ c_1)} \setminus \{c_1\})$. By Equivariance, $\Upsilon \cent \new \overline{c},c_1. \pi \fixp (a \ c_1)\act t'$, and therefore $\Upsilon \cent \new \overline{c}.\ \pi \fixp [a]t'$ by  $\rulefont{\fixp ab}$, as required.
%
%Since $\supp{\pi^{(a \ c_1)}} \setminus \{c_1\} \subseteq \supp{\swap{\pi(a)}{a}\circ\pi} \cup \supp{(\pi(a)\ c_1)} \setminus \{c_1\}$, we deduce  $\Upsilon \cent \new c_1.\pi^{(a \ c_1)}\fixp t'$ by Lemma~\ref{lem:fixp-smallerpi}
%and the result follows by Equivariance.
\qedhere
\end{enumerate}
\end{proof}

Recall that $\Upsilon\sigma$ denotes the set of fixed-point constraints obtained by applying the substitution $\sigma$ to the constraints in $\Upsilon$ (see Section~\ref{sec:fixpte}).
Below we abbreviate $\Upsilon \cent \pi_1 \fixp t_1$, \ldots, $\Upsilon \cent \pi_n \fixp t_n$ 
as  $\Upsilon \cent \pi_1 \fixp t_1, \ldots,  \pi_n \fixp t_n$. Thus, $\Upsilon \cent \Upsilon' \sigma$ means that each of the constraints in $\Upsilon' \sigma$ is derivable from $\Upsilon$.

\begin{proposition}[Preservation under Substitution]\label{prop:weak}
 Suppose that $\Upsilon\cent \Upsilon' \sigma$. Then, 
\begin{enumerate}
\item $\Upsilon'\cent \new \overline{c}. \ \pi \fixp s \Longrightarrow \Upsilon \cent \new \overline{c}. \ \pi \fixp s\sigma$.
\item  $\Upsilon'\cent  \new \overline{c}.\ s\faleq t \Longrightarrow \Upsilon \cent \new \overline{c}.\ s\sigma \faleq t\sigma$.
\end{enumerate}
\end{proposition}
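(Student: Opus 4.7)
The plan is to prove (1) and (2) by mutual rule induction on the two given derivations, treating the side hypothesis $\Upsilon\cent \Upsilon'\sigma$ as fixed. The key syntactic fact I would exploit throughout is Lemma~\ref{lem:subst-perm-commute}: for any permutation $\rho$, $(\rho\act t)\sigma = \rho\act (t\sigma)$, so $\sigma$ slides through any outer permutation introduced by the abstraction rules $\rulefont{\fixp abs}$ and $\rulefont{\faleq\!{\tf ab}}$.

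For the structural cases $\rulefont{\fixp a}$, $\rulefont{\fixp \tf{f}}$, $\rulefont{\fixp tuple}$, $\rulefont{\faleq\!a}$, $\rulefont{\faleq\!\tf{f}}$, $\rulefont{\faleq\!tuple}$ and $\rulefont{\faleq\![a]}$, the substitution leaves the head symbol unchanged and the conclusion follows by direct application of the IH to the premises. For $\rulefont{\fixp abs}$ (and dually $\rulefont{\faleq\!{\tf ab}}$), the premise has the form $\Upsilon'\cent \new\overline{c}, c_1.\ \pi\fixp \swap{a}{c_1}\act t$; the IH yields $\Upsilon\cent \new\overline{c}, c_1.\ \pi\fixp (\swap{a}{c_1}\act t)\sigma$, which by Lemma~\ref{lem:subst-perm-commute} equals $\Upsilon\cent \new\overline{c}, c_1.\ \pi\fixp \swap{a}{c_1}\act (t\sigma)$, and the original rule reconstructs the conclusion. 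Note that the second premise of $\rulefont{\faleq\!{\tf ab}}$ is itself a fixed-point judgement, so the $\faleq$ case invokes part (1) on it; this is exactly where the induction must be mutual (equivalently, part (1) should be proved first).

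The main obstacle is the variable case $\rulefont{\fixp var}$, where $s=\pi'\act X$, $s\sigma=\pi'\act (X\sigma)$, and the premise only gives $\supp{\pi^{{\pi'}^{-1}}}\setminus \{\overline{c}\} \subseteq \supp{\perm{\Upsilon'|_X}}$. For each $\rho\fixp X\in\Upsilon'$, the hypothesis $\Upsilon\cent \Upsilon'\sigma$ supplies $\Upsilon\cent \rho\fixp X\sigma$, and equivariance (Lemma~\ref{lem:equivariance}(1)) lifts it to $\Upsilon\cent \rho^{\pi'}\fixp \pi'\act(X\sigma)$. Using the pointwise identities $\supp{\pi}=\pi'(\supp{\pi^{{\pi'}^{-1}}})$ and $\supp{\rho^{\pi'}}=\pi'(\supp{\rho})$, together with the well-formedness of the $\new$ quantifier (so $\pi'$ fixes the newly quantified atoms $\overline{c}$ pointwise), the premise inclusion translates into $\supp{\pi}\setminus\{\overline{c}\}\subseteq\bigcup_{\rho}\supp{\rho^{\pi'}}$. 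Lemma~\ref{lem:fixp-smallerpi} then assembles the family $\{\Upsilon\cent \rho^{\pi'}\fixp \pi'\act(X\sigma)\}_{\rho\fixp X\in\Upsilon'}$ into the required $\Upsilon\cent \new\overline{c}.\ \pi\fixp \pi'\act(X\sigma)$.

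The analogous $\rulefont{\faleq\!var}$ case reduces cleanly to part (1): by Theorem~\ref{th:fix.alpha} together with equivariance, the premise $\Upsilon'\cent \new\overline{c}.\ \pi_1\act X\faleq \pi_2\act X$ is equivalent to $\Upsilon'\cent \new\overline{c}.\ (\pi_2^{-1}\circ\pi_1)\fixp X$, to which part (1) applies to give $\Upsilon\cent \new\overline{c}.\ (\pi_2^{-1}\circ\pi_1)\fixp X\sigma$, and equivariance transports this back to $\Upsilon\cent \new\overline{c}.\ \pi_1\act X\sigma\faleq \pi_2\act X\sigma$. The support-chasing in the variable case — in particular verifying that the new atoms $\overline{c}$ are pointwise fixed by $\pi'$ — is the only place where genuine care is required; everything else is routine bookkeeping via the IH.
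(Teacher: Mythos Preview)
Your proposal is correct and follows essentially the same approach as the paper: rule induction on the derivation, with Lemma~\ref{lem:fixp-smallerpi} handling the variable case, Lemma~\ref{lem:subst-perm-commute} handling abstraction, and part~(1) feeding into the $\rulefont{\faleq\!{\tf ab}}$ and $\rulefont{\faleq\!var}$ cases of part~(2). The paper's treatment of $\rulefont{\fixp var}$ is marginally simpler in that it applies Lemma~\ref{lem:fixp-smallerpi} directly to the family $\{\Upsilon\cent\rho\fixp X\sigma\}_{\rho\fixp X\in\Upsilon'}$ to obtain $\Upsilon\cent\new\overline{c}.\ \pi^{{\pi'}^{-1}}\fixp X\sigma$ and only then invokes equivariance once, so the support-chasing you flagged (checking that $\pi'$ fixes each $c\in\overline{c}$) never arises.
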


\begin{proof}
By induction on the rules in Figures~\ref{fig.rules.fixp} and ~\ref{fig.rules.equal}. 

\begin{enumerate}
\item We distinguish cases depending on  the last rule applied in the derivation of $\Upsilon' \cent \new \overline{c}. \pi \fixp s.$
\begin{enumerate}
\item The rule is \rulefont{\fixp a}.

In this case, $s=a$ and
 
\begin{prooftree}
\pi(a)= a
\justifies
\Upsilon'\cent \new \overline{c}.\pi \fixp a
\using \rulefont{\fixp a}
\end{prooftree}

\

The result follows trivially, since $a\sigma=a$ for any substitution $\sigma$.
\item The rule is \rulefont{\fixp f}

In this case, $s=\tf{f} s'$ and there exists a proof $ \Pi'$ of

\begin{prooftree}
\[ \Pi'
\justifies
\Upsilon' \cent \new \overline{c}. \pi \fixp s'\]
\justifies
\Upsilon' \cent \new \overline{c}. \pi \fixp \tf{f} s'
\using 
\rulefont{\fixp f}
\end{prooftree}

\

By induction hypothesis, there exists a proof $\Pi^{''}$ such that 

\begin{prooftree}
\[ \Pi^{''}
\justifies
\Upsilon \cent \new \overline{c}. \pi \fixp s'\sigma\]
\justifies
\Upsilon \cent \new \overline{c}. \pi \fixp (\tf{f}s')\sigma
\using 
\rulefont{\fixp f}
\end{prooftree}

%and the result follows.

\item The rule is \rulefont{\fixp abs}

In this case $s=[a]s'$ and there exists a proof $\Pi'$ of the form 

\begin{prooftree}
\[ \Pi'
\justifies
\Upsilon' \cent \new \overline{c},c_1. \pi \fixp \swap{a}{c_1}\cdot s'\]
\justifies
\Upsilon' \cent \new \overline{c}. \pi \fixp [a]s'
\using 
\rulefont{\fixp abs}
\end{prooftree}

By induction hypothesis, there exists a proof $\Pi^{''}$ of the form

\begin{prooftree}
 \Pi
\justifies
\Upsilon \cent \new \overline{c},c_1. \pi \fixp (\swap{a}{c_1}\cdot s')\sigma
\end{prooftree}

Since $(\swap{a}{c_1}\cdot s')\sigma = \swap{a}{c_1}\cdot (s'\sigma)$ and $[a](s'\sigma)=([a]s')\sigma$, it follows that 

\begin{prooftree}
 \[\Pi
\justifies
\Upsilon \cent \new \overline{c},c_1. \pi \fixp \swap{a}{c_1}\cdot (s'\sigma)\]
\justifies
\Upsilon \cent \new \overline{c}. \pi \fixp [a](s'\sigma)
\using 
\rulefont{\fixp abs}
\end{prooftree}

\item The rule is \rulefont{\fixp tuple}

This case is analogous to the case  for \rulefont{\fixp f}, and follows directly by IH.

\item The rule is \rulefont{\fixp var}

In this case, $s=\rho\cdot X$ and $\Upsilon' \cent \new \overline{c}.\pi \fixp \rho\cdot X$ holds, that is, $\supp{\pi^{\rho^{-1}}}\setminus \{\overline{c}\}\subseteq \supp{\perm{\Upsilon'|_X}}$.

From $\Upsilon\cent \Upsilon'\sigma$, it follows that, $\Upsilon \cent \pi_1\fixp X\sigma$, for all $\pi_1\fixp X \in \Upsilon'$. 

Therefore, $\Upsilon \cent \new \overline{c}.\ \pi^{\rho^{-1}} \fixp X\sigma $ by Lemma~\ref{lem:fixp-smallerpi},
and the result follows by Equivariance.

\end{enumerate}

\item We proceed by analysing the last rule used in the derivation of $\Upsilon' \cent \new \overline{c}.\ s\faleq t.$
\begin{enumerate}
    \item The last rule is \rulefont{\faleq a}.
    
    This case is trivial.
    
    \item The last rule is \rulefont{\faleq var}.
    
    In this case 
    we have $\Upsilon' \cent \new \overline{c}.\ \pi \cdot X \faleq \pi' \cdot X$ and therefore
    $\supp{(\pi')^{-1}\circ \pi}\setminus \{\overline{c}\} \subseteq \supp{\perm{\Upsilon'|_X}}$.

    From $\Upsilon\cent \Upsilon'\sigma$, it follows that $\Upsilon \cent \pi_1\fixp X\sigma$, for all $\pi_1\fixp X \in \Upsilon'$. 
    
    Therefore, $\Upsilon \cent \new \overline{c}. (\pi')^{-1}\circ \pi \fixp X\sigma $ by Lemma~\ref{lem:fixp-smallerpi}, and the result follows by Theorem~\ref{th:fix.alpha} and Equivariance.

    \item The last rule is \rulefont{\faleq f} or \rulefont{\faleq [a]}.
    These cases follow directly by induction.
    
    \item The last rule is \rulefont{\faleq ab}.

    In this case, we know $\Upsilon \cent \new \overline{c}.\ [a]s \faleq [b]t$ and therefore (by Inversion) $\Upsilon \cent \new \overline{c}.\ s \faleq \swap{a}{b}\act t$ and $\Upsilon \cent \new \overline{c},  c_1. \swap{a}{c_1}\fixp t$.
    
    By induction hypothesis, $\Upsilon \cent \new \overline{c}.\ s\sigma \faleq \swap{a}{b}\act t\sigma$, and by part (1) of this proposition, $\Upsilon \cent \new \overline{c},  c_1.\swap{a}{c_1}\fixp t\sigma$. The result then follows using rule \rulefont{\faleq ab}.
\qedhere    
\end{enumerate}
\end{enumerate}
\end{proof}

\begin{corollary}[Weakening] Assume $\Upsilon' \subseteq \Upsilon$.
\begin{enumerate}
\item $\Upsilon'\cent \new \overline{c}. \ \pi \fixp s \Longrightarrow \Upsilon \cent \new \overline{c}. \ \pi \fixp s$.
\item  $\Upsilon'\cent \new \overline{c}.\ s\faleq t \Longrightarrow \Upsilon \cent \new \overline{c}.\ s \faleq t$.
\end{enumerate}
\end{corollary}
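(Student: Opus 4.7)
The plan is to derive this as an immediate consequence of the preceding Proposition on Preservation under Substitution, by instantiating the substitution there to the identity $id$. Recall that Preservation under Substitution states: if $\Upsilon \cent \Upsilon' \sigma$ then $\Upsilon' \cent \new \overline{c}.\ \pi \fixp s$ implies $\Upsilon \cent \new \overline{c}.\ \pi \fixp s\sigma$, and similarly for $\faleq$. Since $s\, id = s$ and $t\, id = t$, taking $\sigma = id$ reduces the two clauses of Preservation to precisely the two clauses of Weakening, provided we can establish the hypothesis $\Upsilon \cent \Upsilon'\, id$, i.e., $\Upsilon \cent \Upsilon'$.

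So the only thing to check is: for every primitive constraint $\pi \fixp X \in \Upsilon'$, we have $\Upsilon \cent \pi \fixp X$. This is where the hypothesis $\Upsilon' \subseteq \Upsilon$ is used: if $\pi \fixp X \in \Upsilon'$ then $\pi \fixp X \in \Upsilon$, hence $\pi \in \perm{\upsvar{X}}$ (where the restriction is taken with respect to $\Upsilon$), so $\supp{\pi} \subseteq \supp{\perm{\upsvar{X}}}$. Since $\pi^{Id^{-1}} = \pi$, the side-condition of rule $\rulefont{\fixp var}$ is satisfied with empty new-atom list, and $\Upsilon \cent \pi \fixp X$ follows by a single application of $\rulefont{\fixp var}$.

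With $\Upsilon \cent \Upsilon' id$ established, the two implications follow at once from parts (1) and (2) of Proposition on Preservation under Substitution applied with $\sigma = id$. There is no real obstacle here; the only subtle point is recognising that $\rulefont{\fixp var}$ is precisely strong enough to reflect membership in $\Upsilon$ into derivability from $\Upsilon$, which is exactly what is needed to lift set inclusion of contexts to entailment of contexts.
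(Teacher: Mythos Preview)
Your proof is correct and follows exactly the approach intended by the paper: the result is stated as a corollary of Preservation under Substitution, and instantiating $\sigma = id$ together with verifying $\Upsilon \cent \Upsilon'$ via rule $\rulefont{\fixp var}$ is precisely the intended derivation.
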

%%%%%%%%%%%%%%%%%%%%%%%%%%%%%%%%%%%%%%%%%%%%%%%%%%%%%%%

\subsection{Alternative approaches to new name generation}
\label{sec:alt.to.new}
In the previous section we used the ``new'' quantifier to deal with  new names in constraints: judgements involve newly quantified constraints, and  
in rules \rulefont{\fixp ab} and \rulefont{\faleq ab}, when new fresh names are needed, a newly quantified atom is used. 

An alternative approach consists of quantifying judgements instead of constraints.
More precisely, we can define judgements of the form $\new \overline{c}.(\Upsilon \cent \pi \fixp t)$ and $\new \overline{c}.(\Upsilon \cent s \faleq t)$.  Rules to derive this kind of judgements can be easily obtained by adapting the rules in Figures~\ref{fig.rules.fixp} and \ref{fig.rules.equal}. We show the rules for fixed-point judgements in Figure~\ref{fig.rules.fixp.alt}. For the sake of readabilty, we omit brackets around judgements in the rules and  assume the quantifiers have maximal scope.

\begin{figure*}[ht]
\hrule
\small{
$$
\begin{array}{l@{\hspace{1cm}}c} 
\begin{prooftree}
\pi(a) = a
\justifies 
\new \overline{c}. \Upsilon\cent \pi\fixp a
\using \rulefont{\fixp a}
\end{prooftree}
&
\begin{prooftree}
\supp{\pi^{{\pi'}^{-1}}}\setminus \{\overline{c}\}\subseteq \supp{\perm{\upsvar{X}}}
\justifies 
\new \overline{c}.\Upsilon\cent \pi\fixp\pi'\cdot X
\using \rulefont{\fixp var}
\end{prooftree}
\\[4ex]
\begin{prooftree}
\new \overline{c}.\Upsilon\cent \pi\fixp t
\justifies 
\new \overline{c}.\Upsilon\cent \pi\fixp \tf{f}\ t
\using \rulefont{\fixp \tf{f}} 
\end{prooftree}
&
\begin{prooftree}
\new \overline{c}.\Upsilon\cent \pi\fixp t_1  \quad \ldots \quad \new \overline{c}.\Upsilon\cent \pi\fixp t_n
\justifies
\new \overline{c}.\Upsilon\cent \pi\fixp  (t_1,\ldots, t_n)
\using \rulefont{\fixp tuple}
\end{prooftree}\\[4ex]
\begin{prooftree}
\new \overline{c},c_1.\Upsilon\cent \pi \fixp   \swap{a}{c_1}\act t 
 \justifies 
\new \overline{c}.\Upsilon\cent \pi \fixp  [a]t
\using \rulefont{\fixp abs}  %~~~~ (c_1 ~~ new) 
\end{prooftree}
&
\end{array}
$$
}
\hrule
\caption{Alternative fixed-point  rules.  }
\label{fig.rules.fixp.alt}
\end{figure*}

Yet another  approach consists of using a name generator whenever new names are required (see~\cite{Cheney2008,Cheney2017}), as done in~\cite{DBLP:conf/rta/Ayala-RinconFN18}.
In this case, the new quantifier is not needed, and new names are generated dynamically by using an external generator, under the assumption that the generator outputs a new (unused name) whenever needed. For comparison, we recall in Figure~\ref{fig.rules.fixp.gen} the rules given  in~\cite{DBLP:conf/rta/Ayala-RinconFN18}
to derive fixed-point judgements $\Upsilon \cent \pi \fixp t$. Note that in rule \rulefont{\fixp abs} the fixed-point context $\Upsilon$ is augmented with constraints  $(c_1\ c_2)\fixp \var{t}$.  These constraints serve to store the information about the fact that the generated atoms are  "new". The trick is to generate two atoms,  even though only one new atom is needed ($c_1$, to replace $a$), in order to be able to express the fact that $c_1$ is new. This solution is inspired by  the link between freshness, the new quantifier and fixed-point equations,  $a \# X\Leftrightarrow \new a'. \swap{a}{a'}\cdot X=X$, mentioned at the beginning of Section~\ref{sec:constraints}.

\begin{figure*}[ht]
\hrule
\small{
$$
\begin{array}{l@{\hspace{1cm}}c} 
\begin{prooftree}
\pi(a) = a
\justifies 
\Upsilon\cent \pi\fixp a
\using \rulefont{\fixp a}
\end{prooftree}
&
\begin{prooftree}
\supp{\pi^{{\pi'}^{-1}}}\subseteq \supp{\perm{\upsvar{X}}}
\justifies 
\Upsilon\cent \pi\fixp\pi'\cdot X
\using \rulefont{\fixp var}
\end{prooftree}
\\[4ex]
\begin{prooftree}
\Upsilon\cent \pi\fixp t
\justifies 
\Upsilon\cent \pi\fixp \tf{f}\ t
\using \rulefont{\fixp \tf{f}} 
\end{prooftree}
&
\begin{prooftree}
\Upsilon\cent \pi\fixp t_1  \quad \ldots \quad \Upsilon\cent \pi\fixp t_n
\justifies
\Upsilon\cent \pi\fixp  (t_1,\ldots, t_n)
\using \rulefont{\fixp tuple}
\end{prooftree}
\end{array}
$$
$$
\begin{prooftree}
\Upsilon, (c_1\ c_2)\fixp \var{t} \cent \pi \fixp   \swap{a}{c_1}\act t 
 \justifies 
\Upsilon\cent \pi \fixp  [a]t
\using \rulefont{\fixp abs}  ~~where ~c_1 ~and ~c_2 ~ are ~ new ~ names
\end{prooftree}
%\end{array}
$$
}
\hrule
\caption{Fixed-point rules using a name generator.  }
\label{fig.rules.fixp.gen}
\end{figure*}

The rules to derive $\alpha$-equivalence judgements following this approach are recalled in Figure~\ref{fig.rules.equal.gen}.

\begin{figure*}[ht]
\hrule
\small{
$$
\begin{array}{cc}
\begin{prooftree}
\justifies \Upsilon \cent a \faleq a
\using \rulefont{\faleq\!a}
\end{prooftree}
&
\begin{prooftree}
\supp{ (\pi')^{-1}\circ \pi} \subseteq \supp{\perm{\upsvar{X}}} 
\justifies 
\Upsilon\cent  \pi \cdot X \faleq \pi'\cdot X
\using \rulefont{\faleq\!var} 
\end{prooftree}
\\[4ex]
\begin{prooftree}
\Upsilon\cent t\faleq t'
\justifies 
\Upsilon\cent  \tf{f}\ t \faleq \tf{f} \ t'
\using \rulefont{\faleq\!\tf{f}} 
\end{prooftree}
&
\begin{prooftree}
\Upsilon\cent t_1\faleq t_1' \quad \ldots \quad\Upsilon \cent t_n\faleq t_n'
\justifies 
\Upsilon\cent  (t_1,\ldots, t_n) \faleq (t_1',\ldots, t_n')
\using \rulefont{\faleq\!tuple} 
\end{prooftree}
\\[4ex]
\begin{prooftree}
\Upsilon\cent  t\faleq t' 
\justifies 
\Upsilon\cent  [a]t \faleq [a]t'
\using \rulefont{\faleq\![a]} 
\end{prooftree}
&
\begin{prooftree}
\Upsilon\cent s\faleq (a \ b )\act t \quad \Upsilon, (c_1 \ c_2) \fixp \var{t} \cent  (a\ c_1) \fixp t
\justifies
\Upsilon\cent [a] s \faleq [b] t
\using \rulefont{\faleq\!{\tf ab}} 
\end{prooftree}
\end{array}
$$
}
\hrule
\caption{Equality rules using a name generator. In rule $\rulefont{\faleq\!{\tf ab}}$, $c_1$ and $c_2$ are new names.}
\label{fig.rules.equal.gen}
\end{figure*}

The latter approach has the advantage of having a simpler syntax for judgements (without quantification) but relies on an external name generator, which  can be seen as  adding a form of state. In other words, in the latter approach  rules produce side-effects. Although not as elegant as the approaches using $\new$, this approach is convenient from an implementation point of view. In the next sections we use this technique to translate primitive freshness constraints to primitive fixed-point constraints, and to specify a unification algorithm.

%%%%%%%%%%%%%%%%%%%%%%%%%%%%%%%%%%%%%%%%%%%%%%%%%%%%%%%%%%%%%%%%%%%%%

\subsection{From freshness to fixed-point constraints and back again}
\label{sec:fresh.to.fixp}

In this section we show that the $\alpha$-equivalence relation defined in terms of {\em freshness constraints}, denoted as $\aleq$, is equivalent to $\faleq$, given that a transformation $[\_]^\fixp $ from primitive freshness to primitive fixed-point constraints and a transformation $[\_]^\#$ from primitive fixed-point to primitive freshness constraints  can be defined. 

Below we denote by $\mathfrak{F}_\#$ the family of freshness contexts, and by $\mathfrak{F}_{\fixp}$ the family of fixed-point contexts. 
The mapping $[\_]_\fixp$ associates each  primitive freshness constraint  with a fixed-point constraint; it extends to freshness contexts in the natural way.
\[
\begin{array}{cccl}
[\_ ]_\fixp: & \mathfrak{F}_{\#} &\longrightarrow &\mathfrak{F}_{\fixp}\\
&a \# X& \mapsto & \swap{a}{c_a}\fixp X  \mbox{ where } c_a \mbox{ is a new name}.
\end{array}
\]
We denote by $[\Delta]_\fixp$ the image of $\Delta$ under $[\_]_\fixp$.

The mapping $[\_]_\#$  associates each primitive fixed-point constraint  with a freshness constraint; it extends to fixed-point contexts in the natural way.
\[
\begin{array}{cccl}
[\_ ]_\#: & \mathfrak{F}_{\fixp} &\longrightarrow &\mathfrak{F}_{\#}\\
&\pi \fixp X& \mapsto & \supp{\pi}\# X.
\end{array}
\]
We denote by $[\Upsilon]_\#$ the image of $\Upsilon$ under $[\_]_\#$.

Below we abbreviate the set of constraints 
$\{a_1\# t, \ldots a_n\# t \mid a_1, \ldots, a_n \in A\}$ as $\overline{A \# t}$. 

Using the above-specified translation of primitive constraints, we translate freshness judgements into fixed-point judgements with newly quantified constraints:

\centerline{$\Delta\cent a\# t$ is translated as  $[\Delta]_\fixp\cent \new c.\swap{a}{c}\fixp t$}

\noindent
and vice-versa:

\centerline{$\Upsilon\cent \new \overline{c}. \pi\fixp t$ is translated as $[\Upsilon]_\#\cent \overline{\supp{\pi}\setminus\{\overline{c}\} \# t}$.}

\begin{thm}\label{lem:fresh.fixp} \leavevmode
\begin{enumerate}
\item $\Delta\cent a\# t \Leftrightarrow [\Delta]_\fixp\cent \new c.\swap{a}{c}\fixp t$.
\item $\Upsilon\cent \new \overline{c}. \pi\fixp t \Leftrightarrow [\Upsilon]_\#\cent \overline{\supp{\pi}\setminus\{\overline{c}\} \# t}$.
\end{enumerate}
\end{thm}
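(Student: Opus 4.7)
The plan is to prove both parts by structural induction on the term $t$ (equivalently, rule induction on the derivation), handling each direction separately. Both parts lean on two core observations: newly quantified atoms behave as genuinely fresh, so $\pi(c)=c$ whenever $c$ is new and $\supp{\pi}$ contains only old atoms; and conjugation satisfies $(a\ c)^{\pi^{-1}} = (\pi^{-1}(a)\ c)$ when $c$ is new, while permutations with disjoint supports commute so conjugation leaves them unchanged. These identities bridge the side conditions of $\rulefont{\# var}$ and $\rulefont{\fixp var}$.

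For Part (1), direction $(\Rightarrow)$ I would induct on the derivation of $\Delta \cent a \# t$. Rules $\rulefont{\# a}$, $\rulefont{\# \tf{f}}$ and $\rulefont{\# tuple}$ translate directly. For $\rulefont{\# var}$ with $t = \pi\cdot X$, the hypothesis $\pi^{-1}(a) \# X \in \Delta$ places $(\pi^{-1}(a)\ c') \fixp X$ in $[\Delta]_\fixp$, whose support contains $\pi^{-1}(a)$, exactly matching the side condition of $\rulefont{\fixp var}$. For $\rulefont{\# [a]}$, applying $\rulefont{\fixp abs}$ and Equivariance (Lemma~\ref{lem:equivariance}) reduces the goal to $(c\ c_1)\fixp t'$ with both atoms new; this holds via a small auxiliary lemma (proved by direct induction on $t'$) stating that a permutation whose support is disjoint from the free atoms of $t'$ fixes $t'$. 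For $\rulefont{\# abs}$ with $t = [b]t'$ and $a\neq b$, the swappings $(a\ c)$ and $(b\ c_1)$ are disjoint, so conjugation is trivial and the induction hypothesis closes the case. Direction $(\Leftarrow)$ runs the same arguments backwards, extracting freshness assertions from the fixed-point side conditions.

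For Part (2), the plan is to reduce to Part (1) atom by atom. In direction $(\Rightarrow)$, for each $a \in \supp{\pi}\setminus\{\overline{c}\}$, Lemma~\ref{lem:fixp-smallerpi} applied to the singleton swapping $(a\ c')$ (with $c'$ new) yields $\Upsilon \cent \new \overline{c},c'.\ (a\ c')\fixp t$; a direct use of $\rulefont{\fixp var}$ gives $[[\Upsilon]_\#]_\fixp \cent \Upsilon$, whence Preservation under Substitution (Proposition~\ref{prop:weak}, with the identity substitution) transports the derivation to obtain $[[\Upsilon]_\#]_\fixp \cent \new c'.\ (a\ c')\fixp t$, and Part (1) then delivers $[\Upsilon]_\# \cent a \# t$. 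In direction $(\Leftarrow)$, Part (1) yields $[[\Upsilon]_\#]_\fixp \cent \new c.\ (a\ c) \fixp t$ for each $a \in \supp{\pi}\setminus\{\overline{c}\}$; the dual bridge $\Upsilon \cent [[\Upsilon]_\#]_\fixp$ holds by $\rulefont{\fixp var}$ (since each generated swapping $(a\ c_a)$ has support $\{a,c_a\}$, and $a\in\supp{\pi}\subseteq \supp{\perm{\Upsilon|_X}}$), so Proposition~\ref{prop:weak} transports each derivation back into $\Upsilon$. Finally, Lemma~\ref{lem:fixp-smallerpi} combines these swappings to reconstruct $\Upsilon \cent \new \overline{c}.\ \pi \fixp t$, since $\bigcup_a \bigl(\supp{(a\ c)}\setminus\{c\}\bigr) = \supp{\pi}\setminus\{\overline{c}\}$.

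The main obstacle will be the disciplined bookkeeping of $\new$-scopes across the translations $[\cdot]_\fixp$ and $[\cdot]_\#$, each of which introduces or discards fresh atoms that must remain outside the scopes already in play. In particular, verifying the round-trip bridges $\Upsilon \cent [[\Upsilon]_\#]_\fixp$ and $[[\Upsilon]_\#]_\fixp \cent \Upsilon$ requires ensuring that the per-constraint fresh atoms introduced by $[\cdot]_\fixp$ are genuinely new relative to every judgement in which they appear, and that the applications of Lemma~\ref{lem:fixp-smallerpi} use an index set with the correct granularity so that the union of per-swapping supports (minus their private fresh atoms) exactly covers $\supp{\pi}\setminus\{\overline{c}\}$.
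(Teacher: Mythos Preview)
Your treatment of Part (1) is essentially the paper's: rule induction on the derivation in each direction, with Equivariance handling the abstraction cases. That is fine.

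For Part (2) you take a genuinely different route, reducing to Part (1) via the round-trips $[[\Upsilon]_\#]_\fixp \cent \Upsilon$ and $\Upsilon \cent [[\Upsilon]_\#]_\fixp$ together with Lemma~\ref{lem:fixp-smallerpi} and Proposition~\ref{prop:weak}. The paper instead proves Part (2) by a separate, direct induction (on the derivation for $(\Rightarrow)$, on the term for $(\Leftarrow)$). Your $(\Rightarrow)$ direction works: the bridge $[[\Upsilon]_\#]_\fixp \cent \Upsilon$ does hold, since for each $\pi'\fixp X\in\Upsilon$ the set $\supp{\perm{[[\Upsilon]_\#]_\fixp|_X}}$ contains every atom of $\supp{\pi'}$ (plus the fresh $c_a$'s), so $\rulefont{\fixp var}$ applies. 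One also needs to drop the unused quantifiers $\overline{c}$ before invoking Part (1); this is harmless since the $\overline{c}$ occur neither in $t$ nor in $(a\ c')$ nor in the context, but it deserves a sentence.

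The $(\Leftarrow)$ direction, however, has a genuine gap. The bridge you invoke, $\Upsilon \cent [[\Upsilon]_\#]_\fixp$, is \emph{false}. For a constraint $(a\ c_a)\fixp X \in [[\Upsilon]_\#]_\fixp$, rule $\rulefont{\fixp var}$ (with empty $\new$-prefix, as required by the hypothesis of Proposition~\ref{prop:weak}) demands $\{a,c_a\}\subseteq \supp{\perm{\Upsilon|_X}}$. Your parenthetical checks only $a$; the fresh atom $c_a$ is by construction \emph{not} in the support of any permutation in $\Upsilon$, so the inclusion fails. Consequently Proposition~\ref{prop:weak} does not transport the derivation from $[[\Upsilon]_\#]_\fixp$ back to $\Upsilon$. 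Intuitively, the double translation strictly enlarges the per-variable support (by adding all the $c_a$'s), so you can weaken \emph{into} $[[\Upsilon]_\#]_\fixp$ but not back out. A repair would require a separate lemma to the effect that atoms fresh for $t$, $\rho$, and $\Upsilon$ can be dropped from the context's support without affecting derivability of $\Upsilon'\cent \new\overline{d}.\rho\fixp t$; proving that is essentially the direct induction on $t$ that the paper carries out, so the detour through Part (1) buys nothing here.
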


\begin{proof}
Part (1):

$(\Rightarrow)$ By induction on the derivation of $\Delta\cent a\# t$ (see the rules in Figure~\ref{fig.rules.freshness}).
We distinguish cases based on the last rule used in the derivation.
\begin{itemize}
    \item 
    If the rule is \rulefont{\# a} then $t$ is an atom $b$ (it cannot be $a$), and the result follows by rule \rulefont{\fixp a}.
\item
If the  rule is \rulefont{\# var} then $t = \pi\act X$ and  since $\Delta \cent a \#  \pi\act X$ we know $\pi^{-1}(a) \# X \in \Delta$ by Inversion. Hence, 
$\swap{\pi^{-1}(a)}{c_{\pi^{-1}(a)}} \fixp X \in [\Delta]_\fixp$, and therefore
$\supp{\swap{a}{c}^{\pi^{-1}}} \setminus \{c\} \subseteq \supp{\perm{[\Delta]_\fixp|_X}}$ (recall that $\pi^{-1}(c) = c$ since $c$ is a new atom).
The result then follows by rule \rulefont{\fixp var}.
\item
The cases for \rulefont{\# tuple} and \rulefont{\# f} follow directly by induction.
\item
If the rule is \rulefont{\#[a]} then $t = [a]s$ and we need to prove $[\Delta]_\fixp\cent \new c.\swap{a}{c}\fixp [a]s$. By rule  \rulefont{\fixp abs}, it suffices to show $[\Delta]_\fixp\cent \new c, c_1.\swap{a}{c}\fixp \swap{a}{c_1}\act s$. By Equivariance, this is equivalent to $[\Delta]_\fixp\cent \new c, c_1.\swap{a}{c}^{\swap{a}{c_1}}\fixp s$, or equivalently $[\Delta]_\fixp\cent \new c, c_1.\swap{c_1}{c}\fixp s$, which holds trivially since $c$ and $c_1$ are new atoms ($\swap{c_1}{c}\act s = s$) 
\item
If the rule is \rulefont{\# abs} then  $t = [b]s$. By assumption, 
 $\Delta \cent a \# [b]s$, hence $\Delta \cent a \# s$.  By induction hypothesis, 
 $[\Delta]_\fixp\cent \new c.\swap{a}{c}\fixp  s$, and by Equivariance, $[\Delta]_\fixp\cent \new c, c_1.\swap{a}{c}\fixp \swap{b}{c_1}\act s$.  The result follows by rule \rulefont{\fixp abs}.
\end{itemize}

$(\Leftarrow)$ By induction on the derivation of  $[\Delta]_\fixp\cent \new c.\swap{a}{c}\fixp t$. We distinguish cases based on the last rule used in the derivation.
\begin{itemize}
    \item 
    If the rule is \rulefont{\fixp a} then $t$ is an atom $b$ (it cannot be $a$), and the result follows by rule \rulefont{\# a}.
\item
If the  rule is \rulefont{\fixp var} then $t = \pi\act X$ and  since $[\Delta]_\fixp\cent \new c.\swap{a}{c}\fixp \pi\act X$ we know $\supp{\swap{a}{c}^{\pi^{-1}}} \setminus \{c\}  \subseteq \supp{\perm{[\Delta]_\fixp|_X}}$ by Inversion. Hence, 
$\pi^{-1}(a)\in \supp{\perm{[\Delta]_\fixp|_X}}$, and therefore
$\pi^{-1}(a)\# X \in \Delta$ by definition of the mapping $[\cdot]_\fixp$.
The result then follows by rule \rulefont{\# var}.
\item
The cases for \rulefont{\fixp tuple} and \rulefont{\fixp f} follow directly by induction.
\item
If the rule is \rulefont{\fixp abs} then 
there are two cases, $t =[a]s$ and $t = [b]s$. \\
If  $t =[a]s$ the result follows directly by rule \rulefont{\#[a]}.\\
If  $t =[b]s$ then  by assumption and Inversion, $[\Delta]_\fixp\cent \new c c_1.\swap{a}{c}\fixp \swap{b}{c_1}\act s$. By Equivariance, $[\Delta]_\fixp\cent \new c c_1.\swap{a}{c}\fixp  s$. By induction hypothesis we deduce 
 $\Delta \cent a \# s$, and the result follows by rule \rulefont{\# abs}.
\end{itemize}
Part (2):

$(\Rightarrow)$ By induction on the derivation of $\Upsilon\cent \new \overline{c}. \pi\fixp t$. Again we distinguish cases based on the last rule used in the derivation.
The only interesting cases are  rule \rulefont{\fixp var} and \rulefont{\fixp abs}.
\begin{itemize}
    \item If the last rule applied is \rulefont{\fixp var} then $t = \pi'\act X$. By Inversion, $\supp{\pi^{{\pi'}^{-1}}} \setminus \{\overline{c}\} \subseteq \supp{\perm{\Upsilon|_X}}$.
    Therefore, for any $a \in \supp{\pi} \setminus \{\overline{c}\}$,  $\pi'^{-1}(a) \in \supp{\perm{\Upsilon|_X}}$. By definition of the mapping $[\cdot]_\#$, $\pi'^{-1}(a) \# X \in [\Upsilon]_\#$, and the result follows by rule \rulefont{\# var}.
    \item If the last rule applied is \rulefont{\fixp abs}  then $t = [a]s$. In this case, we know $\Upsilon \cent \new c.c_1. \pi \fixp \swap{a}{c_1}\act s$. By induction, 
    $[\Upsilon]_\# \cent \supp{\pi} \setminus \{c,c_1\}  \# \swap{a}{c_1}\act s$, hence 
    $[\Upsilon]_\# \cent \supp{\pi} \setminus \{a\}  \#  s$, and the result follows by rules \rulefont{\# [a]} and \rulefont{\# abs}. 
\end{itemize}

$(\Leftarrow)$ By induction on $t$. 
%the derivation of $[\Upsilon]_\# \cent \overline{\supp{\pi} \setminus \{\overline{c}\} \# t}$.  Again we distinguish cases based on the last rule used in the derivation. Again we show the cases corresponding to  variable and abstraction. 

\begin{itemize}
\item If $t$ is an atom $a$ then $a \not\in \supp{\pi} \setminus \{\overline{c}\}$ (by Inversion, rule \rulefont{\#  a}). The result follows by rule \rulefont{\fixp a}.
    \item If  $t = \pi'\act X$, then the  rule applied in the derivation is  rule \rulefont{\#  var}. By Inversion, for any atom $a$ in $\supp{\pi} \setminus \{\overline{c}\}$,  $\supp{\pi^{\pi'^{-1}}} \subseteq  \supp{\perm{\Upsilon|_X}}$ and the result follows by rule \rulefont{\fixp var}.
    \item If $t = [a]s$ then for any atom $b\in \supp{\pi} \setminus \{\overline{c}\}$ ($b$ different from $a$ by the permutative convention), we know  $[\Upsilon]_\# \cent b \# s$, since by assumption  $[\Upsilon]_\# \cent b \# [a]s$ and the last rule applied in  the derivation must have been  rule \rulefont{\# abs}.  In particular, for any atom $b \in \supp{\pi}\setminus\{\overline{c},a,c_1\}$,  $[\Upsilon]_\# \cent b \# s$. Then, by induction hypothesis, $\Upsilon \cent \new c.c_1. \pi^{\swap{a}{c_1} }\fixp s$
     and the result follows by Equivariance and rule \rulefont{\fixp abs}. 
    \item The other cases follow directly by induction.
      \qedhere
\end{itemize}
\end{proof}

\begin{thm}
\label{th:aleq.to.faleq}
$\faleq$ coincides with $\aleq$ on ground terms, that is, $\cent s \aleq t \iff  \cent s \faleq t$.
More generally, 
\begin{enumerate}
\item $\Delta \cent s\aleq t \Rightarrow [\Delta]_\fixp\cent s\faleq t$.
\item %$\Upsilon \cent s\faleq t \Rightarrow [\Upsilon]_\#\cent s\aleq t$.
$\Upsilon \cent \new\overline{c}.\ s\faleq t \Rightarrow [\Upsilon]_\# , \Delta \cent s\aleq t$, where $\Delta \cent \overline{c}\# \var{s, t}$.
\end{enumerate}
\end{thm}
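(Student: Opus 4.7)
The plan is to prove the two implications (1) and (2) separately by rule induction on the given derivation, and then observe that the ground case (no variables, no fixed-point/freshness context required) follows as an immediate corollary: when $\Delta$ and $\Upsilon$ are empty, the image contexts $[\Delta]_\fixp$ and $[\Upsilon]_\#$ are empty and no auxiliary freshness constraints $\Delta$ on the $\overline{c}$ are needed. The essential tool will be Theorem~\ref{lem:fresh.fixp}, which provides exactly the translation between primitive freshness and primitive fixed-point judgements that rules $\rulefont{\aleq {\tf ab}}$ and $\rulefont{\faleq {\tf ab}}$ depend on through their side premise.

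For part (1), I would proceed by induction on the derivation of $\Delta \cent s \aleq t$, working through the rules of Figure~\ref{fig.rules.equaf}. The atom rule $\rulefont{\aleq a}$ matches $\rulefont{\faleq a}$ directly. The congruence rules for tuples, function symbols and $[a]$-abstraction follow immediately by the induction hypothesis. The rule $\rulefont{\aleq var}$ requires translating the hypothesis $\diffs{\pi}{\pi'}\# X \subseteq \Delta$ into the support condition $\supp{(\pi')^{-1}\circ\pi}\subseteq \supp{\perm{[\Delta]_\fixp|_X}}$ demanded by $\rulefont{\faleq var}$: this is immediate from the definition of $[\cdot]_\fixp$, since each atom $a$ with $\pi(a)\neq\pi'(a)$ produces a swapping $\swap{a}{c_a}$ in $[\Delta]_\fixp$ whose support contains $a$. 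The most delicate case is $\rulefont{\aleq {\tf ab}}$, which has a freshness premise $\Delta \cent a\# t$; here I apply Theorem~\ref{lem:fresh.fixp}(1) to obtain $[\Delta]_\fixp \cent \new c.\,\swap{a}{c}\fixp t$, and combine it with the inductive translation of the equality premise to conclude by $\rulefont{\faleq {\tf ab}}$.

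For part (2), the argument is symmetric: induction on $\Upsilon \cent \new\overline{c}.\, s \faleq t$ using the rules of Figure~\ref{fig.rules.equal}. The atom and congruence cases are routine. For $\rulefont{\faleq var}$, the support condition $\supp{(\pi')^{-1}\circ\pi} \setminus \{\overline{c}\} \subseteq \supp{\perm{\Upsilon|_X}}$ gives, via $[\cdot]_\#$, the freshness $(\pi')^{-1}(a)\# X$ for every $a \in \supp{(\pi')^{-1}\circ\pi}$ not among $\overline{c}$; the remaining atoms, those in $\overline{c}$, are covered by the auxiliary $\Delta$ with $\Delta \cent \overline{c}\#\var{s,t}$, so rule $\rulefont{\aleq var}$ applies. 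For rule $\rulefont{\faleq {\tf ab}}$, I invoke Theorem~\ref{lem:fresh.fixp}(2) on the fixed-point premise $\Upsilon \cent \new \overline{c}, c_1.\, (a\ c_1)\fixp t$ to produce $[\Upsilon]_\#, \Delta' \cent a \# t$ for an appropriate $\Delta'$ over the newly quantified atoms, and then combine with the inductive translation of the equality premise.

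The main obstacle I expect is the careful bookkeeping of the newly quantified atoms $\overline{c}$ and of the auxiliary freshness context $\Delta$ in part (2): atoms introduced by $\new$-quantifiers in $\faleq$ derivations must be propagated consistently so that their freshness assumptions remain available in the $\aleq$ derivation, particularly across the $\rulefont{\faleq {\tf ab}}$ case where a fresh $c_1$ is introduced beneath the quantifier. Handling this cleanly may require a strengthened induction hypothesis that explicitly quantifies over the set of atoms known to be fresh for $\var{s,t}$, so that when descending under an abstraction the freshness context can be augmented with the new $c_1$. Once this invariant is tracked, the application of Theorem~\ref{lem:fresh.fixp} provides exactly the bridge needed between the two side premises.
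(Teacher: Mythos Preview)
Your proposal is correct and follows essentially the same approach as the paper: rule induction on the derivation in each direction, with the variable and abstraction cases singled out as the nontrivial ones, and Theorem~\ref{lem:fresh.fixp} supplying the bridge between the freshness premise of $\rulefont{\aleq {\tf ab}}$ and the fixed-point premise of $\rulefont{\faleq {\tf ab}}$. The paper's own proof is in fact sketchier than yours (part~(2) is simply stated to ``follow the lines of the previous part'' and omitted), so your discussion of the bookkeeping for the newly quantified atoms in part~(2) anticipates work the paper leaves implicit.
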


\begin{proof}
\begin{enumerate}
\item The first part is proved  by induction on the derivation of $\Delta\cent s\aleq t$, distinguishing cases according to the last rule applied (see Figure~\ref{fig.rules.equaf}). The interesting cases correspond to  $\rulefont{\aleq var}$ and  $\rulefont{\aleq ab}$.
\begin{itemize}
    \item 
If the last rule applied is $\rulefont{\aleq var}$:
%$\rulefont{\# var}$:

\begin{prooftree}
\diffs{\pi}{\pi_1}\# X \subseteq \Delta
\justifies
\Delta \cent \pi \cdot X \aleq \pi_1 \cdot X
\using \rulefont{\aleq var}
\end{prooftree}

We want to show that $[\Delta]_\fixp \cent \pi \cdot X \faleq \pi_1\cdot X$. To use  rule $\rulefont{\faleq var}$, we need to show that $\supp{\pi_1^{-1}\circ \pi}\subseteq \supp{\perm{([\Delta]\fixp)|_X}}$. 
Let $b \in \supp{\pi_1^{-1}\circ \pi}$ and suppose $b\notin \diffs{\pi}{\pi_1}$. Then $\pi(b)= \pi_1(b)$ and $\pi_1^{-1}(\pi(b)) = b$, contradiction. Therefore, $b\in \diffs{\pi}{\pi_1}$ and $\swap{b}{c_b}\fixp X \in [\Delta]_\fixp$ (for $c_b$ a new name), and the result follows.
\item
If the last rule applied is 
%$\rulefont{\# abs}$,
$\rulefont{\aleq ab}$ the result follows directly by induction and Theorem~\ref{lem:fresh.fixp}.
\end{itemize}

\item The second part is proved  by induction on the derivation of $\Upsilon \cent s\faleq t$, distinguishing cases according to the last rule applied. Again the interesting cases correspond to  variables and abstractions. The proof follows the lines of the previous part and is omitted.
  \qedhere
\end{enumerate}
\end{proof}

As a corollary, since $\aleq$ is an equivalence relation~\cite{Urban2004}, we deduce that $\faleq$ is also an equivalence relation.
\begin{thm}
$\faleq$ is an equivalence relation.
\end{thm}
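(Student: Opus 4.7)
The plan is to derive reflexivity, symmetry, and transitivity of $\faleq$ from the corresponding properties of $\aleq$ (established in~\cite{Urban2004}) by means of the two-way translation provided by Theorem~\ref{th:aleq.to.faleq}.

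Reflexivity, $\Upsilon \cent s \faleq s$, can be proved directly by structural induction on $s$: for atoms apply $\rulefont{\faleq a}$; for a suspension $s = \pi \act X$ apply $\rulefont{\faleq var}$, observing that $\supp{\pi^{-1}\circ\pi} = \emptyset$; the remaining cases (abstraction, function symbol, tuple) follow from the induction hypothesis together with the appropriate congruence rule.

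For symmetry and transitivity, given $\Upsilon \cent \new\overline{c}.\ s \faleq t$ (and an analogous $t \faleq u$ in the transitivity case), I would first apply Theorem~\ref{th:aleq.to.faleq}(2) to obtain a freshness-style judgement $[\Upsilon]_\#, \Delta \cent s \aleq t$, where $\Delta$ asserts freshness of each $c_i \in \overline{c}$ for the variables occurring in $s, t$. Next, I would invoke symmetry (respectively transitivity) of $\aleq$ from~\cite{Urban2004}. Finally, applying Theorem~\ref{th:aleq.to.faleq}(1) translates the result back, producing $[[\Upsilon]_\#, \Delta]_\fixp \cent t \faleq s$ (or $s \faleq u$).

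The main obstacle is to recast the resulting judgement under the \emph{original} context $\Upsilon$. This reduces to establishing the entailment $\Upsilon \cent \new\overline{c}.\ \psi$ for every $\psi \in [[\Upsilon]_\#, \Delta]_\fixp$; once this is done, Preservation under Substitution (Proposition~\ref{prop:weak}) applied with the identity substitution transports the $\faleq$-judgement from $[[\Upsilon]_\#, \Delta]_\fixp$ to $\Upsilon$. For each $\pi \fixp X \in \Upsilon$, the round trip through $[\_]_\#$ and $[\_]_\fixp$ produces constraints $\swap{a}{c_a}\fixp X$ for $a \in \supp{\pi}$ with $c_a$ freshly quantified; each such constraint follows from $\pi \fixp X$ by Lemma~\ref{lem:fixp-smallerpi}, since $\supp{\swap{a}{c_a}}\setminus\{c_a\} = \{a\} \subseteq \supp{\pi}$. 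The constraints stemming from $[\Delta]_\fixp$ involve only newly quantified atoms, so their support (minus those new atoms) is empty and they are discharged immediately by rule $\rulefont{\fixp var}$.
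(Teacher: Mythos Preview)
Your approach is exactly the paper's: the paper states this theorem as an immediate corollary of Theorem~\ref{th:aleq.to.faleq}, transferring the equivalence-relation properties of $\aleq$ (established in~\cite{Urban2004}) across the two translations. You have simply spelled out the argument in more detail than the paper does.

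One technical point deserves care. Proposition~\ref{prop:weak} as stated assumes the unquantified entailment $\Upsilon \cent \Upsilon'\sigma$, i.e.\ $\Upsilon \cent \psi$ for each $\psi \in \Upsilon'\sigma$. What you can actually establish for the round-tripped context is only the \emph{quantified} form $\Upsilon \cent \new c_a.\ \swap{a}{c_a}\fixp X$, since the fresh atom $c_a$ is not in $\supp{\perm{\Upsilon|_X}}$ and rule $\rulefont{\fixp var}$ fails without the $\new$. So Proposition~\ref{prop:weak} does not apply verbatim. The fix is routine: either (i) prove a mild generalisation of Proposition~\ref{prop:weak} whose hypothesis is $\Upsilon \cent \new\overline{d}.\ \Upsilon'\sigma$ and whose conclusion carries the extra quantifiers $\overline{d}$ (the existing proof goes through, as the key case already invokes Lemma~\ref{lem:fixp-smallerpi}, which handles $\new$), and then observe that the additional newly-quantified atoms $c_a$ do not occur in $s,t$ and may be dropped; or (ii) bypass Proposition~\ref{prop:weak} and argue directly by induction on the derivation of $[[\Upsilon]_\#,\Delta]_\fixp \cent t \faleq s$, using Lemma~\ref{lem:fixp-smallerpi} at the leaves. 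Either way the gap is minor and your overall plan is sound.
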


\begin{lem}[$\fixp$ preservation under $\faleq$]
\label{lem:fixp-preserves-faleq}
%If $\Upsilon\cent s\faleq t$ and $\Upsilon \cent \pi \fixp s$ then $\Upsilon \cent \pi \fixp t$.

If $\Upsilon\cent s\faleq t$ and $\Upsilon \cent \new \overline{c}.\pi \fixp s$ then $\Upsilon \cent \new \overline{c}.\pi \fixp t$.
\end{lem}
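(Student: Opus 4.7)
The plan is to combine the Correctness theorem (Theorem~\ref{th:fix.alpha}), the Equivariance lemma (Lemma~\ref{lem:equivariance}), and the fact that $\faleq$ has just been shown to be an equivalence relation. None of the work needs to be done by direct induction on the derivations of $s\faleq t$ or $\pi\fixp s$; everything happens at the level of already-proved meta-properties.

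First, I would apply Correctness ($\Rightarrow$ direction) to the hypothesis $\Upsilon \cent \new\overline{c}.\pi \fixp s$ to get $\Upsilon \cent \new\overline{c}.\pi\act s \faleq s$. Next, by Equivariance (part 2) applied to the hypothesis $\Upsilon \cent s \faleq t$ with permutation $\pi$, I obtain $\Upsilon \cent \pi\act s \faleq \pi\act t$, and by weakening this also holds under the $\new\overline{c}$ quantifier.

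Now, since $\faleq$ is an equivalence relation, I can chain these equivalences: from $\Upsilon\cent s\faleq t$ (hence also $t\faleq s$), $\Upsilon \cent \new\overline{c}.\pi\act s \faleq s$, and $\Upsilon \cent \new\overline{c}.\pi\act s \faleq \pi\act t$, transitivity and symmetry yield
\[
\Upsilon \cent \new\overline{c}.\ \pi\act t \faleq t.
\]
Finally, applying Correctness ($\Leftarrow$ direction) to this equivalence delivers the desired conclusion $\Upsilon \cent \new\overline{c}.\pi\fixp t$.

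The only subtlety I anticipate is the handling of the newly quantified atoms $\overline{c}$: one must ensure that Equivariance and transitivity interact correctly with the $\new$-quantified judgements. Since the atoms in $\overline{c}$ are fresh for everything in sight (in particular for $\pi$, $s$ and $t$) and since Equivariance is stated in a form that preserves the $\new\overline{c}$ prefix, this is routine; no genuinely new argument is required beyond verifying that the quantifier prefix is compatible across the three invocations.
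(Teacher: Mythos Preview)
Your proposal is correct and matches the paper's own proof, which simply states that the result is a ``direct consequence of Theorem~\ref{th:fix.alpha}, Equivariance and Transitivity.'' You have spelled out exactly the chain of reasoning that justifies this one-line claim.
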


\begin{proof}
%Direct consequence of Lemma~\ref{lem:supp-fixp-faleq}.
Direct consequence of Theorem~\ref{th:fix.alpha}, Equivariance and Transitivity.
\end{proof}
%% The proof is by induction on the derivation of $\Upsilon \cent s\faleq t$, by analysing the last rule applied.
%% \begin{enumerate}
%% \item The rule is $\rulefont{\faleq-a}$
%
%% Then $s\faleq t$ is of the form $a\faleq a$ and $\Upsilon \cent a\faleq a$. By hypothesis, $\Upsilon\cent\pi \fixp a $, then $\pi(a)=a$ and the result follows trivially.
%% \item The rule is $\rulefont{\faleq-f}$.
%
%% \item The rule is $\rulefont{\faleq-var}$
%
%% Then $\Upsilon \cent \pi_1 \cdot X \faleq \pi_2\cdot X$ and $\supp{\pi_2^{\pi_1^{-1}}}\subseteq\supp{\perm{\Upsilon|_X}}$. 
%% By hypothesis, $\Upsilon \cent \pi\fixp \pi_1\cdot X $, then by rule $\rulefont{\fixp-var}$, it follows that $\supp{ (\pi \circ \pi_1)^{\pi_1^{-1}}}\subseteq \supp{\perm{\Upsilon|_X}}$.
%
%% Notice that $\supp{(\pi \circ \pi_2)^{\pi_2^{-1}}}\subseteq\supp{\perm{\Upsilon|_X}} $: otherwise, there would exist $x\in \supp{(\pi\circ \pi_2)^{\pi_2^{-1}}}$ such that $x \notin\supp{\perm{\Upsilon|_X}}$, and consequently $x\notin \supp{(\pi\circ \pi_1)^{\pi_1^{-1}}}$ and $x\notin \supp{\pi_1^{\pi_2^{-1}}}$. The later would imply $\pi_1(x)= \pi_2(x)$ and $\pi\circ \pi_1(x) = \pi_1(x)$ would imply $\pi\circ \pi_2(x) = \pi_2(x)$; hence, $x\notin \supp{(\pi\circ \pi_2)^{\pi_2^{-1}}}$. Contradiction.
%
%% \item The rule is $\rulefont{\faleq-[a]}$
%% \item The rule is $\rulefont{\faleq-ab}$
%% \end{enumerate}
%%\end{proof}
%
%The following basic property of support sets is used in the proofs in this section. 
Having proved that $\faleq$ is an equivalence relation, and that  $\fixp$ is correctly defined with respect to $\faleq$, we can use $\fixp$ to define the support of a non-ground term.
We denote the support of a term in context $\Upsilon \cent t$ as $\suppt{\Upsilon}{t}$. As indicated in Definition~\ref{def:support}, the set of atoms in the support of an element of a nominal set can be characterised by using permutation fixed points. In the particular case of terms, the previous results justify the definition of $\suppt{\Upsilon}{t}$  using fixed-point judgements as follows.

\begin{defi} 
\label{def:supp-non-ground}
Let $\Upsilon \vdash t$ be a term in context. 
The {\em support} of $t$ with respect to $\Upsilon$, $\suppt{\Upsilon}{t}$, is the smallest set $A$ of atoms such that for any permutation $\pi$, 
$$(\forall a \in A, \pi(a) = a) \Rightarrow \Upsilon \vdash \pi \fixp t.$$
\end{defi}

As expected, $\alpha$-equivalent terms have the same support.

\begin{lem}
\label{lem:supp-fixp-faleq}\hfill
\begin{enumerate}
 \item If $\Upsilon \cent  s \faleq  t$ then $\suppt{\Upsilon}{s} = \suppt{\Upsilon}{t}$.
 \item   $\Upsilon \cent \new \overline{c}. \pi \fixp t$  
% $\Upsilon \cent  \pi \fixp t$  
 if and only if 
 $\supp{\pi}\setminus{\{\overline{c}\}} \cap \suppt{\Upsilon}{t} = \emptyset$. 
% $\supp{\pi} \cap \suppt{\Upsilon}{t} = \emptyset$. 
\end{enumerate}
\end{lem}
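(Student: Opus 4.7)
The plan is to prove each part separately, using the previously established machinery.

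For Part (1), I would exploit the symmetry of $\faleq$ together with Lemma~\ref{lem:fixp-preserves-faleq}. If $\Upsilon \cent s \faleq t$ and $\Upsilon \cent \pi \fixp s$, then $\Upsilon \cent \pi \fixp t$ by Lemma~\ref{lem:fixp-preserves-faleq}; since $\faleq$ is an equivalence relation (Theorem just after Theorem~\ref{th:aleq.to.faleq}), applying the same lemma in the other direction yields the converse implication. Hence the sets of permutations $\{\pi : \Upsilon \cent \pi \fixp s\}$ and $\{\pi : \Upsilon \cent \pi \fixp t\}$ coincide. By Definition~\ref{def:supp-non-ground} the support is the smallest finite $A$ whose pointwise fixing forces membership in this set, so $\suppt{\Upsilon}{s} = \suppt{\Upsilon}{t}$.

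For Part (2) I would route through the translation of Theorem~\ref{lem:fresh.fixp}. By Part (2) of that theorem, $\Upsilon \cent \new \overline{c}. \pi \fixp t$ is equivalent to $[\Upsilon]_\# \cent a \# t$ for every $a \in \supp{\pi}\setminus\{\overline{c}\}$. It then suffices to establish the bridging fact $a \not\in \suppt{\Upsilon}{t} \iff [\Upsilon]_\# \cent a \# t$, after which the biconditional in Part (2) follows at once: the right-hand intersection is empty iff every such $a$ lies outside $\suppt{\Upsilon}{t}$, iff every such $a$ is fresh for $t$.

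To prove the bridging fact I would argue both directions. For $(\Rightarrow)$, if $a \not\in \suppt{\Upsilon}{t}$, pick $c$ fresh for $\suppt{\Upsilon}{t} \cup \{a\}$; then $\swap{a}{c}$ fixes every atom of $\suppt{\Upsilon}{t}$ pointwise, so Definition~\ref{def:supp-non-ground} yields $\Upsilon \cent \swap{a}{c} \fixp t$, whence $\Upsilon \cent \new c.\,\swap{a}{c} \fixp t$, and Theorem~\ref{lem:fresh.fixp}(1) converts this to $[\Upsilon]_\# \cent a \# t$. For $(\Leftarrow)$, if $[\Upsilon]_\# \cent a \# t$, the same theorem gives $\Upsilon \cent \new c.\,\swap{a}{c} \fixp t$, and a short argument using Lemma~\ref{lem:fixp-smallerpi} and Equivariance shows that $(\suppt{\Upsilon}{t} \setminus \{a\})$ is itself a support for $t$, so $a$ cannot belong to the least support.

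The main obstacle will be this bridging equivalence between the syntactically defined support (Definition~\ref{def:supp-non-ground}) and the standard nominal freshness characterisation: one has to witness that finite supports can always be shrunk by swapping in a fresh atom, and that any permutation fixing $\suppt{\Upsilon}{t} \setminus \{a\}$ can be reconstructed from swappings of the form $\swap{a}{c}$ for new $c$. Once this is done, both parts of the lemma follow cleanly from the translation results already proven.
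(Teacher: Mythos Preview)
Your argument for Part~(1) is correct and matches the paper: both rely on Lemma~\ref{lem:fixp-preserves-faleq} (together with symmetry of $\faleq$) to conclude that $s$ and $t$ have the same fixing permutations, hence the same least support by Definition~\ref{def:supp-non-ground}.

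For Part~(2) your approach is correct but differs from the paper's. The paper treats the whole lemma as a ``direct consequence of Definition~\ref{def:supp-non-ground} and Lemma~\ref{lem:fixp-preserves-faleq}'', i.e.\ it stays entirely inside the fixed-point world: the $(\Leftarrow)$ direction is immediate from the definition of support, and the $(\Rightarrow)$ direction is obtained by showing that $A \setminus (\supp{\pi}\setminus\{\overline{c}\})$ is already a support for $t$ (using Lemma~\ref{lem:fixp-smallerpi} with the given $\pi$ together with all swappings of atoms outside $A$), so minimality of $A=\suppt{\Upsilon}{t}$ forces the intersection to be empty. You instead route through the freshness translation of Theorem~\ref{lem:fresh.fixp}, reducing Part~(2) to the bridging equivalence $a\notin\suppt{\Upsilon}{t}\iff[\Upsilon]_\#\cent a\#t$, which you then prove. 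This is a genuine alternative: it is more explicit and reuses the already-established correspondence with freshness, at the cost of importing that machinery; the paper's intended route is shorter and self-contained but leaves the minimality/Lemma~\ref{lem:fixp-smallerpi} step implicit. Note that in your $(\Leftarrow)$ step of the bridging fact you still end up needing exactly the same kind of ``shrink the support by one atom'' argument via Lemma~\ref{lem:fixp-smallerpi}, so the detour through freshness does not actually avoid that core step.
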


\begin{proof} 
Direct consequence of Definition~\ref{def:supp-non-ground} and Lemma~\ref{lem:fixp-preserves-faleq}.
\end{proof}

\section{Nominal Unification via fixed-point  constraints}
\label{sec:unif}
 In this section we address the problem of unifying nominal terms. Solutions for unification problems will be represented using fixed-point constraints and substitutions. After defining unification problems, we present  a simplification algorithm that computes the most general unifier for a unification problem, provided the problem has a solution (otherwise the algorithm stops indicating that there is no solution). 
 To specify the nominal unification algorithm, in this section we follow the approach to dealing with new atoms that relies on a name generator.
 
 \begin{defi}
 A \emph{unification problem} $\probl$ consists of a finite set of equality and fixed-point constraints of the form $s\faleq^? t$ and 
 %$\new \overline{c}.\pi \fixp^? t$, 
$\pi \fixp^? t$, 
 respectively. 
 \end{defi}
 Below we call 
 $\pi \fixp^? X$ 
% $\new \overline{c}.\pi \fixp^? X$ 
 a \textbf{primitive} constraint.
 
 \begin{defi}[Solution]
\label{def:sol.unif}
A \emph{solution} for a problem ${\probl}$ is a pair of the form $\pair{\Phi}{\sigma}$ where the following conditions are satisfied:
\begin{enumerate}
\item 
%$\Phi\cent \new\overline{c}.\pi\fixp t\sigma$, if $\new\overline{c}.\pi \fixp^? t\in \probl$;
$\Phi\cent \pi\fixp t\sigma$, if $\pi \fixp^? t\in \probl$;
\item $\Phi \cent s\sigma \faleq t\sigma$, if $s\faleq^? t \in \probl$.
\item $X\sigma \faleq  X\sigma \sigma $ for all $X\in \var{\probl}$ (the substitution is idempotent).
\end{enumerate} 
The \emph{solution set} for a problem $\probl$ is denoted by $\mathcal{U}(\probl)$.
 \end{defi}
 
 Solutions in $\mathcal{U}(\probl)$ can be compared using the following ordering.

\begin{defi}
\label{def.instantiation.ordering}
Let $\Phi_1,\Phi_2$ be fixed-point contexts, and $\sigma_1,\sigma_2$ substitutions.
Then $\pair{\Phi_1}{\sigma_1}\leq  \pair{\Phi_2}{\sigma_2}$ if there exists some 
$\sigma'$ such that 
$$
\text{for all $X$},\quad \Phi_2\cent X\sigma_1\sigma'\faleq X\sigma_2 
\quad\text{and}\quad
\Phi_2\cent \Phi_1\sigma' .
$$
\end{defi}
%If we want to be more specific, we may write $(\Phi_1,\sigma_1)\leq_{\sigma'}(\Phi_2,\sigma_2)$.

%A \textbf{least element} of a partially ordered set is one which is related to (we generally say \textbf{less than or equal to}) every other element of the set.   

\begin{defi}
A \textbf{principal} (or \textbf{most general}) solution to a problem $\probl$ is a least element of $\mathcal{U}(\probl)$.
\end{defi}

 We design a unification algorithm via the simplification rules presented in Table~\ref{table:simpl.rules}. 
 These rules act on unification problems $\probl$ by transforming constraints into simpler ones, or instantiating variables in the case of rules $(\aleq inst1)$ and $(\aleq inst1)$. We call the latter \emph{instantiating rules}.
 We  abbreviate $(t_1,\ldots, t_n)$ as $(\widetilde{t})_{1..n}$, and for a set $S$, 
 $\overline{\pi \fixp S}= \{\pi \fixp X \ | \  X\in S\}$.

% RULES WITHOUT NEW
\begin{table}[ht]
\hrule
\small{
\[
\begin{array}{ll@{\hspace{1mm}}l@{\hspace{-2mm}}l}
(\fixp at) &\probl \uplus \{\pi \fixp^? a\} &\Longrightarrow& \probl, \mbox{ if }\ \pi(a) =a\\
(\fixp f)& \probl\uplus \{\pi \fixp^? 	 \tf{f}t\} &\Longrightarrow &\probl\cup \{\pi \fixp^? t\} \\
(\fixp t)&\probl\uplus \{\pi \fixp^? 	 (\widetilde{t})_n\}&\Longrightarrow &\probl\cup\{\pi \fixp^?t_1, \ldots, \pi\fixp^? t_n\}\\
(\fixp abs)& \probl\uplus \{\pi \fixp^? [a]t\} &\Longrightarrow &\probl\cup 
\{\pi \fixp^? \swap{a}{c_1} \act t, \overline{\swap{c_1}{c_2}\fixp^? \var{t}}\}\\
(\fixp var)& \probl\uplus \{\pi \fixp^? \pi'\cdot X\}&\Longrightarrow & \probl \cup 
\{\pi^{(\pi')^{-1}}\fixp^? X\}, \mbox{ if } \pi' \neq Id\\

(\faleq a)&\probl\uplus \{a\faleq^? a\}&\Longrightarrow & {\probl}\\
(\faleq f)&\probl\uplus \{\tf{f} \ t\faleq^? \tf{f} \ t'\}&\Longrightarrow & \probl\cup \{t\aleq^? t'\}\\
(\faleq t)&\probl\uplus \{(\widetilde{t})_n\aleq^? (\widetilde{t'})_n\} &\Longrightarrow& {\probl\cup \{t_1\faleq^? t'_1, \ldots, t_n\faleq^? t'_n\}}\\
(\faleq abs1)&\probl\uplus \{[a]t\faleq^? [a]t'\}&\Longrightarrow & {\probl\cup\{t\faleq^? t'\}}\\
(\faleq abs2)&\probl\uplus \{[a]t\faleq^? [b]s\} &\Longrightarrow &\probl\cup \{t\faleq^? \swap{a}{b}\act s, \swap{a}{c_1}\fixp^? s, \overline{\swap{c_1}{c_2}\fixp^? \var{s}}\}\\
(\faleq var)&\probl\uplus \{\pi \cdot X\faleq^? \pi'\act X\} &\Longrightarrow & \probl\ \cup \{(\pi')^{-1}\circ \pi \fixp^? X\}\\
(\faleq inst1)& \probl\uplus \{\pi \cdot X\faleq^? t\} &\hspace{-4mm}\stackrel{[X \mapsto\pi^{-1}.t]}\Longrightarrow & \hspace{4mm}\probl\{X \mapsto\pi^{-1}.t\}, \mbox{ if } X\notin \var{t}\\ 
(\faleq inst2)& \probl\uplus \{t \faleq^? \pi \cdot X\} &\hspace{-4mm}\stackrel{[X \mapsto\pi^{-1}.t]}\Longrightarrow & \hspace{4mm}\probl\{X \mapsto\pi^{-1}.t\}, \mbox{ if } X\notin \var{t}\\ 
%{\displaystyle\bigcup_{\footnotesize\begin{aligned}Y\in dom(\sigma)\\[-1mm]
%\pi\fixp Y \in \Upsilon\end{aligned}} \pi\fixp^? Y\sigma \},
%&&& \mbox{ where } X\notin \var{t}\mbox{ and } \sigma= \delta\{X/\pi^{-1}.t\}
\end{array}
\]}
\hrule
\caption{Simplification Rules. In  $(\fixp abs)$ and $(\faleq abs2 )$, $c_1$ and $c_2$ are new names.}
\label{table:simpl.rules}
\end{table}

 We write $\probl\Longrightarrow \probl'$ when $\probl'$ is obtained from $\probl$ by applying a simplification rule from Table~\ref{table:simpl.rules}, and we write $\stackrel{*}{\Longrightarrow}$ for the reflexive and transitive closure of $\Longrightarrow$.

 \begin{lem}[Termination]
	There is no infinite chain of reductions $\Longrightarrow$ starting from a problem $ \probl$.
 \end{lem}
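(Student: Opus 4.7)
The plan is to exhibit a well-founded order on unification problems that strictly decreases under every simplification step of Table~\ref{table:simpl.rules}. I would use a lexicographic combination $(N(\probl), M(\probl))$ where $N(\probl) = \#\var{\probl}$ is the number of distinct moderated-variable names occurring in $\probl$, and $M(\probl)$ is a multiset of natural numbers built from term sizes, with a size function defined as $|a|=1$, $|\pi\cdot X|=1$, $|\tf{f}\, t|=1+|t|$, $|[a]t|=1+|t|$, and $|(t_1,\ldots,t_n)|=1+\sum_i |t_i|$. Crucially, each equation $s\faleq^? t$ contributes $|s|+|t|$ to $M(\probl)$, each fixed-point constraint $\pi\fixp^? t$ contributes $|t|$, \emph{except} that primitive constraints of the form $\pi\fixp^? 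X$ (i.e.\ with the trivial suspension $Id\cdot X$) contribute $0$, since they are terminal with respect to the simplification relation.

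I would then argue as follows. The instantiating rules $(\faleq inst1)$ and $(\faleq inst2)$ eliminate the variable $X$ from $\probl$ (the side-condition $X\notin\var{t}$ guarantees both that the substitution is idempotent and that the image $\pi^{-1}\cdot t$ reintroduces no occurrence of $X$), so they strictly decrease $N$. All other rules preserve $N$ because they neither introduce fresh variables nor remove any (the fresh symbols $c_1,c_2$ introduced by $(\fixp abs)$ and $(\faleq abs2)$ are atoms, not variables). For each non-instantiating rule I would verify rule by rule that $M$ strictly decreases in the standard multiset order on $\mathbb{N}$: decomposition rules $(\fixp f), (\fixp t), (\fixp abs), (\faleq f), (\faleq t), (\faleq abs1), (\faleq abs2)$ replace a term of size $1+k$ with terms of sizes strictly less than $1+k$ (noting that permutation action preserves size, so $|\swap{a}{c_1}\cdot t|=|t|$); the elimination rules $(\fixp at)$ and $(\faleq a)$ remove a non-zero contribution; and $(\fixp var), (\faleq var)$ replace suspensions of size $1$ with primitive constraints of weight $0$. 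Since $(\mathbb{N},<)$ and its multiset extension are well-founded, the lexicographic product $(N,M)$ is well-founded, which rules out any infinite chain of reductions.

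The one step I expect to require the most care is verifying that the primitive fixed-point constraints $\overline{\swap{c_1}{c_2}\fixp^? \var{t}}$ generated by $(\fixp abs)$ and $(\faleq abs2)$ are genuinely inert. These are of the form $\swap{c_1}{c_2}\fixp^? X$, i.e.\ $\swap{c_1}{c_2}\fixp^? Id\cdot X$, and since $(\fixp var)$ carries the side-condition $\pi' \neq Id$ they do not match any rule of Table~\ref{table:simpl.rules} (they become live again only when a later instantiation substitutes for $X$, but in that case $N$ has already decreased). This confirms that giving them weight $0$ in $M$ is legitimate, and the rest of the argument is then a routine rule-by-rule case analysis.
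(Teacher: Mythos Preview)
Your proposal is correct and follows essentially the same approach as the paper: a lexicographic measure $(N,M)$ with $N$ the number of variables (decreased by the instantiating rules) and $M$ the multiset of sizes of equality constraints and non-primitive fixed-point constraints (decreased by all other rules). Your treatment is in fact more explicit than the paper's, spelling out the size function, checking each rule, and justifying why the newly generated primitive constraints in $(\fixp abs)$ and $(\faleq abs2)$ are inert and can be assigned weight~$0$; the paper simply remarks that non-instantiating rules ``replace a constraint with smaller ones and/or primitive ones'' without the rule-by-rule verification.
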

 
 \begin{proof}
 Termination of the simplification rules follows directly from the fact that the following measure of the size of $\probl$ is  strictly decreasing:

 $[\probl] = (n_1,M)$ where $n_1$ is the number of different variables 
 used in $\probl$,
 and $M$ is the multiset of sizes of equality constraints and non-primitive fixed-point constraints occurring in $\probl$. 
 To compare $[\probl]$ and $[\probl']$ we use the lexicographic combination of the usual order on natural numbers, $>$,  and its multiset extension. We denote the order by $>_{lex}$. Thus, $[\probl] >_{lex} [\probl']$ if $\probl'$ has less variables than $\probl$, or if it has the same number of variables but smaller equality constraints and non-primitive fixed-point constraints.
 
Each simplification step $\probl\Longrightarrow \probl'$ either eliminates one variable (when an instantiation rule is used) and therefore decreases the first component of the interpretation, or leaves the first component unchanged but replaces a constraint with smaller ones and/or primitive ones (when a non-instantiating rule is used). Therefore, $\probl\Longrightarrow \probl'$ implies $[\probl] >_{lex} [\probl']$. Hence, it is not possible to have an infinite descending chain.
 \end{proof}

If $\probl\Longrightarrow^* \probl'$ and $\probl'$ is irreducible, we say that $\probl'$ is a normal form. We will show next that if instantiation rules are not used, each problem $\probl$ has a unique normal form. Indeed unicity of normal forms for non-instantiating rules is a consequence of the following property.

\begin{lem}[Confluence]
The relation $\Longrightarrow$ defined by the rules in Table~\ref{table:simpl.rules} without $(\aleq inst1)$ and $(\aleq inst2)$ is confluent.
\end{lem}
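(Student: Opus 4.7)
The plan is to prove a strong one-step (diamond) confluence property for the relation $\Longrightarrow$ restricted to the non-instantiating rules; full confluence then follows immediately. Alternatively, local confluence combined with the preceding termination lemma would yield the same conclusion via Newman's lemma, but the diamond property is directly available here.

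The first step is the observation that the non-instantiating rules are \emph{mutually exclusive} on every constraint. In a fixed-point constraint $\pi\fixp^? t$, the term $t$ falls into exactly one of the forms $a$, $\tf{f}s$, $(\widetilde{s})_n$, $[a]s$, $\pi'\cdot X$ with $\pi'\neq Id$, or $Id\cdot X$; only in the last case does no non-instantiating rule apply. Similarly, an equivalence constraint $s\faleq^? t$ with non-primitive structure fits exactly one pattern among $(\faleq a)$--$(\faleq var)$ (the permutative convention separates $(\faleq abs1)$ from $(\faleq abs2)$). Moreover, the side conditions of the rules depend only on the constraint itself, not on the context $\probl$. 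Hence at most one non-instantiating rule applies to each constraint $c\in\probl$, and the resulting step is uniquely determined up to the fresh atoms generated by $(\fixp abs)$ and $(\faleq abs2)$.

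For the diamond, suppose $\probl\Longrightarrow \probl_1$ and $\probl\Longrightarrow \probl_2$ are triggered by constraints $c_1,c_2\in\probl$ respectively, producing replacement sets $P_1,P_2$. If $c_1=c_2$, the two steps apply the same rule and therefore $\probl_1=\probl_2$ once the fresh names are chosen identically. If $c_1\neq c_2$, the rule firing on $c_i$ removes only $c_i$ from $\probl$ and produces constraints that neither touch $c_j$ nor affect the applicability conditions of the rule on $c_j$. Therefore both $\probl_1$ and $\probl_2$ reduce in one further step to the common problem $(\probl\setminus\{c_1,c_2\})\cup P_1\cup P_2$, closing the diamond.

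The only delicate point, and the main thing to get right in writing the proof, is the treatment of the fresh atoms introduced by the abstraction rules: strictly, the two orderings yield equal problems only up to renaming of these new atoms. This is handled either by choosing the fresh atoms coherently across the two orderings (using the deterministic name-generator convention recalled in Section~\ref{sec:alt.to.new}) or by stating confluence modulo renaming of freshly generated atoms, which is harmless for the intended uses of the lemma in the soundness and completeness proofs that follow.
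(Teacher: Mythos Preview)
Your proof is correct, but it takes a different route from the paper. The paper's argument is a one-liner: the non-instantiating rules have no critical pairs (overlaps are only the trivial ones between a rule and itself at the same constraint), hence local confluence holds; combined with the termination lemma, Newman's lemma yields confluence. You instead establish the stronger one-step diamond property directly, which gives confluence without appealing to termination at all. Your version is more explicit about why the overlaps are harmless (mutual exclusivity of rule patterns, locality of side conditions), and it flags the fresh-name issue that the paper glosses over entirely; the paper's version is terser and leans on the standard critical-pair/Newman machinery. Either is fine here, and you even mention the Newman route as an alternative, so the two are not far apart in spirit.
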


\begin{proof}
Confluence follows from the fact that the rules have no critical pairs (there are only trivial overlaps) and are terminating (by Newman's lemma).
\end{proof}

If instantiating rules are used, normal forms are not necessarily unique.  For example, the problem $(a\ b) \fixp^? X, X \faleq^? Y$ has two normal forms:
$$(a\ b) \fixp^? X, X \faleq^? Y \Longrightarrow (a\ b) \fixp^? Y$$ 
$$(a\ b) \fixp^? X, X \faleq^? Y \Longrightarrow (a\ b) \fixp^? X$$ 
However reductions do always terminate with some normal form, 
and all the normal forms of a problem are equivalent in a natural and useful sense (see Definition~\ref{def:sol.unif.computed} and Theorem~\ref{thm.sol.is.principal}). We will  use the notation  $\nftriple{\probl}$ to refer to any normal form of $\probl$.

 We say that an equality constraint $s\faleq^? t$ is \emph{reduced} when one of the following holds:
 \begin{enumerate}
 \item $s=a$ and $t=b$ are distinct atoms;
 \item $s$ and $t$ are headed with different function symbols, that is, $s= \tf{f}\  s'$ and $t= \tf{g}\ t'$;
 \item $s$ and $t$ have different term constructors, that is, $s=[a]s'$ and $t=\tf{f} \ t'$, for some term former $\tf{f}$, or $s=\pi \cdot X$ and $t=a$, etc. 
 \end{enumerate}
 A fixed-point constraint 
 %$\new\overline{c}.\pi \fixp^? s$ 
 $\pi \fixp^? s$
 is \emph{reduced} when it is of the form 
 %$\new \overline{c}.\pi\fixp^? a$ 
 $\pi\fixp^? a$ 
 and $\pi(a)\neq a$, or 
 %$\new \overline{c}.\pi \fixp^?X$, 
 $\pi \fixp^?X$; 
 the former is  \emph{inconsistent} whereas the latter is  \emph{consistent}.

\begin{exa}\label{ex:unif.fix}
For $\probl = [a]\tf{f}(X, a)\faleq^? [b]\tf{f} (\swap{b}{c}\cdot W, \swap{a}{c}\cdot Y)$, we obtain the following derivation chain using the rules in Table~\ref{table:simpl.rules}:
\begin{equation*}
\begin{aligned}
&[a]\tf{f}(X, a)\faleq^? [b]\tf{f}(\swap{b}{c}\cdot W, \swap{a}{c}\cdot Y) \Longrightarrow
\left\{\hspace{-1.5mm}
\begin{array}{l}
\tf{f}(X, a)\faleq^? \tf{f}(\swap{a}{b}\circ \swap{b}{c}\cdot W, \swap{a}{b}\circ\swap{a}{c}\cdot Y),\\
\swap{a}{c_1}\fixp^?\tf{f}(\swap{b}{c}\cdot W, \swap{a}{c}\act Y), \\
\swap{c_1}{c_2}\fixp^? W,  \swap{c_1}{c_2}\fixp^?Y
\end{array}\hspace{-2mm}
\right\}\\
&\stackrel{*}{\Longrightarrow}
\left\{
\begin{array}{l}
X\faleq^? \swap{a}{b}\circ \swap{b}{c}\cdot W,  a \faleq^?\swap{a}{b}\circ\swap{a}{c}\cdot Y,\\
\swap{a}{c_1}\fixp^?\swap{b}{c}\cdot W, \swap{a}{c_1}\fixp^? \swap{a}{c}\cdot Y,\swap{c_1}{c_2}\fixp^? W,  \swap{c_1}{c_2}\fixp^?Y
\end{array}
\right\}\\
&\stackrel{[Y\mapsto b]}{\Longrightarrow}
\left\{
\begin{array}{l}
X\faleq^? \swap{a}{b}\circ \swap{b}{c}\cdot W, \swap{a}{c_1}^{\swap{b}{c}} \fixp^? W,
\swap{a}{c_1}\fixp^? b, \swap{c_1}{c_2}\fixp^? W,  \swap{c_1}{c_2}\fixp^? b
\end{array}
\right\}\\
&\stackrel{*}{\Longrightarrow}
\left\{
\begin{array}{l}
X\faleq^? \swap{a}{b}\circ \swap{b}{c}\cdot W, \swap{a}{c_1} \fixp^? W, \swap{c_1}{c_2}\fixp^? W
\end{array}
\right\}\\
&\stackrel{[X\mapsto \swap{a}{b}\circ \swap{b}{c}\cdot W ]}{\Longrightarrow}
\left\{
\begin{array}{l}
\swap{a}{c_1} \fixp^? W,  \swap{c_1}{c_2}\fixp^? W
\end{array}
\right\} = \nftriple{\probl}.
\end{aligned}
\end{equation*}
\end{exa}

\begin{defi}[Characterisation of normal forms]
\label{def:successfulnf}
Let $\probl$ be a problem such that $\nftriple{\probl}={\probl'}$. We say that
$\probl'$ is \emph{reduced} when it consists of reduced constraints, and  \emph{successful} when $\probl' = \emptyset$ or contains only consistent reduced fixed-point constraints; otherwise,  $\nftriple{\probl}$\emph{ fails}.   
\end{defi}

The simplification rules (Table~\ref{table:simpl.rules}) specify a \emph{unification algorithm}:  we apply the simplification rules in a problem $\probl$ until we reach a normal form $\nftriple{\probl}$. 
\begin{defi}[Computed Solutions]
\label{def:sol.unif.computed}
If $\nftriple{\probl}$ \emph{fails} or contains reduced equational constraints, we say that $\probl$ is {\em unsolvable}; otherwise, $\nftriple{\probl}$ is \emph{solvable} and its solution, denoted $\sol{\probl}$, consists of the composition $\sigma$ of substitutions applied through the simplification steps and the fixed-point context 
$\Phi= \{\pi \fixp X \ | \pi \fixp^? X \in \nftriple{\probl}\}$.
%$\Phi= \{\pi \fixp X \ | \new\overline{c}.\pi \fixp^? X \in \nftriple{\probl}\}$.
\end{defi}

\begin{exa}[Continuing example~\ref{ex:unif.fix}]\label{ex:sol.unif}
Notice that
$\pair{\Psi}{\sigma}$, where 
$\Psi=\{\swap{a}{c_1}\fixp W, \swap{c_1}{c_2}\fixp W\}$ 
%$\Psi=\{\swap{a}{c_1}\fixp W\}$ 
and $\sigma = \{Y\mapsto b, X \mapsto \swap{a}{b}\circ\swap{b}{c}\cdot W\}$, is a solution for $\probl$.
\end{exa}

We will now show that every solvable unification problem has a
 principal solution, computed by the unification algorithm, such that any other solution can be obtained as an instance of the principal one.
 
We start by proving that the non-instantiating rules preserve the set of  solutions.

\begin{lem}[Correctness of Non-Instantiating rules] 
\label{lem.preservation.of.solutions}
Let $\probl$ be a unification problem such that $\probl\Longrightarrow^*\probl'$
without using instantiating rules $(\faleq inst1)$ and $(\faleq inst2)$ then
\begin{enumerate}
\item
$\mathcal{U}(\probl)=\mathcal{U}(\probl')$, and 
\item
if $\probl'$ contains equational or inconsistent reduced fixed-point constraints then $\mathcal{U}(\probl)=\emptyset$.
\end{enumerate}
\end{lem}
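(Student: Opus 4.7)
The plan is to proceed by induction on the length of the reduction sequence $\probl \Longrightarrow^* \probl'$. The base case (zero steps) is immediate, and by transitivity it suffices to establish the claim for a single non-instantiating rewrite step $\probl \Longrightarrow \probl'$. Thus the proof reduces to a case analysis on each of the eleven non-instantiating rules in Table~\ref{table:simpl.rules}. In each case I need to show that an arbitrary pair $\pair{\Phi}{\sigma}$ satisfies the removed constraint (under $\sigma$, in context $\Phi$) if and only if it satisfies the collection of constraints that replaces it; since the rule leaves every other constraint and the substitution untouched, this gives $\mathcal{U}(\probl) = \mathcal{U}(\probl')$.

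For the structural rules $(\fixp \tf{f})$, $(\fixp t)$, $(\faleq \tf{f})$, $(\faleq t)$ and $(\faleq abs1)$, both directions follow at once from the inversion of the corresponding derivation rule in Figures~\ref{fig.rules.fixp} and~\ref{fig.rules.equal}, combined with the fact that substitution commutes with term constructors (and with permutation, by Lemma~\ref{lem:subst-perm-commute}). The atom/variable rules require only routine permutation arithmetic: $(\faleq a)$ is trivial; $(\fixp at)$ uses that the removed constraint is automatically satisfied when $\pi(a)=a$; $(\fixp var)$ follows from Equivariance (Lemma~\ref{lem:equivariance}) with $\rho=(\pi')^{-1}$; and $(\faleq var)$ follows by combining Equivariance with the Correctness Theorem~\ref{th:fix.alpha} to pass between $\pi\cdot X\sigma \faleq \pi'\cdot X\sigma$ and $(\pi')^{-1}\circ\pi \fixp X\sigma$.

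The main obstacle is the abstraction rules $(\fixp abs)$ and $(\faleq abs2)$, which introduce the freshly generated atoms $c_1,c_2$. Since this section adopts the name-generator presentation of Section~\ref{sec:alt.to.new}, I will justify these cases via inversion of the corresponding rules in Figures~\ref{fig.rules.fixp.gen} and~\ref{fig.rules.equal.gen}. The delicate point is that the primitive constraints $\overline{\swap{c_1}{c_2}\fixp^? \var{t}}$ added to $\probl'$ encode the newness of $c_1,c_2$: for the forward direction, given a solution $\pair{\Phi}{\sigma}$ of $\probl$ one can, without loss of generality, assume $c_1,c_2$ do not occur in $\Phi$, $\sigma$ or any term in $\probl$, and then $\Phi \cent \swap{c_1}{c_2} \fixp X\sigma$ follows from Lemma~\ref{lem:supp-fixp-faleq}(2); conversely, any $\pair{\Phi}{\sigma}$ validating the added constraints witnesses precisely the hypothesis of rule \rulefont{\fixp abs} (resp.\ \rulefont{\faleq\!{\tf ab}}) of Figure~\ref{fig.rules.fixp.gen} (resp.\ Figure~\ref{fig.rules.equal.gen}).

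Part (2) is then an immediate consequence of part (1): a reduced equality constraint is a structural mismatch between two terms whose head symbol or constructor is ground (distinct atoms, different function symbols, or mismatched constructors), so no substitution can turn one side into the other, and hence the constraint is unsatisfiable; similarly an inconsistent reduced fixed-point constraint $\pi \fixp^? a$ with $\pi(a)\neq a$ cannot hold because $a$ is ground and rule \rulefont{\fixp a} fails. In either case $\mathcal{U}(\probl') = \emptyset$, and by part (1) $\mathcal{U}(\probl)=\emptyset$.
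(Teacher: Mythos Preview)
Your proposal is correct and follows essentially the same route as the paper: induction on the length of the reduction, reducing to a single step, then case analysis on the non-instantiating rules, with Equivariance handling the variable rules and Lemma~\ref{lem:supp-fixp-faleq} handling the freshly generated atoms in the abstraction rules. The only small point worth making explicit is that in the forward direction for $(\fixp abs)$ and $(\faleq abs2)$, inversion of the name-generator rule yields the premise in the \emph{augmented} context $\Phi,\overline{\swap{c_1}{c_2}\fixp \var{t\sigma}}$, and one then needs Proposition~\ref{prop:weak} (with the identity substitution) to drop those extra hypotheses and obtain the judgement under $\Phi$ alone; the paper invokes this step explicitly, whereas you leave it implicit in ``via inversion''.
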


\begin{proof}
%\daniele{Add proof in the appendix}.
The proof is by induction on the length of the derivation $\probl \overset{n}{\Longrightarrow} \probl'$.

\noindent{\bf Base Case.} $n=0$.
Then $\probl=\probl'$ and the result is trivial.

\noindent {\bf Induction Step.} Suppose, $n>0$ and consider the reduction chain 
$$\probl= \probl_1  \Longrightarrow \ldots \Longrightarrow \probl_{n-1}\Longrightarrow \probl_n=\probl'.$$ 
The proof follows by case analysis on the last rule applied in $\probl_{n-1}$. 
\begin{enumerate}
\item The rule is $(\fixp at)$.

In this case, 
%$\probl_{n-1}=\probl_{n-1}'  \uplus \{\new\overline{c}. \pi \fixp^? a\}\Longrightarrow \probl_{n-1}'= \probl_n$, and $\pi(a)=a$. 
$\probl_{n-1}=\probl_{n-1}'  \uplus \{ \pi \fixp^? a \}\Longrightarrow \probl_{n-1}'= \probl_n$, and $\pi(a)=a$. 

Let $\pair{\Psi}{\sigma}\in \mathcal{U}(\probl_{n-1})$, then 
\begin{enumerate}
  \item 
  %$ \  \Psi \cent \new\overline{c}.\pi' \fixp t\sigma$, for all $\new\overline{c}.\pi'\fixp^?t\in \probl'_{n-1}$
  $ \  \Psi \cent \pi' \fixp t\sigma$, for all $\pi'\fixp^?t\in \probl'_{n-1}$
 \item $ \  \Psi \cent t\sigma\faleq s\sigma$, for all $t\faleq^?s\in \probl_{n-1}'$;
 \item \ $X\sigma = X\sigma\sigma$, for all $X\in \var{\probl_{n-1}'}$.
 \end{enumerate}
 Therefore, $\pair{\Psi}{\sigma}\in \mathcal{U}(\probl_n)$ and $\mathcal{U}(\probl_{n-1})\subseteq \mathcal{U}(\probl_n)$. The other inclusion is trivial.
 
\item The rule is $(\fixp var)$.

In this case, 
%$\probl_{n-1}=\probl_{n-1}'  \uplus \{ \new\overline{c}.\pi \fixp^? \pi'\cdot X\}\Longrightarrow \probl_{n-1}'\cup \{\new\overline{c}.\pi^{(\pi')^{-1}}\fixp^? X\}= \probl_n$, and $\pi'\neq Id$. 
$\probl_{n-1}=\probl_{n-1}'  \uplus \{ \pi \fixp^? \pi'\cdot X\}\Longrightarrow \probl_{n-1}'\cup \{\pi^{(\pi')^{-1}}\fixp^? X\}= \probl_n$, and $\pi'\neq Id$. 

Let $\pair{\Psi}{\sigma}\in \mathcal{U}(\probl_{n-1})$, then 
\begin{enumerate}
  \item 
  %$ \  \Psi \cent \new\overline{c}.\pi' \fixp t\sigma$, for all $\new\overline{c}.\pi'\fixp^?t\in \probl'_{n-1}$, and $\Psi \cent \new\overline{c}. \pi \fixp \pi'\cdot X\sigma$.
  $ \  \Psi \cent \pi' \fixp t\sigma$, for all $\pi'\fixp^?t\in \probl'_{n-1}$, and $\Psi \cent  \pi \fixp \pi'\cdot X\sigma$.
 \item $ \  \Psi \cent t\sigma\faleq s\sigma$, for all $t\faleq^?s\in \probl_{n-1}'$;
 \item \ $X\sigma = X\sigma\sigma$, for all $X\in \var{\probl_{n-1}'}$.
 \end{enumerate}
Notice that 
\[
\begin{array}{ll}
%\Psi \cent \new\overline{c}.\pi \fixp \pi'\cdot X\sigma &\Rightarrow 
\Psi \cent \pi \fixp \pi'\cdot X\sigma &\Rightarrow 
\Psi \cent \pi \act (\pi' \act X\sigma) \faleq (\pi' \cdot X\sigma), \mbox{ hence } \\
&\Psi \cent (\pi')^{-1} \circ \pi \circ \pi' \act (X\sigma) \faleq X\sigma \mbox{ by Equivariance (Lemma~\ref{lem:equivariance})}\\
%&\Rightarrow \Psi \cent \new\overline{c}.\pi^{(\pi')^{-1}} \fixp  X\sigma.
&\Rightarrow \Psi \cent \pi^{(\pi')^{-1}} \fixp  X\sigma.
\end{array}
\]
Therefore, $\pair{\Psi}{\sigma}\in \mathcal{U}(\probl_{n})$ and $\mathcal{U}(\probl_{n-1})\subseteq \mathcal{U}(\probl_n)$. The other inclusion is similar.

\item The rule is $\fixp abs$.
Then
 \begin{equation*}
 \begin{aligned}
% \probl_{n-1}&=\probl'\uplus \{\new\overline{c}. \pi \fixp^? [a]s\} \Longrightarrow \probl'\cup \{\new\overline{c}, c_1.\pi \fixp^? \swap{a}{c_1}.s\}=\probl_n.
 \probl_{n-1}&= \probl'\uplus \{\pi \fixp^? [a]s\} \Longrightarrow \probl'\cup \{\pi \fixp^? \swap{a}{c_1}.s, \overline{(c_1\ c_2) \fixp \var{s}}\}=\probl_n.
\end{aligned}
\end{equation*}
where $c_1$ and $c_2$ are new names not occurring anywhere in the problem.

Let $\pair{\Psi}{\sigma}\in \mathcal{U}(\probl_{n-1}$) be a solution for $\probl_{n-1}$:
\begin{enumerate}
\item 
%$ \  \Psi \cent \new\overline{c}.\pi' \fixp t\sigma$, for all $\new\overline{c}.\pi'\fixp^?t\in \probl'$ and $\Psi \cent \new\overline{c}.\pi \fixp ([a]s)\sigma$.
$ \  \Psi \cent \pi' \fixp t\sigma$, for all $\pi'\fixp^?t\in \probl'$, and $\Psi \cent \pi \fixp ([a]s)\sigma$.
 \item $ \  \Psi \cent t\sigma\faleq s\sigma$, for all $t\faleq^?s\in \probl'$.
  \end{enumerate}
Since 
%$\Psi \cent \new\overline{c}.\pi \fixp ([a]s)\sigma$ 
$\Psi \cent \pi \fixp ([a]s)\sigma$ 
and $([a]s)\sigma= [a]s\sigma$, it follows that 
%$\Psi \cent \new\overline{c}.\pi \fixp [a](s\sigma)$.
$\Psi \cent \pi \fixp [a](s\sigma)$.
From  inversion and rule $\rulefont{\fixp abs}$ (see Figure~\ref{fig.rules.fixp.gen}), this implies that there exists a proof for 
%$\Psi \cent \new\overline{c}, c_1.\pi \fixp \swap{a}{c_1}.s\sigma$ as required. 
$\Psi, \overline{\swap{c_1}{c_2}\fixp \var{s\sigma}} \cent \pi \fixp \swap{a}{c_1}.s\sigma$.

Notice that we can always choose $c_1$ and $c_2$ such that $\supp{\swap{c_1}{c_2}}\cap \supp{s\sigma}=\emptyset$, from Lemma~\ref{lem:supp-fixp-faleq}, it follows that $\Psi \cent \swap{c_1}{c_2}\fixp s\sigma$. Since $\Psi, \overline{\swap{c_1}{c_2}\fixp \var{s\sigma}} \cent\pi \fixp \swap{a}{c_1}.s\sigma $, it follows that  $\Psi\cent\pi \fixp \swap{a}{c_1}.s\sigma $, by Proposition~\ref{prop:weak}.
The other inclusion is similar.

The cases corresponding to the other non-instantiating simplification rules  are similar to the above and omitted.
\qedhere
 \end{enumerate}
\end{proof}
\noindent
We now show that the result of the algorithm is a principal solution.

\begin{thm}
\label{thm.sol.is.principal}
Let $\probl$ be a unification problem, and suppose $\sol{\probl} = \pair{\Phi}{\sigma}$.
Then:
\begin{enumerate}
\item
$\pair{\Phi}{\sigma}\in\mathcal{U}(\probl)$, and
\item
$\pair{\Phi}{\sigma}\leq\pair{\Phi'}{\sigma'}$ for all other $\pair{\Phi'}{\sigma'}\in\mathcal{U}(\probl)$.
That is, the solution is also a \emph{least} or \emph{principal} solution.
\end{enumerate}
\end{thm}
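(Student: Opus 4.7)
The plan is to prove both parts simultaneously by induction on the length $n$ of the reduction chain $\probl = \probl_0 \Longrightarrow \probl_1 \Longrightarrow \cdots \Longrightarrow \probl_n = \nftriple{\probl}$, where $\nftriple{\probl}$ is a successful normal form (otherwise $\sol{\probl}$ is undefined and there is nothing to prove). The induction hypothesis will be that $\sol{\probl_1}$ satisfies both (1) and (2) with respect to $\probl_1$.

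For the base case $n = 0$, the problem $\probl$ is already a successful normal form consisting only of consistent reduced primitive fixed-point constraints, so $\sigma = \id$ and $\Phi = \{\pi \fixp X \mid \pi \fixp^? X \in \probl\}$. Part (1) is immediate since each constraint $\pi \fixp^? X$ in $\probl$ is derivable from $\Phi$. For part (2), given any $\pair{\Phi'}{\sigma'} \in \mathcal{U}(\probl)$, take the witness $\sigma'' = \sigma'$: then $X\sigma\sigma'' = X\sigma'$ trivially, and $\Phi' \cent \Phi\sigma'$ holds because $\pair{\Phi'}{\sigma'}$ solves every primitive constraint $\pi \fixp^? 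X$ in $\probl$.

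For the inductive step I would distinguish the two cases. If the first step $\probl \Longrightarrow \probl_1$ uses a non-instantiating rule, then by Lemma~\ref{lem.preservation.of.solutions} we have $\mathcal{U}(\probl) = \mathcal{U}(\probl_1)$; since $\sol{\probl} = \sol{\probl_1}$, the induction hypothesis transfers both (1) and (2) directly. The more delicate case is an instantiating step, say $\probl = \probl_0 \uplus \{\pi \cdot X \faleq^? t\} \stackrel{[X\mapsto\pi^{-1}\act t]}{\Longrightarrow} \probl_0\{X \mapsto \pi^{-1}\act t\} = \probl_1$ with $X \notin \var{t}$, and the IH giving $\sol{\probl_1} = \pair{\Phi}{\sigma_1}$ principal for $\probl_1$, so that $\sol{\probl} = \pair{\Phi}{[X\mapsto\pi^{-1}\act t]\sigma_1}$. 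For part (1), the constraints in $\probl_0$ are solved using $\pair{\Phi}{\sigma_1}$ solving $\probl_1$ (and the identity $c\sigma = (c\{X\mapsto\pi^{-1}\act t\})\sigma_1$ together with Lemma~\ref{lem:subst-perm-commute}), while the equation $\pi\cdot X \faleq^? t$ is solved by direct calculation: $(\pi\cdot X)\sigma = \pi\cdot(\pi^{-1}\act t)\sigma_1 = t\sigma_1 = t\sigma$ since $X \notin \var{t}$. Idempotency of $\sigma$ follows similarly from idempotency of $\sigma_1$ and $X \notin \var{\probl_1}$.

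For part (2) in the instantiating case, pick any $\pair{\Phi'}{\sigma'} \in \mathcal{U}(\probl)$. The key observation is that $\Phi' \cent (\pi\cdot X)\sigma' \faleq t\sigma'$ gives, by Equivariance, $\Phi' \cent X\sigma' \faleq (\pi^{-1}\act t)\sigma'$; a routine substitution lemma (a consequence of Proposition~\ref{prop:weak} plus congruence of $\faleq$) then yields $\Phi' \cent u\sigma' \faleq (u\{X\mapsto\pi^{-1}\act t\})\sigma'$ for any term $u$, whence $\pair{\Phi'}{\sigma'} \in \mathcal{U}(\probl_1)$. By IH applied to $\probl_1$ there exists $\sigma''$ with $\Phi' \cent Y\sigma_1\sigma'' \faleq Y\sigma'$ for every variable $Y$ and $\Phi' \cent \Phi\sigma''$. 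The same $\sigma''$ witnesses $\pair{\Phi}{\sigma} \leq \pair{\Phi'}{\sigma'}$: for $Y \neq X$ we have $Y\sigma\sigma'' = Y\sigma_1\sigma''$ directly, while for $Y = X$ we chain $X\sigma\sigma'' = (\pi^{-1}\act t)\sigma_1\sigma'' \faleq (\pi^{-1}\act t)\sigma' \faleq X\sigma'$ using the substitution lemma on $\pi^{-1}\act t$ together with the $\faleq$-equation derived above.

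The main obstacle will be the principality argument in the instantiating case, specifically justifying that any solution of the original problem is also a solution of the post-substitution problem. This rests on the substitution-congruence property for $\faleq$ (lifting a single $\faleq$-equation on $X$ to an $\faleq$-equation on arbitrary terms containing $X$), which is not stated explicitly as a named lemma in the excerpt and will need to be proved as a small auxiliary step or deduced from Proposition~\ref{prop:weak} together with the equivalence and congruence properties of $\faleq$ established earlier.
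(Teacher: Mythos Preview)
Your proposal is correct and follows essentially the same approach as the paper: induction on the length of the simplification chain, with the non-instantiating case discharged by Lemma~\ref{lem.preservation.of.solutions} and the instantiating case handled by pushing the substitution $[X\mapsto\pi^{-1}\act t]$ through. The only minor difference is in part~(2) of the instantiating case: the paper modifies $\sigma'$ to a $\sigma''$ that maps $X$ to $X$ and factors $\sigma'$ as $\theta\circ\sigma''$, whereas you keep $\sigma'$ intact and show directly that $\pair{\Phi'}{\sigma'}\in\mathcal{U}(\probl_1)$ via a substitution-congruence argument; both routes rely on the same underlying congruence property of $\faleq$, which (as you correctly note) is not stated as a standalone lemma but follows from Proposition~\ref{prop:weak} and the equivalence properties of $\faleq$.
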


\begin{proof}
We work by induction on the length of the reduction $\probl \Longrightarrow^{*} \sol{\probl}$.
\begin{itemize}
\item
Suppose $\probl$ is in normal form. Then the result is trivial since: 
\begin{enumerate}
\item  $\sigma=\Id$, $\Phi\cent\probl\Id$ and $\Id$ is idempotent;
\item For any other $\pair{\Phi'}{\sigma'}\in\mathcal{U}(\probl)$, $\sigma'$ is such that $\Phi'\cent \Phi\sigma$ and $\Phi'\cent X\Id\sigma\faleq X\sigma$ for all $X$.
\end{enumerate}
\item
Suppose $\probl\Longrightarrow \probl'$ by some non-instantiating simplification.  Then 
using Lemma~\ref{lem.preservation.of.solutions},
we know that $\mathcal{U}(\probl)=\mathcal{U}(\probl')$.  
Both parts of the result follow by induction.
\item
Suppose $\probl\Longrightarrow^{\theta}\probl'\theta$ by an instantiating rule.  Assume $\probl= \probl' \cup\{\pi\act X\faleq t\}$ where $\theta=[X\mapsto \pi^{-1}\act t]$ and $X\not\in \var{t}$ (the case for the other instantiating rule is similar).  Suppose $\sol{\probl'\theta}=\pair{\Phi}{\sigma}$, so that by construction $\sol{\probl}=\pair{\Phi}{\theta\circ\sigma}$. 
\begin{enumerate}
\item 
 It is easy to see that $\theta\circ\sigma$ is idempotent and by the first part of the inductive hypothesis $\Phi\cent \probl'\theta\sigma$, that is, $\pair{\Phi}{\theta\circ\sigma}\in\mathcal{U}(\probl)$.  
\item
Suppose $\pair{\Phi'}{\sigma'}\in\mathcal{U}(\probl)$.  
Then $\Phi'\cent X\sigma'\faleq \pi^{-1}\act t\sigma'$ by Equivariance. 
Hence, $\pair{\Phi'}{\theta\circ\sigma''}\in\mathcal{U}(\probl)$ where $\sigma''$ acts just like $\sigma'$ only it maps $X$ to $X$, $\theta=[X\mapsto \pi^{-1}\act t]$, and $\sigma'=\theta\circ\sigma''$.  Note that  $\pair{\Phi'}{\sigma''}\in\mathcal{U}(\probl'\theta)$ and by inductive hypothesis $\pair{\Phi}{\sigma}\leq \pair{\Phi'}{\sigma''}$.  It follows that $\pair{\Phi}{\theta\circ\sigma}\leq  \pair{\Phi'}{\theta\circ\sigma''}$. 
\qedhere
\end{enumerate}
\end{itemize}
\end{proof}

Matching and unification are closely related notions. While unification is the basis of logic programming languages, matching is at the heart of rewriting and functional programming. A matching algorithm can be easily derived from a unification algorithm. First we recall the definition of matching.

\begin{defi}
A {\em matching problem} is a particular kind of unification problem $\probl$ in which 
the variables in right-hand sides of equality constraints are disjoint from the variables in left-hand sides. A solution $\pair{\Phi}{\sigma}$ for a matching problem $\probl$ satisfies $\Phi \vdash s \faleq t\sigma$ and $\Phi \vdash \pi \fixp t\sigma$, for each $s\faleq^? t, \pi\fixp^? t \in \probl$ (i.e., in a matching problem, only the variables in right-hand sides of terms can be instantiated.
\end{defi}

A nominal matching algorithm can be obtained from the unification algorithm specified in Table~\ref{table:simpl.rules} by restricting the instantiation rules, so that only variables which were in left-hand sides of equality constraints in the initial problem can be instantiated.

\begin{obs}
Theorem~\ref{th:fix.alpha} guarantees the equivalence between $\aleq$ and $\faleq$, therefore, we can associate the unification algorithm proposed, with the standard nominal unification algorithm proposed in~\cite{Urban2004}. The problem $\probl$ introduced in Example~\ref{ex:unif.fix}, is equivalent to the nominal unification problem $\mathcal{P}= \{[a]\tf{f}(X,a)\aleq^? [b]\tf{f}(\swap{b}{c}\cdot W, \swap{a}{c}\cdot Y\}$, and using the standard simplification rules~\cite{Urban2004}:
\begin{equation}
\begin{aligned}
\mathcal{P}\stackrel{*}{\Longrightarrow}\stackrel{[Y\mapsto b]}{\Longrightarrow}& \stackrel{*}{\Longrightarrow}\mathcal{P}'= \{X\aleq^? \swap{a}{b}\cdot (\swap{b}{c}\cdot W),  a\#^? W\}\\
&\stackrel{[X\mapsto \swap{a}{b}\circ \swap{b}{c}\cdot W]}{\Longrightarrow} \{ a\#^? W\} = \mathcal{P}'
\end{aligned}
\end{equation}
%%%%%%%%%%%%%%%%%%%%%%%%%%%%%%%%%%%
The pair $\langle\mathcal{P}\rangle_{\tt sol}= \pair{\{a\#W\}}{\delta}$, where $\delta = \{Y\mapsto b, X\mapsto \swap{a}{b}\circ \swap{b}{c}\cdot W\}$ is a solution for $\mathcal{P }$. Using the translation $[\_]_\fixp$, we obtain $[\langle\mathcal{P}\rangle_{\tt sol}]_\fixp = \pair{\{[a\#W]_\fixp\}}{\delta}= \pair{\swap{a}{c_a}\fixp W}{\delta}$, where $c_a$ is a 
new name,
which is equivalent to $\pair{\swap{a}{c_a}\fixp W, \swap{c_a}{c_1}\fixp W}{\delta}$, for $c_a$ and $c_1$ not occurring anywhere in $\mathcal{P}$. 
Therefore, $[\langle\mathcal{P}_{\tt sol}\rangle]_\fixp$ is a solution for $\probl= \{[a]\tf{f}(X,a)\faleq^? [b]\tf{f}(\swap{b}{c}\cdot W, \swap{a}{c}\cdot Y\}$. Similarly, from the solution $\pair{\Psi}{\sigma}$ proposed in Example~\ref{ex:sol.unif}, we obtain %$\pair{[\Psi]_\#}{\sigma}=\pair{a\#W}{\sigma}$, 
$\pair{[\Psi]_\#}{\sigma}=\pair{a\#W,c_1\#W,c_2\#W}{\sigma}$, 
which is a solution for $\mathcal{P}$.
\end{obs}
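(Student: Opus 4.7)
The plan is to justify the Observation by exhibiting, on this concrete example, the solution-preserving correspondence between the standard $\aleq$-based algorithm of~\cite{Urban2004} and the $\faleq$-based algorithm of Table~\ref{table:simpl.rules}, using the translations $[\_]_\fixp$ and $[\_]_\#$ and the equivalence theorems of Section~\ref{sec:fresh.to.fixp} (Theorem~\ref{lem:fresh.fixp} and Theorem~\ref{th:aleq.to.faleq}), which are themselves ultimately grounded in Theorem~\ref{th:fix.alpha}. Concretely, I would (i) run the standard algorithm on $\mathcal{P}$ to obtain $\langle\mathcal{P}\rangle_{\tt sol}$, (ii) verify that the translated pair $[\langle\mathcal{P}\rangle_{\tt sol}]_\fixp$ is a solution of $\probl$, and (iii) perform the symmetric check starting from the fixed-point solution $\pair{\Psi}{\sigma}$ of Example~\ref{ex:sol.unif} and the translation $[\Psi]_\#$.

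For step (i), I would trace through the displayed reduction chain for $\mathcal{P}$ step by step; each rule used is the standard counterpart of a rule from Table~\ref{table:simpl.rules}, the only genuine difference being that $(\faleq abs2)$ is replaced by its classical variant producing a single freshness constraint $a\#^? W$ instead of the trio $\swap{a}{c_1}\fixp^? s$, $\overline{\swap{c_1}{c_2}\fixp^?\var{s}}$. Matching the two derivations step-by-step and observing that the instantiating rules produce the same substitution yields $\langle\mathcal{P}\rangle_{\tt sol}=\pair{\{a\# W\}}{\delta}$ with the very $\delta$ displayed in the observation. For step (ii), by definition of $[\_]_\fixp$ one gets $[\{a\# W\}]_\fixp=\{\swap{a}{c_a}\fixp W\}$ for a new name $c_a$; to check this is a solution of $\probl$ in the sense of Definition~\ref{def:sol.unif}, I would invoke the $(\Rightarrow)$ directions of Theorem~\ref{lem:fresh.fixp}(1) and Theorem~\ref{th:aleq.to.faleq}(1) on each primitive freshness judgement and each $\aleq$-judgement witnessing $\pair{\{a\# W\}}{\delta}\in\mathcal{U}(\mathcal{P})$, obtaining the corresponding fixed-point and $\faleq$-judgements. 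Step (iii) is symmetric: $[\Psi]_\#=\{a\# W, c_1\# W, c_2\# W\}$, and the $(\Leftarrow)$ directions of the same two theorems transfer the fixed-point judgements for $\probl$ into freshness-based judgements for $\mathcal{P}$, with the auxiliary constraints on $c_1,c_2$ automatically holding because these atoms do not occur in $W$.

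The main obstacle is the bookkeeping of the freshly generated atoms $c_1,c_2,c_a$ introduced by rules $(\fixp abs)$, $(\faleq abs2)$ and by the translation $[\_]_\fixp$. One has to check that each of these atoms is genuinely new at the moment it is generated --- fresh for the current problem, for the accumulated substitution, and for the previously generated atoms --- so that the equivalence between $\swap{c}{c'}\fixp W$ and $\{c\# W, c'\# W\}$ licensed by Theorem~\ref{lem:fresh.fixp} actually applies and the auxiliary $\swap{c_1}{c_2}\fixp^?\var{s}$ constraints emitted by $(\fixp abs)$ collapse harmlessly. This invariant is exactly what the name-generator discipline of Section~\ref{sec:alt.to.new} encodes, but it must be tracked explicitly when aligning the two runs so that the principal solutions produced by each algorithm are related by $[\_]_\fixp$ and $[\_]_\#$ up to the equivalence of contexts induced by these translations.
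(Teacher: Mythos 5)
Your proposal is correct and follows essentially the same route as the paper: the Observation is justified by tracing the standard run of~\cite{Urban2004} on $\mathcal{P}$, translating the resulting pair via $[\_]_\fixp$ (and conversely via $[\_]_\#$), and appealing to Theorems~\ref{lem:fresh.fixp} and~\ref{th:aleq.to.faleq} --- exactly the ingredients the paper then packages into the general Theorem~\ref{th:sol.unif.fix.fresh}, whose proof additionally cites Proposition~\ref{prop:weak} for preservation under substitution, a point your concrete verification handles implicitly. Your extra care about the freshness of the generated atoms $c_1, c_2, c_a$ is a sound refinement of the same argument rather than a different one.
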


In the theorem below $\probl_\fixp$ denotes a unification problem w.r.t. $\faleq$ and $\fixp$, and $\mathcal{P}_\#$ denotes a unification problem w.r.t. $\aleq$ and $\#$.
\begin{thm}\label{th:sol.unif.fix.fresh}
Let $\probl_\fixp$ and $\mathcal{P}_\#$ be unification problems such that $[\probl_\fixp]^\#= \mathcal{P}_\#$ and $\pair{\Psi}{\sigma} \in \mathcal{U}(\probl_\fixp)$ and $\pair{\Delta}{\delta}\in \mathcal{U}(\mathcal{P}_\#)$ be solutions for $\probl_\fixp$ and $\mathcal{P}_\#$, respectively. Then
\begin{enumerate}
\item $\pair{[\Psi]_\#}{\sigma} \in \mathcal{U}(\mathcal{P}_\#)$.
\item $\pair{[\Delta]_\fixp}{\delta} \in \mathcal{U}(\probl_\fixp)$.
%\item $\delta= \sigma$.
\end{enumerate}
\end{thm}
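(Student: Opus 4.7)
The plan is to reduce each of the two claims to the translation results already established earlier, namely Theorem~\ref{lem:fresh.fixp} for primitive constraints and Theorem~\ref{th:aleq.to.faleq} for equality constraints. Since a unification problem (Definition~\ref{def:sol.unif}) contains no newly-quantified atoms, whenever we invoke Theorem~\ref{th:aleq.to.faleq}(2) the list $\overline{c}$ is empty, so the auxiliary freshness context in its conclusion is vacuous and we may read off $[\Upsilon]_\# \cent s \aleq t$ directly. Idempotency of the substitution transfers trivially, because the translations affect only the contexts and leave $\sigma$ and $\delta$ untouched; hence it suffices to verify derivability of the translated constraints in the translated contexts.

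For part (1), fix $\pair{\Psi}{\sigma} \in \mathcal{U}(\probl_\fixp)$ and inspect each constraint of $\mathcal{P}_\# = [\probl_\fixp]^\#$ in turn. An equality constraint $s \aleq^? t \in \mathcal{P}_\#$ is the image of some $s \faleq^? t \in \probl_\fixp$; by hypothesis $\Psi \cent s\sigma \faleq t\sigma$, and Theorem~\ref{th:aleq.to.faleq}(2) gives $[\Psi]_\# \cent s\sigma \aleq t\sigma$. A freshness constraint $a \#^? t \in \mathcal{P}_\#$ arises from some $\pi \fixp^? t \in \probl_\fixp$ with $a \in \supp{\pi}$; from $\Psi \cent \pi \fixp t\sigma$, Theorem~\ref{lem:fresh.fixp}(2) yields $[\Psi]_\# \cent b \# t\sigma$ for every $b \in \supp{\pi}$, and in particular the desired $[\Psi]_\# \cent a \# t\sigma$.

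For part (2), take $\pair{\Delta}{\delta} \in \mathcal{U}(\mathcal{P}_\#)$. Equality constraints $s \faleq^? t \in \probl_\fixp$ arise from $s \aleq^? t \in \mathcal{P}_\#$, and Theorem~\ref{th:aleq.to.faleq}(1) directly supplies $[\Delta]_\fixp \cent s\delta \faleq t\delta$. The main obstacle is a fixed-point constraint $\pi \fixp^? t \in \probl_\fixp$: its image in $\mathcal{P}_\#$ is the whole set $\{a \#^? t \mid a \in \supp{\pi}\}$, and applying Theorem~\ref{lem:fresh.fixp}(1) to each $\Delta \cent a \# t\delta$ produces only the individual swap-based judgements $[\Delta]_\fixp \cent \new c_a.\ \swap{a}{c_a} \fixp t\delta$, one per atom, rather than the single judgement about $\pi$ that we need. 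The rescue is Lemma~\ref{lem:fixp-smallerpi}: taking $\pi_a = \swap{a}{c_a}$ with $\overline{c_a} = \{c_a\}$, we have $\supp{\pi_a} \setminus \{c_a\} = \{a\}$, so the inclusion $\supp{\pi} \subseteq \bigcup_{a \in \supp{\pi}} \supp{\swap{a}{c_a}} \setminus \bigcup_{a \in \supp{\pi}} \{c_a\}$ holds trivially, and the lemma combines the single-swap judgements into $[\Delta]_\fixp \cent \pi \fixp t\delta$, completing the verification.
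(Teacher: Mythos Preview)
Your proposal is correct and follows essentially the same route as the paper's proof, which is a one-line appeal to Theorems~\ref{lem:fresh.fixp} and~\ref{th:aleq.to.faleq} together with preservation under substitution (Proposition~\ref{prop:weak}). You have simply unpacked that appeal constraint by constraint; in particular, your explicit use of Lemma~\ref{lem:fixp-smallerpi} to reassemble the single-swap judgements $[\Delta]_\fixp \cent \new c_a.\,\swap{a}{c_a}\fixp t\delta$ into $[\Delta]_\fixp \cent \pi \fixp t\delta$ spells out a step the paper leaves implicit.
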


\begin{proof}
Consequence of Theorems~\ref{lem:fresh.fixp} and \ref{th:aleq.to.faleq} and the fact that substitution preserves $\#$, $\aleq$, $\fixp$ and $\faleq$ judgements (see Proposition~\ref{prop:weak}).
\end{proof}

It is easy to see that the algorithm in Table~\ref{table:simpl.rules} is exponential (as the one given in~\cite{Urban2004}): even if we restrict the problems to first-order unification problems (without atoms), the simplification of the following problem requires a number of steps which is exponential with respect to the size of the problem.
$$ h(f(X_0,X_0),..., f(X_{n-1},X_{n-1})) \faleq^?  h(X_1, ..., X_n) $$ 
Comparing this algorithm with the one based on freshness constraints, one notices that there is a correspondence between the simplification rules in both approaches (the term syntax is the same in both cases). Moreover, the  simplification rules for fixed-point constraints work exactly like the simplification rules for freshness constraints (fixed-point rules have linear complexity, same as the simplification rules for freshness constraints in the freshness-based approach). 
The techniques designed to improve the efficiency of the freshness-based nominal unification algorithm (see, e.g.,  \cite{Calves2010a, Levy2010}) can equally  apply to the fixed-point based algorithm. To avoid the exponential complexity, terms should be represented via graphs, and permutations should be applied in a lazy way (see~\cite{Calves2010a, Levy2010} for details). 
We leave the development of efficient implementations to future work and in the rest of the paper we focus on unification modulo equational theories.

\section{Nominal alpha-equivalence modulo equational theories} 
\label{sec:alpha-equivalence-modulo}
In this section an extension of $\alpha$-equality to take into account equational theories will be proposed. 
%The idea is to establish the notion of \theory{E}-$\alpha$-equality, or $\alpha$-equality modulo \theory{E}, where \theory{E} is a particular finite set of axioms. The objective of this extension is to give a first step towards nominal \theory{E}-{\em unification} via fixed-point constraints.
It is well-known that first-order unification modulo an arbitrary equational theory \theory{E} is undecidable, therefore, it is expected that nominal unification modulo  \theory{E} (nominal \theory{E}-unification, for short) inherits this undecidability property. In this work, we propose an approach to deal with particular classes of equational theories, such as, associativity (\theory{A}), commutativity (\theory{C}) and associativity-commutativity (\theory{AC}), using fixed-point constraints.

In the case in which \theory{E} = \theory{A}, \theory{C}, \theory{AC}, or a combination of these theories, an algorithm to check \theory{E}-$\alpha$-equality  via freshness constraints was  proposed in~\cite{Ayala-Rincon2016, ACFONantes_jv19}, where  correctness results were formally verified using the Coq proof assistant, and an implementation in Ocaml was given.  Unification was considered only for  \theory{C} theories, for which it was shown that in general there is no finitary representation of the set of solutions if solutions are represented by freshness contexts and substitutions. 

We argue that the approach of fixed-point constraints is convenient when dealing with unification problems in equational theories that involve some notion of permutation of elements (such as commutativity), as it was shown in the previous version of this paper~\cite{DBLP:conf/rta/Ayala-RinconFN18}.

\subsection{Alpha-equivalence modulo \theory{E} via permutation fixed points}
In this section the relations  $\faleq$ and $\fixp$ will be extended to  $\ealeq{E}$ and $\efixp{E}$, where $\theory{E}$ is some equational theory. Inference rules 
will be parameterised by $\theory{E}$ so they can be reused when other theories are exploited. 

In this work  we have dedicated rules for \theory{A},  \theory{C} and \theory{AC}, and their application depends on whether the signature $\Sigma$ has function symbols satisfying these theories. Whenever we want to restrict the results to a particular theory, we will explicitly indicate the \theory{E}.

Similarly to the case of pure/syntactic $\alpha$-equality  we define notions of \theory{E}-fixed-point and \theory{E}-$\alpha$-equality constraints, as well as \theory{E}-fixed-point contexts and \theory{E}-judgments as expected.

\begin{defi}[\theory{E}-constraints and $\theory{E}$-fixed-point contexts]
\begin{itemize}
\item  An \theory{E}-{\em fixed-point constraint} is a pair of the form $\pi \fixp_{\theory{E}} t$, of a permutation $\pi$ and a term $t$. An \theory{E}-$\alpha$-{\em equality constraint} (for short, \theory{E}-equality constraint or just equality constraint) is a pair of the form $s\ealeq{E} t$, for nominal terms $s$ and $t$. 
\item We call a fixed-point constraint of the form $\pi\efixp{E} X$ a \emph{primitive \theory{E}-fixed-point constraint} and a finite set of such constraints is called  an \theory{E}-{\em fixed-point context}. 
 $\Upsilon, \Psi, \ldots$ range over  contexts. 
\end{itemize}
 \end{defi}
 
 Intuitively,  $s\ealeq{E} t$ will mean that $s$ and $t$ are $\alpha$-equivalent modulo the equational theory \theory{E}, and $\pi\fixp_\theory{E} t$ will mean that the permutation $\pi$ has no effect on the equivalence class of the term $t$ modulo \theory{E}.
 For instance, $\swap{a}{c}\efixp{C}+(a,c)$, assuming + is commutative, but not $\swap{a}{c}\efixp{C}\tf{f}(a,c)$, if $\tf{f} $ is not a commutative symbol.

 Below we assume that commutative symbols are always applied to pairs (although the grammar of nominal terms permits application of function symbols to tuples, we assume a syntactic check is carried out in the case of \theory{C}-symbols).

 \begin{defi}
An \theory{E}-\emph{fixed-point  judgement} is a tuple $\Upsilon \cent \new \overline{c}. \pi \efixp{E} t$ of a fixed-point  context and a fixed-point  constraint, possibly  with some newly quantified atoms, whereas an \theory{E}-$\alpha$-{\em equality judgement} is a tuple $\Upsilon\cent \new \overline{c}.  s \ealeq{E}t$
%$\Upsilon\cent \new \overline{c}. \pi \cdot s \ealeq{E}t$ 
of a fixed-point context and an $\theory{E}$-equality constraint, also   with  some newly quantified atoms $\overline{c}$.

$\theory{E}$-judgements are derived using the rules in Figures \ref{fig.rules.efixp} and \ref{fig.rules.eequal}. 
 \end{defi}
 
 Notice that unlike the syntactical case, where the deduction rules for $ \fixp$ do not use $\faleq$, here
the rules for $\efixp{E}$ and $\ealeq{E}$ are mutually recursive when the theory $\theory{E}$ involves function symbols with commutative and associative-commutativity properties (see rules $\rulefont{\efixp{E}\tf{f}^\theory{C}}$ and $\rulefont{\efixp{E}\tf{f}^\theory{AC}}$.  
We assume that the terms are flattened w.r.t. associative and  associative-commutative function symbols.

Despite the fact that the rules are mutually recursive, the relations are well-defined since the recursion is well-founded. To see this, one can use a measure that interprets each equality judgement  by the pair consisting of the maximum of the sizes of terms in the equality constraint and the symbol $\ealeq{E}$, and each fixed-point judgement by the pair consisting of the
size of the term in the fixed-point constraint and the fixed-point symbol. For example, $\Upsilon \cent \new c. s \ealeq{E} t$ is interpreted by $\pair{max(|s|,|t|)}{\ealeq{E}}$
and $\Upsilon \cent \new c. (a\ b) \efixp{E} t$ 
by $\pair{|t|}{\efixp{E}}$. 
We compare these pairs lexicographically, using the
ordering on natural numbers to compare the 
%multisets of 
sizes of terms, and the ordering  $\efixp{E} > \ealeq{E}$ for the second component of pairs. We can then check that in all the rules in Figures \ref{fig.rules.efixp} and \ref{fig.rules.eequal}  the premises are strictly smaller than the conclusion according to this ordering, therefore the rules provide an inductive definition of the set of derivable judgements.

Rules $\rulefont{\efixp{E} a}, \rulefont{\efixp{E} var}, \rulefont{\efixp{E} abs} $ and $\rulefont{\efixp{E} tuple}$ behave exactly as the corresponding rules in Figure~\ref{fig.rules.fixp} 
(where $\theory{E}= \emptyset$), i.e., the theory \theory{E} has no effect on the fixed-point constraint. 

Rule $\rulefont{\efixp{E}\tf{f}}$ is used for associative and uninterpreted function symbols. In the case of commutative or associative-commutative function symbols the rules  $\rulefont{\efixp{E} f^\theory{C}}$ and  $\rulefont{\efixp{E} f^\theory{AC}}$  are used. The goal is to ensure the analogous of Theorem~\ref{th:fix.alpha} for the relations $\efixp{E}$ and $\ealeq{E}$ that we wish to obtain: $\Upsilon \cent \new \overline{c} . \pi \efixp{E} t$ iff $\Upsilon \cent \overline{c} . \pi \cdot t \ealeq{E} t$ (Theorem~\ref{th:cfix.alpha}). This is illustrated in the example below.

%\mauricio{Notation for tuples is confusing in Fig. \ref{fig.rules.eequal},  $\widetilde{t_i}$ denotes some operation over the $i^{th}$ element of the tuple.   For a tuple $t$ I will suggest we use  $t_{i..j}$  for the subtuple from $i$ to $j$.  So we will be able to reference $t_{1..n}$ and $t'_{1..n}$ in rule \rulefont{\ealeq{E} t},  $s_{1..n}$ and $s_{0..n}$ in rule $\rulefont{ \ealeq{E} \tf{f}^\theory{AC}}$. Maintaining the current notation will require some adjusts since in page 18 it was just mentioned that $(t_1,\ldots, t_n)$ is abbreviated as $(\widetilde{t})_n$. }

 \begin{figure*}[ht]
 \hrule
$$
\begin{array}{@{\hspace{1.2mm}}l@{\hspace{1mm}}r@{\hspace{1.2mm}}}
\begin{prooftree}
\pi(a) = a
\justifies \Upsilon\cent \new \overline{c}.\pi\efixp{E} a
\using \rulefont{\efixp{E} a}
\end{prooftree}
&
\begin{prooftree}
\supp{\pi^{\pi'^{-1}}}\setminus \{\overline{c}\}\subseteq \supp{\perm{\Upsilon|_X}} 
\justifies \Upsilon\cent \new \overline{c}.\pi\efixp{E}\pi'\cdot X
\using \rulefont{\efixp{E} var}
\end{prooftree}
\\[4ex]
\begin{prooftree}
\Upsilon\cent \new \overline{c}.\pi\efixp{E} t
\justifies 
\Upsilon\cent \new \overline{c}.\pi\efixp{E} \tf{f}^\theory{E} t
\using \rulefont{\efixp{E}\tf{f}}, \mbox{if }\tf{f}^\theory{E}= \tf{f}, \tf{f}^\theory{A} %\tf{E} = \emptyset, \theory{A}
\end{prooftree}
&
\begin{prooftree}
\Upsilon\cent \new \overline{c},c_1.\pi\efixp{E} \swap{a}{c_1} \act  t
\justifies 
\Upsilon\cent \new \overline{c}.\pi \efixp{E}  [a]t
\using \rulefont{\efixp{E} abs} 
\end{prooftree}
\\[4ex]
\end{array}
$$
$$
\begin{array}{c}
 \begin{prooftree}
\Upsilon\cent \new \overline{c}.\pi\efixp{E} t_1  \quad \ldots \quad \Upsilon\cent \new \overline{c}.\pi\efixp{C} t_n
\justifies
\Upsilon\cent \new \overline{c}.\pi\efixp{E}  (t_1,\ldots, t_n)
\using \rulefont{\efixp{E} tuple}
\end{prooftree}
\\[4ex]
 \begin{prooftree}
 \Upsilon\cent \new \overline{c}. \pi \cdot (\tf{f}^\theory{C}(t_0, t_1))\ealeq{E} \tf{f}^\theory{C}(t_0, t_1)
 %\Upsilon\cent \new \overline{c}. \pi\act t_1  \ealeq{E}  t_{(i+1)\!\!\!\mod 2}
 \justifies
 \Upsilon\cent \new \overline{c}.\pi\efixp{E} \tf{f}^\theory{C} (t_0,t_1)
\using\rulefont{\efixp{E} \tf{f}^\theory{C}}
 \end{prooftree}
 \\[4ex]
 \begin{prooftree}
\Upsilon\cent \new \overline{c}. \pi  \cdot ( \tf{f}^\theory{AC} (t_0,t_1, \ldots, t_n)) \ealeq{E}  \tf{f}^\theory{AC} (t_0,t_1, \ldots, t_n)
\justifies
 \Upsilon\cent \new \overline{c}.\pi\efixp{E} \tf{f}^\theory{AC} (t_0,t_1, \ldots, t_n)
\using\rulefont{\efixp{E} \tf{f}^\theory{AC}}
 \end{prooftree}
\end{array}
$$
\hrule
\caption{Fixed-point rules modulo \theory{A}, \theory{C}, \theory{AC}. }%In rule \rulefont{\efixp{E}\oplus}, $i=0,1$.}
\label{fig.rules.efixp}
\end{figure*}

\begin{figure*}[ht]
\small
\hrule
$$
\begin{array}{l@{\hspace{-18mm}}r}
\begin{prooftree}
\justifies \Upsilon \cent \new \overline{c}. a \ealeq{E} a
\using \rulefont{\ealeq{E} a}
\end{prooftree}
&
\begin{prooftree}
 \supp{(\pi')^{-1}\circ \pi}\setminus \{\overline{c}\}\subseteq \supp{\perm{\Upsilon|_X}}
 %\Upsilon\cent \new \overline{c}. (\pi')^{-1}\circ \pi \efixp{E} X 
\justifies 
\Upsilon\cent  \new \overline{c}. \pi \cdot X \ealeq{E} \pi'\cdot X
\using \rulefont{\ealeq{E} var} 
\end{prooftree}
\\[4ex]
\begin{prooftree}
\Upsilon\cent\new \overline{c}. s\ealeq{E} \swap{a}{b} \cdot  t \quad \Upsilon\cent  \new \overline{c},c_1.\swap{a}{c_1} \efixp{E} t
\justifies
\Upsilon\cent \new \overline{c}. [a] s \ealeq{E} [b] t
\using \rulefont{\ealeq{E}{ab}}
\end{prooftree}
&
\begin{prooftree}
\Upsilon\cent  \new \overline{c}. t\ealeq{E} t' 
\justifies 
\Upsilon\cent \new \overline{c}. [a]t \ealeq{E} [a]t'
\using \rulefont{\ealeq{E}[a]} 
\end{prooftree}
\end{array}
$$
$$
\begin{array}{c}
\begin{prooftree}
\Upsilon\cent\new \overline{c}.  t\ealeq{E} t'
\justifies 
\Upsilon\cent \new \overline{c}. \tf{f}^\theory{E} t \ealeq{E} \tf{f}^\theory{E} t'
\using \rulefont{\ealeq{E}\tf{f}},
\begin{minipage}{4.0cm}
$\mbox{if } \tf{f}^\theory{E}= \tf{f}, \tf{f}^\theory{A} $
%\theory{E}=\emptyset, A $ 
\mbox{ or }\\
$t,t' \mbox{ are not pairs}$
\end{minipage}
\end{prooftree}
\\[4ex]
\begin{prooftree}
\Upsilon\cent \new \overline{c}.  t_1\ealeq{E} t_1' \quad \ldots \quad\Upsilon \cent \new \overline{c}. t_n\ealeq{E} t_n'
\justifies 
\Upsilon\cent  \new \overline{c}. (\widetilde{t})_{1..n} \ealeq{E} (\widetilde{t'})_{1..n}
\using \rulefont{\ealeq{E} t}   
\end{prooftree}
\\[4ex]
    \begin{prooftree}
\Upsilon \cent \new \overline{c}. s_0\ealeq{E} t_i \quad \Upsilon \cent \new \overline{c}. s_1 \ealeq{E} t_{(i+1)\!\!\!\mod 2}
\justifies
\Upsilon \cent \new \overline{c}.\tf{f}^\theory{C}( s_0, s_1 )\ealeq{E} \tf{f}^\theory{C} (t_0, t_1) 
\using \rulefont{ \ealeq{E} \tf{f}^\theory{C}}%, \mbox{ if } \theory{E}=\theory{C} 
\mbox{ where }  i\in\{0,1\} 
\end{prooftree}
\\[4ex]
    \begin{prooftree}
\Upsilon \cent \new \overline{c}. s_0\ealeq{E} t_i \quad \Upsilon \cent \new \overline{c}.\tf{f}^\theory{AC}(\widetilde{s})_{1..n} \ealeq{E} \tf{f}^\theory{AC}(\widetilde{t})_{0..n}^{-i}
\justifies
\Upsilon \cent \new \overline{c}.\tf{f}^\theory{AC}(\widetilde{s})_{0..n}\ealeq{E} \tf{f}^\theory{AC} (\widetilde{t})_{0..n}  
\using \rulefont{ \ealeq{E} \tf{f}^\theory{AC}} ~~~~~
\begin{minipage}{5.5cm}
\mbox{where } $i \in \{0,\ldots,n\}$,\\ 
$(\widetilde{t_0})_n^{-i}=
(t_0, \ldots, t_{i-1}, t_{i+1}, \ldots, t_n)$
\end{minipage}
\end{prooftree}
\end{array}
$$
\hrule
\caption{Rules for equality modulo  \theory{A}, \theory{C}, \theory{AC}}
\label{fig.rules.eequal}
\end{figure*}

\begin{exa} Let $+$ be a commutative function symbol and suppose that we want to decide whether $\emptyset \cent \swap{a}{b}\fixp_\theory{C} ((a+b) +c)$, which corresponds to $\emptyset \cent \swap{a}{b} \cdot ((a+b)+c) \caleq (a+b)+c$ since $\swap{a}{b}\cdot ((a+b)+c) = (b+a)+c \caleq (a+b)+c$.
In general,  $\Upsilon\cent \pi \fixp_\theory{C} t_0 + t_1$ means that the permutation $\pi$ fixes $t_0+t_1$ modulo \theory{C} (given the information in $\Upsilon$), that is, $\Upsilon \cent \pi \cdot (t_0 + t_1) \caleq t_0 +t_1$. By definition, the permutation $\pi$ distributes homomorphically over the operator $+$, therefore, we have $\Upsilon \cent \pi \cdot t_0 + \pi \cdot t_1 \caleq t_0 +t_1 \caleq t_1 +t_0$. Thus, two cases can be distinguished: $\Upsilon \cent\pi \cdot t_i \caleq t_i$ or $\Upsilon \cent\pi \cdot t_i \caleq t_{(i+1)mod\  2}$, for $i=0,1$ (see rule \rulefont{\caleq f^\theory{C}} in Figure~\ref{fig.rules.eequal}).

%\mauricio{Possibly with the current rule, for $+$ AC,  proving $\emptyset \cent \swap{a}{c}\fixp_\theory{AC} ((a+b) +c)$ is not possible. As it is, the case A will also give issues.}
\end{exa}

Similarly, $\alpha$-equality rules (see Figure~\ref{fig.rules.equal}) have their equational counterpart in Figure~\ref{fig.rules.eequal}: rules for atoms, tuples and abstractions are not affected by the theory $\theory{E}$, whereas rules involving function symbols $\tf{f}^\theory{E}$ have to be analysed separately.
\begin{itemize}
    \item Associative symbols: %$\theory{E}=\theory{A}$: 
    rule $\rulefont{\ealeq{E}\tf{f}^\theory{A}}$ assumes that terms are flattened with respect to nested occurrences of $\tf{f}^\theory{A}$.
    \item Commutative symbols: %$\theory{E}=\theory{C}$: 
    rule $\rulefont{\ealeq{E} \tf{f}^\theory{C}}$ is used. 
    \item Associative-commutative symbols: % $\theory{E}=\theory{AC}$: 
    terms are assumed to be flattened with respect to the $\tf{f}^\theory{AC}$ function symbol, and rule $\rulefont{\ealeq{E} \tf{f}^\theory{AC}}$ is used.
\end{itemize}

\begin{exa}
Consider the signature $\Sigma_\theory{A}=\{\tf{f}^\theory{A}\}\cup \Sigma^\emptyset$, where $\Sigma^\emptyset$ is a set of uninterpreted function symbols. Rules $\rulefont{\efixp{E} \tf{f}^\theory{AC}}$, $\rulefont{\efixp{E} \tf{f}^\theory{C}}, \rulefont{\ealeq{E} \tf{f}^\theory{C}}$ and $ \rulefont{\ealeq{E} \tf{f}^\theory{AC}}$ will not be used in judgements involving $\Sigma_ \theory{A}$-terms,  therefore,  we can  replace $\theory{E}$ for $\theory{A}$ and obtain rules for $\ealeq{A}$ and $\fixp_\theory{A}$. For instance, consider $\tf{f}^{\theory{A}} ( [a]a, \tf{f}^{\theory{A}}(b,X ))$, which is represented in flattened form as
$\tf{f}^{\theory{A}} ( [a]a, b,X )$.

\bigskip
\hspace{-6.5mm}\begin{prooftree}
\begin{prooftree}
\small
\[
     \begin{prooftree}
       \justifies 
       \Upsilon \cent \new c_1. \swap{a}{d}\efixp{A} c_1 
           \using \rulefont{\efixp{A} a}
      \end{prooftree}
\justifies
\Upsilon \cent \swap{a}{d}\efixp{A} [a]a
\using \rulefont{\efixp{A} abs}
\]\hspace{-3mm}
\begin{prooftree} 
     \justifies 
     \Upsilon \cent \swap{a}{d} \efixp{A} b
                \using \rulefont{\efixp{A} a}
     \end{prooftree}\hspace{-3mm}
     \begin{prooftree} 
      \{a,d\}\!\subseteq\! \supp{\perm{\Upsilon|_X}}
     \justifies
     \Upsilon \cent \swap{a}{d} \efixp{A} X
                \using \rulefont{\efixp{A} a}\;\;\rlap{\mbox{\bf ?}} % allow ..
                % .. overlap to make  the overall proof tree
                % fit into the bounding box of the page
                \end{prooftree}
                \justifies
 \Upsilon \cent \swap{a}{d} \efixp{A}  ( [a]a, b,X )
 \using \rulefont{\efixp{A} \tf{tuple}}
 \end{prooftree}
 \justifies 
\Upsilon \cent \swap{a}{d} \efixp{A} \tf{f}^{\theory{A}} ( [a]a, b,X )
\using \rulefont{\efixp{A} \tf{f}^\theory{A}}
\end{prooftree}
\normalsize
\bigskip

 The conclusion of this derivation depends on the support of the permutations in $\Upsilon$, this is illustrated with the question mark `{\bf ?}' in the rightmost leaf in the derivation. For example, if, on one hand, $\Upsilon = \emptyset$, then this derivation fails and we cannot conclude that $\swap{a}{d}$ fixes  $\tf{f}^{\theory{A}}([a]a, b,X)$; if, on the other hand, $\Upsilon = \{\swap{a}{b}\fixp X, \swap{d}{e}\fixp X\}\cup \Upsilon'$, then we could conclude the opposite.
\end{exa}

\begin{exa}
Consider the signature $\Sigma_{\{\theory{C,AC}\}}=\{\tf{\oplus}^\theory{C}, \tf{or}^\theory{AC} \}\cup\{\forall^\emptyset, \tf{g}^\emptyset\}\cup \Sigma^\emptyset$, where $\forall$ and $\tf{g}$ are unary uninterpreted function symbols. To improve readability, we will omit the superscripts of the function symbols in the rest of this example. Since we  only have commutative and associative-commutative symbols, we will replace $\theory{E}$ in the rules by $\{\theory{C,AC}\}$, therefore, obtaining rules for $\ealeq{\{C,AC\}}$ and $\efixp{\{C,AC\}}$. By applying the rules one can verify that
\begin{itemize}
    \item $\emptyset \not \cent \swap{a}{b}\swap{b}{c}\efixp{\{C,AC\}} (\tf{g}(a) \oplus \tf{g}(b)) \oplus \tf{g}(c)$, this is due to the fact that $\oplus$ is commutative but not associative, and the permutation $\swap{a}{b}\swap{b}{c}$ swaps the atom $a$ which is an argument of the inner $\oplus$ with the atom $c$ which is an argument of the outer 
    $\oplus$;
    \item $\emptyset \cent \swap{a}{b}\swap{b}{c}\efixp{\{C,AC\}} \tf{or}(\tf{or}(\tf{g}(a),\tf{g}(b)), \tf{g}(c))$;
    \item  $\emptyset \cent \swap{a}{b}\efixp{\{C,AC\}}\forall[a]\tf{or}(\tf{or}((a\oplus b),( b \oplus c)), (a \oplus c))  $; 
    \item $\swap{a}{b}\efixp{\{C,AC\}} X \cent\forall[a]\tf{or}(\tf{or}((a \oplus X) , g(c)),g(a))\ealeq{\{C,AC\}} \forall[b]\tf{or}(\tf{or}(g(c), (\swap{a}{b} \cdot X \oplus b)),g(b))$.
\end{itemize}
\end{exa}

The theorem below extends Theorem~\ref{th:fix.alpha}, relating fixed-point constraints to fixed-point equalities,
for the case in which equational theories $\theory{A}, \theory{C}$ and $ \theory{AC}$ are involved.
\begin{thm}\label{th:cfix.alpha}
Let $\Upsilon, \pi$ and $t$ be an \theory{E}-fixed-point context, a permutation and a nominal term, respectively. $\Upsilon\cent \new \overline{c}. \pi\efixp{E} t $ iff $\Upsilon\cent \new \overline{c}. \pi \cdot t\ealeq{E} t$.
\end{thm}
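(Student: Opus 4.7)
My plan is to adapt the proof of Theorem~\ref{th:fix.alpha} to the equational setting, proceeding by induction on the structure of $t$ (equivalently on the derivation, using the well-founded lexicographic measure on $\langle\mathit{size}(t), \efixp{E}/\ealeq{E}\rangle$ mentioned right after the $\theory{E}$-rules are introduced). The fact that the rules for $\efixp{E}$ and $\ealeq{E}$ are mutually recursive through the $\theory{C}$ and $\theory{AC}$ cases is not an obstacle because, for those two cases, the rule premise is \emph{literally} the equality side of the biconditional we want to prove. Before starting I would state two routine preliminaries: an Inversion lemma for $\efixp{E}$ and $\ealeq{E}$ (immediate: for each syntactic class of $t$ there is only one applicable rule, once $\theory{E}$ is fixed), and an Equivariance lemma for $\efixp{E}$ and $\ealeq{E}$ (proved exactly as in Lemma~\ref{lem:equivariance}, with one extra straightforward case per new rule).

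For the induction, the cases $t = a$, $t = \pi'\act X$, $t = \tf{f}\,t'$ or $t = \tf{f}^\theory{A}\, t'$, $t = (t_1,\dots,t_n)$, and $t = [a]t'$ are treated exactly as in the proof of Theorem~\ref{th:fix.alpha}: the atom, variable, uninterpreted/associative function symbol, and tuple cases are direct, and the abstraction case again uses Equivariance together with the identity $\swap{\pi(a)}{c_1}\act(\swap{a}{c_1}\act t') = \swap{\pi(a)}{a}\act t'$ together with Lemma~\ref{lem:fixp-smallerpi} (which extends verbatim to $\efixp{E}$) to move between $\Upsilon\cent\new\overline{c},c_1.\pi\efixp{E} \swap{a}{c_1}\act t'$ and the pair of judgements obtained by inverting $\rulefont{\ealeq{E} ab}$. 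Nothing in the $\theory{E}$ rules interferes with these manipulations, since none of them apply to atoms, variables, abstractions, tuples, or pure/$\theory{A}$-headed terms.

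The genuinely new cases are $t = \tf{f}^\theory{C}(t_0,t_1)$ and $t = \tf{f}^\theory{AC}(t_0,\dots,t_n)$, and for these the theorem is essentially definitional: the single rule $\rulefont{\efixp{E} \tf{f}^\theory{C}}$ has premise $\Upsilon\cent\new\overline{c}.\ \pi\act(\tf{f}^\theory{C}(t_0,t_1))\ealeq{E}\tf{f}^\theory{C}(t_0,t_1)$ and conclusion $\Upsilon\cent\new\overline{c}.\ \pi\efixp{E}\tf{f}^\theory{C}(t_0,t_1)$; the forward direction is immediate application of the rule, and the backward direction is immediate from the Inversion lemma (there being no other rule for an $\tf{f}^\theory{C}$-headed term). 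The $\tf{f}^\theory{AC}$ case is identical in shape.

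The main obstacle I anticipate is purely administrative rather than mathematical: verifying that Equivariance and Inversion really do go through uniformly across the new rules, in particular checking that the $\ealeq{E}\tf{f}^\theory{C}$ and $\ealeq{E}\tf{f}^\theory{AC}$ rules (with their case splits over the index $i$) do not spoil the use of Equivariance in the abstraction case, and that the well-founded measure justifying the mutual recursion genuinely decreases on every premise, so that the structural induction is legitimate despite the fact that the $\efixp{E}\tf{f}^\theory{C}$ rule does not reduce the size of the term. Once those auxiliary facts are in place, the proof amounts to reusing the argument of Theorem~\ref{th:fix.alpha} pointwise in each case, with the $\theory{C}$ and $\theory{AC}$ cases absorbed into the definition of the rules.
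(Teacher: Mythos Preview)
Your proposal is correct and follows essentially the same approach as the paper: reuse the argument of Theorem~\ref{th:fix.alpha} for atoms, variables, tuples, abstractions and uninterpreted/associative symbols, and observe that for $\tf{f}^\theory{C}$ and $\tf{f}^\theory{AC}$ the biconditional is definitional because the rule premise is exactly the equality judgement. Your explicit mention of extending Inversion, Equivariance and Lemma~\ref{lem:fixp-smallerpi} to the $\theory{E}$-setting is more careful than the paper, which simply invokes Theorem~\ref{th:fix.alpha} and then treats the $\theory{C}$/$\theory{AC}$ cases as trivial.
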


\begin{proof}
%MF The proof below was written as if we cannot mix symbols from different theories in a term (the proof considers separately the cases A, C or AC theories). But the theory could contain a mixture of A, C, AC symbols ans in the example above where we have a C and an AC symbol so we cannot simply split into a case for E = C and a case for AC, we need to consider the mixture...
% I'm rewriting the proof below.
% In the case $\theory{E}= \theory{A}$ the proof is exactly as the proof of Theorem \ref{th:fix.alpha}. Below we analyse the cases in which $\theory{E}=\theory{C}$ or $\theory{AC}$.

% \begin{itemize}
%     \item $\theory{E}= \theory{C}$
     
%      Suppose that $\Upsilon \cent \new \overline{c}. \pi \efixp{E} \tf{f}^\theory{C} (t_0, t_1)$, therefore, rule $\rulefont{\efixp{E} \tf{f}^\theory{C}}$ was applied, and one gets $\Upsilon\vdash \new \overline{c}. \pi \cdot (\tf{f}^\theory{C} (t_0, t_1)) \ealeq{E} \tf{f}^\theory{C} (t_0, t_1) $, and the result follows trivially. The other direction is also trivial.
%     \item $\theory{E}= \theory{AC}$.

%      Suppose that $\Upsilon \cent \new \overline{c}. \pi \efixp{E} \tf{f}^\theory{AC} (t_0,\ldots,  t_n)$, therefore, rule $\rulefont{\efixp{E} \tf{f}^\theory{AC}}$ was applied, and one gets $\Upsilon\vdash \new \overline{c}.  \pi \cdot (\tf{f}^\theory{AC} (t_0,\ldots,  t_n)) \ealeq{E} \tf{f}^\theory{AC} (t_0,\ldots,  t_n) $, and the result follows trivially. The other direction is also trivial.
%      \end{itemize}

The proof is  by induction on the derivation,  similar to  the proof of Theorem \ref{th:fix.alpha}, except for the use of rules dealing with function symbols modulo $\theory{E}$. The rule for  associative symbols is the same as for syntactic symbols; we consider the cases of commutative and associative-commutative symbols.

\begin{itemize}
    \item %$\theory{C}$:
     
     Suppose that $\Upsilon \cent \new \overline{c}. \pi \efixp{E} \tf{f}^\theory{C} (t_0, t_1)$, therefore, rule $\rulefont{\efixp{E} \tf{f}^\theory{C}}$ was applied, and one gets $\Upsilon\vdash \new \overline{c}. \pi \cdot (\tf{f}^\theory{C} (t_0, t_1)) \ealeq{E} \tf{f}^\theory{C} (t_0, t_1) $, and the result follows trivially. The other direction is also trivial.
    \item %$\theory{AC}$:

     Suppose that $\Upsilon \cent \new \overline{c}. \pi \efixp{E} \tf{f}^\theory{AC} (t_0,\ldots,  t_n)$, therefore, rule $\rulefont{\efixp{E} \tf{f}^\theory{AC}}$ was applied, and one gets $\Upsilon\vdash \new \overline{c}.  \pi \cdot (\tf{f}^\theory{AC} (t_0,\ldots,  t_n)) \ealeq{E} \tf{f}^\theory{AC} (t_0,\ldots,  t_n) $, and the result follows trivially. The other direction is also trivial.
     \qedhere
     \end{itemize}
\end{proof}

%%%%%%%%%%%%%%%%%%%%%%%%%%%%%%%

\subsection{From freshness to \theory{E}-fixed-point constraints and back again}

In~\cite{Ayala-Rincon2018}  relations $\appAA$, $\appAC$, $\appAAC$ and their combination $\appAll$ were defined as extensions of $\aleq$ using the standard approach via freshness constraints (see the
rules in Figures~\ref{fig.rules.freshness} and \ref{fig.rules.equaf}) but using specific rules
for associative, commutative and associative-commutative symbols. We recall those rules in Figure~\ref{fig:alpha_C_equivalence}.

%MF: In the rules below we need to use the same relation symbol in all the rules otherwise each rule defines a different relation and we cannot deal with a term that contains symbols from different theories! I'm changing all the symbols to $\appAll$ as in the TCS paper.
\begin{figure}[http]
\figureboxed{
\[
%\small
\begin{array}{c}
\begin{prooftree}
\nabla \vdash s \appAll t %{\appAC}  t
\justifies \nabla \vdash  \tf{f}^\theory{E}\,s \appAll  \tf{f}^\theory{E}\,t %\appAC  \tf{f}^\theory{E}\,t 
\using \rulefont{\appAll  app},  %\rulefont{\appAC  app}, 
\mbox{ if } \theory{E} \notin \{\theory{A,C,AC}\}
\mbox{ or both } s \mbox{ and } t \mbox{ are not pairs}\;
\end{prooftree}
\\[3ex]
\begin{prooftree}
\nabla \vdash s_0 \appAll t_0 \qquad \tf{f}^\theory{A}(\widetilde{s})_{1..n} \appAll \tf{f}^\theory{A} (\widetilde{t})_{1..n}
\justifies
\nabla \vdash \tf{f}^\theory{A}(\widetilde{s})_{0..n} \appAll \tf{f}^\theory{A}(\widetilde{t})_{0..n}
\using \rulefont{\appAll \theory{A}}
\end{prooftree}
\\[3ex]
\begin{prooftree}
\nabla \vdash s_0 \appAll t_i \quad \nabla \vdash s_1 \appAll  t_{(i+1)\,mod\, 2}  \quad i = 0, 1
\using \rulefont{\appAll  \theory{C}} 
\justifies
\nabla \vdash \tf{f}^\theory{C}(s_0, s_1) \appAll  \tf{f}^\theory{C}(t_0, t_1)
\end{prooftree}
\\[3ex]
\begin{prooftree}
\nabla \vdash s_0 \appAll  t_i \quad \nabla \vdash\tf{f}^\theory{AC} (\widetilde{s})_{1..n} \appAll  \tf{f}^{AC}(\widetilde{t})_{0..n}^{-i}
\using \rulefont{\appAll  \theory{AC}} 
\justifies
\nabla \vdash \tf{f}^\theory{AC}(\widetilde{s})_{0..n} \appAll \tf{f}^\theory{AC}(\widetilde{t})_{0..n}
\end{prooftree}
\end{array}
\]
}
\caption{Additional rules for equational $\alpha$-equivalence via freshness constraints} 
\label{fig:alpha_C_equivalence}
%\vspace{-7mm}
\end{figure}

  Using a generalisation of the functions $[\_ ]_\fixp$ and $[\_ ]_\#$ defined in Section~\ref{sec:fresh.to.fixp}, we can obtain results that extend 
  Theorem~\ref{lem:fresh.fixp} and Theorem~\ref{th:aleq.to.faleq} to the equational case. The functions $[\_ ]_\fixp^\theory{E}$ and $[\_]_\#^\theory{E}$ defined below  are the natural extension of the previous translation 
  functions.

  The mapping $[\_]_\fixp^\theory{E}$ associates each primitive freshness constraint in $\Delta$ with a primitive fixed-point constraint:
  $$ [a \# X ]_\fixp^\theory{E} = \swap{a}{c_a}\fixp_E X  \mbox{ where } c_a \mbox{ is a new name}$$
This mapping extends directly to contexts. We denote by $[\Delta]_\fixp^\theory{E}$ the image of $\Delta$ under $[\_]_\fixp^\theory{E}$.

The mapping $[\_]_\#^\theory{E}$ associates each primitive fixed-point constraint in $\Upsilon$ with a primitive freshness context:
$$[\pi \fixp_\theory{E} X ]_\#^\theory{E} = \supp{\pi}\# X. $$
We denote by $[\Upsilon]_\#^\theory{E}$ the union of the freshness contexts obtained by translating each constraint in $\Upsilon$ using $[\_]_\#^\theory{E}$.
 
 \begin{thm}
 \label{lem:fresh.fixp.E}
 \begin{enumerate}
     \item 
 $\Delta\cent a\# t \Leftrightarrow [\Delta]^\theory{E}\cent \new c.\swap{a}{c}\efixp{E} t$.
 
% where   $[\Delta]_\fixp^\theory{E}= \{\pi\efixp{E} X\ | \ \pi \fixp X \in [\Delta]_\fixp\}$.

\item $\Upsilon\cent \new \overline{c}. \pi\fixp_\theory{E} t \Leftrightarrow [\Upsilon]_\#^\theory{E}\cent \overline{\supp{\pi}\setminus\{\overline{c}\} \# t}$.
\end{enumerate}
  \end{thm}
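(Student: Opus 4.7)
The plan is to extend the proof of Theorem~\ref{lem:fresh.fixp} by induction on derivations, handling each new case introduced by the equational rules of Figures~\ref{fig.rules.efixp} and \ref{fig.rules.eequal}. For part~(1) I would induct on the derivation of $\Delta\vdash a\#t$ in the forward direction and on the derivation of $[\Delta]^\theory{E}_\fixp\vdash \new c.\swap{a}{c}\efixp{E} t$ in the backward direction, mirroring the structure of the syntactic proof; part~(2) is handled by analogous inductions on arbitrary $\pi$, with the freshness conclusion obtained pointwise for each $b\in\supp{\pi}\setminus\{\overline{c}\}$.

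Most cases carry over essentially unchanged from the syntactic proof. The freshness rules of Figure~\ref{fig.rules.freshness} are independent of $\theory{E}$, and the fixed-point rules $\rulefont{\efixp{E} a}$, $\rulefont{\efixp{E} var}$, $\rulefont{\efixp{E} abs}$, $\rulefont{\efixp{E} tuple}$ are structurally identical to their syntactic counterparts. The rule $\rulefont{\efixp{E} \tf{f}}$ also covers the associative case (after flattening), so the original argument applies verbatim. The substantively new cases are $\rulefont{\efixp{E} \tf{f}^\theory{C}}$ and $\rulefont{\efixp{E} \tf{f}^\theory{AC}}$, which are defined via $\ealeq{E}$ rather than directly on the subterm structure.

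For the $\tf{f}^\theory{C}$ case of part~(1), my plan is to invoke Theorem~\ref{th:cfix.alpha} to switch between the fixed-point and equality formulations, then inspect the two subcases produced by $\rulefont{\ealeq{E} \tf{f}^\theory{C}}$ indexed by $i\in\{0,1\}$. When $i=0$ the components satisfy $\swap{a}{c}\cdot t_j \ealeq{E} t_j$, giving $\swap{a}{c}\efixp{E} t_j$ via Theorem~\ref{th:cfix.alpha}, to which the induction hypothesis applies directly. The delicate subcase is $i=1$: from $\swap{a}{c}\cdot t_0\ealeq{E} t_1$ (and symmetrically) I plan to exploit the fact that $c$ is newly generated, so $c\notin\supp{t_0}\cup\supp{t_1}$, combined with the support-preservation consequences of Lemma~\ref{lem:supp-fixp-faleq}: if $a$ lay in $\supp{t_0}$ then $c$ would lie in $\supp{\swap{a}{c}\cdot t_0}=\supp{t_1}$, contradicting newness. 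Hence $a$ is fresh for both components.

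The AC case is handled analogously by inspecting the $(n+1)$ subcases of $\rulefont{\ealeq{E} \tf{f}^\theory{AC}}$, with the same support-preservation argument disposing of all cross-matching subcases. The main technical obstacle will be that the mutual recursion between $\efixp{E}$ and $\ealeq{E}$ in the AC rule means the induction is not on strict subterms of $t$; I plan to base the induction on the lexicographic measure sketched in the excerpt following Figure~\ref{fig.rules.eequal} (pair of term-size with the ordering $\efixp{E}>\ealeq{E}$), which makes the recursion well-founded and keeps the case analysis manageable. Part~(2) then follows from the same case analysis but indexed over all $b\in\supp{\pi}\setminus\{\overline{c}\}$, invoking Lemma~\ref{lem:equivariance} where needed to align swappings produced by the translation $[\_]^\theory{E}_\#$.
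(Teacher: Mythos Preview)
Your proposal is correct and follows essentially the same route as the paper: both extend the proof of Theorem~\ref{lem:fresh.fixp} by induction on derivations, note that only the $\tf{f}^\theory{C}$ and $\tf{f}^\theory{AC}$ cases are genuinely new, and handle those by passing through Theorem~\ref{th:cfix.alpha} to convert between $\efixp{E}$ and $\ealeq{E}$, with the direct subcase ($i=0$) falling to the induction hypothesis and the cross-matching subcase ($i=1$) dispatched via the newness of $c$. Your argument for the $i=1$ subcase is in fact more explicit than the paper's (which simply observes that the equality holds ``for any new name $c$''); just be aware that Lemma~\ref{lem:supp-fixp-faleq}, which you cite for support preservation, is stated only for the syntactic relations $\faleq$ and $\fixp$, so strictly you would need its $\theory{E}$-analogue---but this follows from Theorem~\ref{th:cfix.alpha} by the same transitivity-plus-equivariance reasoning, and the paper's own proof leaves this step equally implicit.
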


 \begin{proof} The proof follows the same lines of the proof of Theorem~\ref{lem:fresh.fixp}. We discuss only the proof of  part (1).
  The proof is by induction on rules of Figure~\ref{fig.rules.freshness} used in the derivation of $\Delta \cent a\# t$. We show only the cases corresponding to  function symbols.  

\noindent $(\Longrightarrow)$
\begin{itemize}
     \item Rule $\rulefont{\# f}$
     
     In this case $t=\tf{f}^\theory{E}t'$, for some theory $\theory{E}$. The analysis is based on the specific theory:
      \begin{enumerate}
          \item $\tf{f}^\theory{E}=\tf{f}^\theory{A}$
             
             This case is analogous to the case in which $\theory{E}=\emptyset$.

           \item $\tf{f}^\theory{E}=\tf{f}^\theory{C}$

          There is a proof $\Pi$ of the form
          
           \begin{prooftree}
                   \[\Pi 
          \justifies
          \Delta \cent a \#  (t_1,t_2)\]
          \justifies
          \Delta \cent a \# \tf{f}^\theory{C} (t_1,t_2)
          \using \rulefont{\# f} %\rulefont{\efixp{E}\tf{f}^\theory{C}}
           \end{prooftree}
          
         \bigskip
        
        By induction hypothesis, there exists a proof $\Pi'$ of $[\Delta]^\theory{E}_\fixp\cent \new c_a. \swap{a}{c_a}\efixp{E}(t_1,t_2)$. Therefore,  there is a proof of the form 
        
         \begin{prooftree}
         \begin{prooftree}
        \Pi_1^{'}
         \justifies 
         [\Delta]^\theory{E}_\fixp\cent \new c_a. \swap{a}{c_a}\efixp{E}t_1
         \end{prooftree} 
         \quad 
         \begin{prooftree}
         \Pi_2^{'}
         \justifies 
         [\Delta]^\theory{E}_\fixp\cent \new c_a. \swap{a}{c_a}\efixp{E}t_2
         \end{prooftree}
          \justifies
          [\Delta]^\theory{E}_\fixp\cent \new c_a. \swap{a}{c_a}\efixp{E}(t_1,t_2)
          \using 
          \rulefont{\efixp{E} tuple}
         \end{prooftree}
         \bigskip

        From $[\Delta]^\theory{E}_\fixp\cent \new c_a. \swap{a}{c_a}\efixp{E} t_i$ one can derive that $[\Delta]^\theory{E}_\fixp\cent \new c_a. \swap{a}{c_a}\cdot t_i \ealeq{E}t_i$, by Theorem~\ref{th:cfix.alpha}, for ($i=1,2$). By applying rule $\rulefont{\ealeq{E} \tf{f}^\theory{C}}$, it follows that  $$[\Delta]^\theory{E}_\fixp\cent \new c_a. \swap{a}{c_a}\cdot (\tf{f}^\theory{C}(t_1, t_2)) \ealeq{E}\tf{f}^\theory{C}(t_1, t_2),$$
        
        and the result follows from Theorem~\ref{th:cfix.alpha}. 
        % Therefore, there exist $\mathcal{D}_1$ and $\mathcal{D}_2$ such that 
        
        % \begin{prooftree}
        % \[\mathcal{D}_1
        % \justifies 
        % [\Delta]^\theory{C}_\fixp\cent \new c_a. \swap{a}{c_a}\cdot t_1 \ealeq{C}t_1\]
        % \quad 
        % \[\mathcal{D}_2
        % \justifies 
        % [\Delta]^\theory{C}_\fixp\cent \new c_a. \swap{a}{c_a}\cdot t_2 \ealeq{C}t_2\]
        % \justifies
        % [\Delta]^\theory{C}_\fixp\cent \new c_a. \swap{a}{c_a}\cdot \efixp{C}\tf{f}^\theory{E}(t_1,t_2)
        % \using{\rulefont{\efixp{C}\tf{f}^\theory{E}}}
        % \end{prooftree}
         \item $\tf{f}^\theory{E}=\tf{f}^\theory{A}$ %$ \theory{E}=\theory{AC}$
         
         The proof is analogous to the case above. 
         
            % \item $\theory{E}\notin\{\theory{A,C,AC}\}$
            % This case is analogous case for $\rulefont{\#\tf{f}}$ in the proof of Lemma~\ref{lem:fresh.fixp}.
 \end{enumerate}
\end{itemize}
 
 ($\Longleftarrow$) The interesting case is again for rule $\rulefont{\efixp{E}\tf{f}^\theory{C}}$.
 
 Suppose that $[\Delta]_\fixp^\theory{E} \cent \new c. \swap{a}{c} \efixp{E} t_1 \oplus t_2$. We want to prove that $\Delta \cent a \# t_1\oplus t_2$. From rule $\rulefont{\efixp{E}\tf{f}^\theory{C}}$ we can conclude 
 that
 
 \begin{itemize}
     \item either there exist proofs of  $[\Delta]_\fixp^\theory{E} \cent \new c. \swap{a}{c} \cdot t_1 \ealeq{E} t_1$ and $[\Delta]_\fixp^\theory{E} \cent \new c. \swap{a}{c} \cdot t_2 \ealeq{E} t_2$, and by Theorem~\ref{th:cfix.alpha} it follows that there exist proofs of  $[\Delta]_\fixp^\theory{E} \cent \new c. \swap{a}{c} \efixp{E} t_1 $ and $[\Delta]_\fixp^\theory{E} \cent \new c. \swap{a}{c} \efixp{E}t_2$, and the result follows by induction hypothesis.

     \item  or there exist proofs of  $[\Delta]_\fixp^\theory{E} \cent \new c. \swap{a}{c} \cdot t_1 \ealeq{E} t_2$ and $[\Delta]_\fixp^\theory{E} \cent \new c. \swap{a}{c} \cdot t_2 \ealeq{E} t_1$.
     
     Since these equalities hold for any new name $c$ (not occurring in $t_1$ or $t_2$), then $a$ cannot be in the support of $t_1$ and $t_2$ and therefore $a$ is fresh in $t_1$ and $t_2$.
     \qedhere
     \end{itemize}
 \end{proof}

We can now relate $\alpha$-equivalence modulo $\theory{E}$  via freshness constrains ($\appAE$) with its version via fixed-point constraints ($\ealeq{E}$).

%\red{We need to put new quantifier in the statement of theorem below, but I am not sure about number (1).}
 \begin{thm}\label{thm:aleq.to.appAc}
 \begin{enumerate}
 \item $\Upsilon\cent \new\overline{c}. \ s\ealeq{E} t \Rightarrow [\Upsilon]_\#^\theory{E}\cup \Delta \cent s\appAE t$, where $\Delta \cent \overline{c}\# \var{s, t} $.
\item $\Delta\cent s\appAE t \Rightarrow [\Delta]^\theory{E}_\fixp \cent s\ealeq{E}  t$.%, where $[\Delta]_\fixp^C= \{\pi\fixp_C X\ | \ \pi \fixp X \in [\Delta]_\fixp\}$.
 \end{enumerate}
 \end{thm}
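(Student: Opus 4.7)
\medskip

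\noindent\textbf{Proof plan.} Both parts are established by induction on the given derivation; part~(1) uses the rules in Figures~\ref{fig.rules.efixp} and~\ref{fig.rules.eequal}, part~(2) uses the rules in Figures~\ref{fig.rules.freshness}, \ref{fig.rules.equaf} and~\ref{fig:alpha_C_equivalence}. The theorem extends Theorem~\ref{th:aleq.to.faleq} to the equational setting, and the strategy mirrors that proof; the novelty lies entirely in the cases for the equational rules $\rulefont{\appAll\theory{A}}$, $\rulefont{\appAll\theory{C}}$, $\rulefont{\appAll\theory{AC}}$ and their fixed-point counterparts $\rulefont{\ealeq{E}\tf{f}^\theory{C}}$ and $\rulefont{\ealeq{E}\tf{f}^\theory{AC}}$. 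The translation of freshness premises into fixed-point premises (and back) is supplied by Theorem~\ref{lem:fresh.fixp.E}.

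\medskip

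\noindent\textbf{For part~(2),} I would proceed by induction on the derivation of $\Delta\cent s\appAE t$. The cases for rules $\rulefont{\appAE a}$, $\rulefont{\appAE var}$, $\rulefont{\appAE [a]}$, $\rulefont{\appAE tuple}$ and $\rulefont{\appAE ab}$ are handled exactly as in the proof of Theorem~\ref{th:aleq.to.faleq}: the diffset condition of $\rulefont{\appAE var}$ becomes the support condition of $\rulefont{\ealeq{E} var}$ by the definition of $[\_]^\theory{E}_\fixp$, and the freshness premise of $\rulefont{\appAE ab}$ is translated into a fixed-point premise via Theorem~\ref{lem:fresh.fixp.E}(1). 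For $\rulefont{\appAll\theory{C}}$ (with premises $\nabla\vdash s_0\appAE t_i$ and $\nabla\vdash s_1\appAE t_{(i+1)\bmod 2}$), the induction hypothesis gives $[\Delta]^\theory{E}_\fixp\cent s_0\ealeq{E} t_i$ and $[\Delta]^\theory{E}_\fixp\cent s_1\ealeq{E} t_{(i+1)\bmod 2}$, and one application of $\rulefont{\ealeq{E}\tf{f}^\theory{C}}$ yields the conclusion. The cases for $\rulefont{\appAll\theory{A}}$ and $\rulefont{\appAll\theory{AC}}$ are handled analogously, using $\rulefont{\ealeq{E} t}$ (under the flattening convention) and $\rulefont{\ealeq{E}\tf{f}^\theory{AC}}$ respectively.

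\medskip

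\noindent\textbf{For part~(1),} I would perform induction on the derivation of $\Upsilon\cent\new\overline{c}.\ s\ealeq{E} t$. The base cases and the structural cases (atoms, tuples, abstraction with the same binder, uninterpreted or associative function symbols) are again analogous to the proof of Theorem~\ref{th:aleq.to.faleq}. For $\rulefont{\ealeq{E}var}$, every atom $a\in\supp{(\pi')^{-1}\circ\pi}$ is either in $\{\overline{c}\}$, in which case $a\#X\in\Delta$ for each $X\in\var{s,t}$ by hypothesis, or it lies in $\supp{\perm{\Upsilon|_X}}$, in which case $a\#X\in[\Upsilon]^\theory{E}_\#$ by the definition of $[\_]^\theory{E}_\#$; since $\diffs{\pi}{\pi'}=\supp{(\pi')^{-1}\circ\pi}$, rule $\rulefont{\appAE var}$ applies. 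For $\rulefont{\ealeq{E}ab}$, the induction hypothesis delivers the equality premise of $\rulefont{\appAE ab}$, while Theorem~\ref{lem:fresh.fixp.E}(2) translates the fixed-point premise $\Upsilon\cent\new\overline{c},c_1.\swap{a}{c_1}\efixp{E} t$ into the freshness premise $[\Upsilon]^\theory{E}_\#\cent a\# t$ (after removing the new name $c_1$). For $\rulefont{\ealeq{E}\tf{f}^\theory{C}}$ and $\rulefont{\ealeq{E}\tf{f}^\theory{AC}}$, the induction hypothesis directly yields the premises of $\rulefont{\appAll\theory{C}}$ and $\rulefont{\appAll\theory{AC}}$, which are then applied to conclude.

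\medskip

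\noindent\textbf{Anticipated difficulty.} The main subtlety is the bookkeeping of the newly quantified atoms $\overline{c}$ in part~(1): one must carefully check that wherever a new atom $c_1$ is introduced by rule $\rulefont{\ealeq{E}ab}$ (or its fixed-point sibling), the translation via Theorem~\ref{lem:fresh.fixp.E} removes it cleanly, and that the remaining freshness constraints about $\overline{c}$ can be supplied by $\Delta$ (which is guaranteed by $\Delta\cent\overline{c}\#\var{s,t}$, together with Weakening to extend $\Delta$ when additional variables are introduced by substitutions in subterms). No genuinely new argument is required; the equational rules have been designed in both calculi precisely to make this translation go through.
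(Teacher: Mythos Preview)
Your proposal is correct and follows essentially the same approach as the paper: induction on the derivation, with the non-equational cases handled exactly as in Theorem~\ref{th:aleq.to.faleq} and the cases for $\tf{f}^\theory{C}$, $\tf{f}^\theory{AC}$ handled by applying the induction hypothesis to the premises and then the corresponding rule on the other side, invoking Theorem~\ref{lem:fresh.fixp.E} to translate the freshness/fixed-point side conditions. The paper's own proof is in fact just a two-line sketch pointing to Theorems~\ref{th:aleq.to.faleq} and~\ref{lem:fresh.fixp.E}, so your write-up is more detailed but not different in substance; one small inaccuracy is the remark about ``substitutions in subterms'' in your anticipated difficulty, since no substitutions are applied in these derivations and plain Weakening on the freshness context suffices.
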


\begin{proof}
The proof is very similar to the proof of Theorem~\ref{th:aleq.to.faleq}, except for the case of rules involving function symbols $\tf{f}^\theory{E}$, with $\theory{E}
\neq \emptyset$, where the reasoning is similar to the one in the proof of Theorem~\ref{lem:fresh.fixp.E}.
\end{proof}
% 
% \begin{thm}
% \begin{enumerate}
% \item $\Upsilon\cent s\caleq t \Rightarrow [\Upsilon]_\# \cent s\aleq_C t$.
%\item $\Delta\cent s\aleq_C t \Rightarrow [\Delta]^C_\fixp \cent s\caleq t$, where $[\Delta]_\fixp^C= \{\pi\fixp_C X\ | \ \pi \fixp X \in [\Delta]_\fixp\}$.
% \end{enumerate}
% \end{thm}
% 
 
%DN: I think we can obtain a result like this.. but I have to investigate the derivation rules in the LOPSTR paper to be sure. Should we add the theorem? If you have time to check and sketch some proof. 
%\begin{thm}
%$\caleq$ coincides with $\aleq_C$ on ground terms, that is, $\cent s \aleq_C t \iff  \cent s \caleq t$.
%More generaly, 
%\begin{enumerate}
%\item $\Delta \cent s\aleq_C t \iff [\Delta]_\fixp\cent s\caleq t$.
%\item $\Upsilon \cent s\caleq t \iff [\Upsilon]_\#\cent s\aleq_C t$.
%\end{enumerate}
%\end{thm}

\subsection{Solving nominal \theory{C}-unification problems via fixed-point constraints}

 %Nominal unification modulo equational theories became a topic of interest recently~\cite{ARFNantes2015,Kutsia2016}. In particular, nominal unification modulo commutativity was studied in the works~\cite{Ayala-Rincon2018,Ayala2017,Ayala-Rincon2016}. 
 In this section we propose an approach to nominal unification modulo commutativity via the notion of fixed-point constraints. 
 
 For example, assuming $+$ is commutative, i.e., $X+Y = Y+X$, a  problem of the form
\begin{equation}\label{eq:c.unif.prob}
+(\swap{a}{b}\cdot X,a)\faleq^? +(Y,X)
\end{equation}
can be solved by unifying $\swap{a}{b}\act X$ with $Y$ and $a$ with $X$, or
$\swap{a}{b}\act X$ with $X$ and $a$ with $Y$.

 In~\cite{Ayala-Rincon2018}, a simplification algorithm for solving nominal \theory{C}-unification was proposed. This algorithm  was based on the standard nominal unification algorithm~\cite{Urban2004} where $\alpha$-equivalence is defined w.r.t. the notion of freshness. Upon the input of a unification problem $\probl$, the algorithm outputs a finite family of triples of the form $\pair{\nabla, \sigma}{P}$, where $\nabla$ is a freshness context, $\sigma$ a substitution and $P$ is a set of fixed-point equations, which are solved using a separate procedure.
 
 In \cite{Ayala2017} it is proved that even a simple 
 unification problem such as $\swap{a}{b}\act X \aleq X$ (i.e., a problem consisting of just one fixed-point equation) could produce an infinite and independent set of solutions, whenever the signature contains commutative function symbols. For example, if $f$ is commutative, the following substitutions solve this equation: $\{X \mapsto a+b, X \mapsto f(a+b), X \mapsto [e](a+b,b+a),\ldots \}$. Therefore, it is not possible to obtain a finite and complete set of solutions  for every solvable unification problem if solutions are expressed using freshness constraints and substitutions.
 However, we remark that the problem $+(\swap{a}{b}\cdot X,a)\faleq^? (Y,X)$ mentioned above has in fact a finite number of most general solutions (indeed, two) if we solve it using fixed-point constraints. The most general unifiers are $\{X \mapsto a, Y \mapsto b\}$ and $\{Y \mapsto a, \swap{a}{b}\fixp X\}$. This observation led us to use fixed-point constraints instead of freshness constraints to express solutions.

 Similarly to Section~\ref{sec:unif}, below we define the notion of nominal \theory{C}-unification in terms of \theory{C}-fixed-point constraints, and provide a nominal \theory{C}-unification  unification algorithm specified by means of simplification rules. 
 
 In this section, as in Section~\ref{sec:unif}, we assume a generator of new names exists, and remove the new quantifier from the syntax of unification problems.

 \begin{defi}
 A \theory{C}-{\em unification problem} ${\probl}$ is a pair $\pair{\Phi}{P}$ where $P$ is a finite set of \theory{C}-equality constraints 
 %$\new\overline{c}. s\ucaleq t$ 
 $ s\ucaleq t$ 
 and $\Phi$ is a finite set of \theory{C}-fixed-point constraints 
 %$\new\overline{c}. \pi \efixp{C}^? t$. 
 $\pi \efixp{C}^? t$.
 To ease the notation, we will denote $s\ucaleq t$ by $s\approx^? t$.
 \end{defi}

 \begin{defi}[Solutions of $\theory{C}$-unification problems]\label{def:c.probl}
 A solution for a \theory{C}-unification problem $\probl=\pair{\Phi}{P}$ is a pair $\pair{\Upsilon}{\sigma}$, where the following conditions are satisfied
  \begin{enumerate}
   %   \item $\Upsilon \cent \Phi \sigma$;   
   %MF Why the case above??? Isn't it the same as the one below??
   
      \item $\Upsilon \cent  \pi \efixp{C}t\sigma$, if $  \pi \efixp{C}^? t \in \Phi$;
      %$\Upsilon \cent \new \overline{c'}. \pi \efixp{C}t\sigma$, if $ \new\overline{c}. \pi \efixp{C}^? t \in \Phi$;
     \item $\Upsilon \cent  s\sigma \ealeq{C} t\sigma$, if $ s\approx^? t \in P$.
     %$\Upsilon \cent \new \overline{c'}. \ s\sigma \ealeq{C} t\sigma$, if $\new \overline{c}.\ s\approx^? t \in P$.
     \item $\Upsilon \cent  X\sigma \sigma \ealeq{C} X\sigma$.
  \end{enumerate}
  \end{defi}

  The set of solutions for a \theory{C}-unification problem $\probl$ is denoted as  $\mathcal{U}_\theory{C}(\probl)$.

 \begin{defi}[Most general solution and complete set of solutions]
   \leavevmode
  \begin{itemize}
      \item For $\pair{\Upsilon}{\sigma}$ and $\pair{\Psi}{\delta}$ in $\mathcal{U}_\theory{C}(\probl)$, we say that $\pair{\Upsilon}{\sigma}$ is \emph{ more general than}
      $\pair{\Psi}{\delta}$, denoted $\pair{\Upsilon}{\sigma} \preceq \pair{\Psi}{\delta}$, if there exists a substitution $\rho$ satisfying $\Psi \cent \sigma \rho \ealeq{C} \delta$ and $ \Psi \cent \Upsilon \rho $.
      \item  A subset $\mathcal{C}$ of $\mathcal{U}_\theory{C}(\probl)$ is a \emph{complete set of solutions} of $\probl$ if for all $\pair{\Psi}{\sigma}\in \mathcal{U}_\theory{C}(\probl)$, there exits a $\pair{\Upsilon}{\delta} \in \mathcal{C}$ such that $\pair{\Upsilon}{\delta}\preceq \pair{\Psi}{\sigma}$. We denote a complete set of solutions of the \theory{C}-unification problem $\probl$ as $\mathcal{C}(\probl)$.
  \end{itemize}
 \end{defi}

Table~\ref{table:c.simpl.rules} presents the simplification rules for \theory{C}-unification problems. They are derived from the deduction rules for judgements, as done for the syntactic case. The main difference is that now there are  two rules for the simplification of fixed-point constraints involving commutative symbols (rules $\rulefont{\efixp{C} \tf{f}^\theory{C}1}$ and $\rulefont{\efixp{C} \tf{f}^\theory{C}2}$) and two rules to deal with equality of terms rooted by a commutative symbol (rules \rulefont{\caleq \tf{f}^\theory{C}1} and \rulefont{\caleq \tf{f}^\theory{C}2}).

\begin{table}[ht]
 \hrule
 \small
 \[
 \begin{array}{llll}
(\efixp{C} \ at) &\probl\uplus \{\pi \efixp{C}^? a\} &\Longrightarrow& \probl, \mbox{ if }\ \pi(a) =a\\
(\efixp{C} \tf{f}^{\emptyset})& \probl\uplus \{\pi \efixp{C}^? \tf{f}t\} &\Longrightarrow &\probl\cup \{\pi \efixp{C}^? t\}, \tf{f}\mbox{ not } \theory{C}\\
 (\efixp{C} \tf{f}^\theory{C}1)& \probl\uplus \{\pi \efixp{C}^?  \tf{f}^\theory{C}(t_0,t_1)\} &\Longrightarrow &\probl\cup \{\pi\act t_0\approx^? t_0, \pi\act t_1\approx^? t_1\}\\
 (\efixp{C} \tf{f}^\theory{C}2)& \probl\uplus \{\pi \efixp{C}^? \tf{f}^\theory{C}(t_0,t_1)\} &\Longrightarrow &\probl\cup \{\pi\act t_0\approx^? t_1, \pi\act t_1\approx^? t_0\}\\
(\efixp{C}\  tuple)&\probl\uplus \{\pi \efixp{C}^?    (\widetilde{t})_{1..n}\}&\Longrightarrow &\probl\cup\{\pi \efixp{C}^?t_1, \ldots, \pi\efixp{C}^? t_n\}\\
(\efixp{C} abs)& \probl\uplus \{\pi \efixp{C}^? [a]t\} &\Longrightarrow &\probl\cup 
\{\pi \efixp{C}^? \swap{a}{c_1} \act t, \overline{\swap{c_1}{c_2} \efixp{C} \var{t}}\}\\
(\efixp{C} var)& \probl\uplus \{\pi \efixp{C}^? \pi'\cdot X\}&\Longrightarrow & \probl\cup \{\pi^{(\pi')^{-1}}\efixp{C}^? X\}, \mbox{ if } \pi' \neq Id
\\
(\caleq a)&\probl\uplus \{a\approx^? a\}&\Longrightarrow & {\probl}\\
 (\caleq \tf{f})&\probl\uplus \{\tf{f}t\approx^? \tf{f}t'\}&\Longrightarrow & \probl\cup \{t\approx^? t'\}, \ \tf{f}\mbox{ not } \theory{C}\\
 (\caleq \tf{f}^\theory{C}1)&\probl\uplus \{\tf{f}^\theory{C}(t_0,t_1)\approx^?\tf{f}^\theory{C}(s_0,s_1)\}&\Longrightarrow & \probl\cup \{t_0\approx^? s_0, t_1\approx^? s_1\}\\
 (\caleq \tf{f}^\theory{C}2)&\probl\uplus \{\tf{f}^\theory{C}(t_0,t_1)\approx^?\tf{f}^\theory{C}(s_0,s_1)\}&\Longrightarrow & \probl\cup \{t_0\approx^?s_1, t_1\approx^? s_0\}\\
(\caleq tuple)&\probl\uplus \{(\widetilde{t})_{1..n}\approx^? (\widetilde{t'})_{1..n}\} &\Longrightarrow& {\probl\cup \{t_1\approx^?  t'_1, \ldots, t_n\approx^? t'_n\}}\\
(\caleq abs1)&\probl\uplus \{[a]t\approx^? [a]t'\}&\Longrightarrow & {\probl\cup\{t\approx^? t'\}}\\
(\caleq abs2)&\probl\uplus \{[a]t\approx^? [b]s\} &\Longrightarrow
&\probl\cup \{t\approx^?\swap{a}{b}\act s, \swap{a}{c_1}\efixp{C}^?s, \\
&&&\hfill\overline{\swap{c_1}{c_2} \efixp{C} \var{s}} \}  \\
(\caleq var)&\probl\uplus \{\pi \cdot X \approx^? \pi'\act X\} &\Longrightarrow & \probl\ \cup \{(\pi')^{-1}\circ \pi \efixp{C}^? X\}\\
(\caleq inst1)& \probl\uplus \{\pi \cdot X \approx^? t\} &\hspace{-3.5mm}\stackrel{[X \mapsto\pi^{-1}.t]}\Longrightarrow & \probl\{X \mapsto\pi^{-1}.t\}, \mbox{ if } X\notin \var{t}\\ 
(\caleq inst2)& \probl\uplus \{ t \approx^? \pi \cdot X\} &\hspace{-3.5mm}\stackrel{[X \mapsto\pi^{-1}.t]}\Longrightarrow & \probl\{X \mapsto\pi^{-1}.t\}, \mbox{ if } X\notin \var{t}\\ 
 \end{array}
 \]
 \hrule
 \caption{Simplification Rules for \theory{C}-unification problems via $\theory{C}$-fixed-point constraints. 
 In rules $(\fixp_C abs)$ and $(\caleq abs2 )$, $c_1$ and $c_2$ are newly generated names.}
 \label{table:c.simpl.rules}
 \end{table}

   We write $\probl\Longrightarrow_\theory{C} \probl'$ when $\probl'$ is obtained from $\probl$ by applying a simplification rule from Table~\ref{table:c.simpl.rules} and we write $\stackrel{*}{\Longrightarrow}_\theory{C}$ for the reflexive and transitive closure of $\Longrightarrow_\theory{C}$. We omit the subindex when it is clear from the context.

 \begin{lem}[Termination of simplification for $\theory{C}$-unification problems]
 \label{lem:C-termination}
	There is no infinite chain of reductions $\Longrightarrow_\theory{C}$ starting from a \theory{C}-unification problem $ \probl$.
 \end{lem}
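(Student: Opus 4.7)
The plan is to mimic the argument used for syntactic unification (the Termination lemma in Section~\ref{sec:unif}), by exhibiting a lexicographic measure on problems that strictly decreases with every rule in Table~\ref{table:c.simpl.rules}. Concretely, associate with each problem $\probl$ the pair $[\probl]=(n_1,M)$, where $n_1$ is the number of distinct variables appearing in $\probl$ and $M$ is the multiset of sizes of those equality constraints and those \emph{non-primitive} fixed-point constraints occurring in $\probl$. Here the size of an equality constraint $s\approx^? t$ is $\max(|s|,|t|)$, the size of a fixed-point constraint $\pi\efixp{C}^? t$ is $|t|$, and $|\cdot|$ is the usual term size (with the convention that a permutation action does not change size, since $|\pi\act u|=|u|$). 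Primitive constraints of the form $\pi\efixp{C}^? X$ are deliberately not counted in $M$, since no rule simplifies them further. Compare pairs with the lexicographic order $>_{lex}$ obtained from $>$ on naturals and its multiset extension.

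Next, I would check case by case that each rule $\probl\Longrightarrow_\theory{C}\probl'$ gives $[\probl]>_{lex}[\probl']$. The two instantiation rules $(\caleq inst1)$ and $(\caleq inst2)$ strictly decrease $n_1$ by eliminating the instantiated variable $X$, so the lexicographic order drops in the first component regardless of what happens to $M$. All other rules preserve $n_1$: the newly generated atoms introduced by $(\efixp{C} abs)$ and $(\caleq abs2)$ are not variables, and the added constraints $\overline{\swap{c_1}{c_2}\efixp{C}\var{t}}$ only mention variables that already appear in the problem. For these non-instantiating rules, I would then show that $M$ strictly decreases in the multiset order, by verifying that the constraint being removed is larger (as a natural number) than every non-primitive constraint that replaces it. The routine cases ($\efixp{C} at$, $\efixp{C} \tf{f}^{\emptyset}$, $\efixp{C} tuple$, $\efixp{C} abs$, $\efixp{C} var$, $\caleq a$, $\caleq \tf{f}$, $\caleq tuple$, $\caleq abs1$, $\caleq var$) are identical to the syntactic case; note in particular that $(\efixp{C} var)$ and $(\caleq var)$ replace a non-primitive constraint by a primitive one, which removes an element from $M$ outright.

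The genuinely new cases are the two commutative rules for fixed-point constraints, $(\efixp{C}\tf{f}^\theory{C}1)$ and $(\efixp{C}\tf{f}^\theory{C}2)$, because they convert a fixed-point constraint into equality constraints, and also the rule $(\caleq\tf{f}^\theory{C}1{,}2)$. In the fixed-point case, the original constraint $\pi\efixp{C}^? \tf{f}^\theory{C}(t_0,t_1)$ contributes an element of size $1+|t_0|+|t_1|$ to $M$, whereas the replacements $\pi\act t_i\approx^? t_i$ contribute elements of size $\max(|\pi\act t_i|,|t_i|)=|t_i|$, both strictly smaller than $1+|t_0|+|t_1|$. Hence a single larger element is replaced by two strictly smaller ones, which is a decrease in the multiset extension. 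The commutative equality rule is analogous: $\tf{f}^\theory{C}(t_0,t_1)\approx^? \tf{f}^\theory{C}(s_0,s_1)$ has size $\max(1+|t_0|+|t_1|,1+|s_0|+|s_1|)$, which strictly dominates each of $\max(|t_j|,|s_j|)$.

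The main subtle point — and the only one I expect to need care — is the bookkeeping for the ``new atom'' primitive constraints $\overline{\swap{c_1}{c_2}\efixp{C}\var{t}}$ introduced by $(\efixp{C} abs)$ and $(\caleq abs2)$: although there are finitely many of them and they mention freshly generated atoms each time, they are primitive and therefore excluded from $M$, so they can never be further simplified and do not contribute to $M$ in any subsequent step. Once this is noted, all cases have been verified, $\Longrightarrow_\theory{C}$ is strictly decreasing in the well-founded order $>_{lex}$, and termination follows.
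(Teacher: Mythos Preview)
Your proposal is correct and follows essentially the same strategy as the paper: a lexicographic measure $(n_1,M)$ with $n_1$ the number of variables and $M$ a multiset over the non-primitive constraints, decreasing under every rule. The only cosmetic difference is that the paper uses term \emph{heights} rather than sizes for $M$ in the $\theory{C}$ case; your size-with-$\max$ variant works equally well, and your explicit check of the new commutative rules $(\efixp{C}\tf{f}^\theory{C}1{,}2)$ and $(\caleq\tf{f}^\theory{C}1{,}2)$ is exactly the extra verification the paper glosses over.
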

 
 \begin{proof}

 Termination of the simplification rules follows directly from the fact that the following measure of the size of $\probl$ is  strictly decreasing:
 $[\probl] = (n_1,M)$ where $n_1$ is the number of different variables 
 used in $\probl$,
 and $M$ is the multiset of \emph{heights} of equality constraints and non-primitive fixed-point constraints occurring in $\probl$.
 
Each simplification step either eliminates one variable (when an instantiation rule is used) and therefore decreases the first component of the interpretation, or leaves the first component unchanged but replaces a constraint with primitive ones and/or constraints where terms have smaller height.
 \end{proof}

 The simplification rules (Table~\ref{table:c.simpl.rules}) specify a \theory{C}-\emph{unification algorithm}:  we apply the simplification rules in a problem $\probl$  until we reach  normal forms. In the case of a term rooted by a commutative symbol, two rules can be applied, so a tree of derivations is built. The termination property (Lemma~\ref{lem:C-termination}) guarantees the tree is finite.

 For the leaves in the tree (i.e., normal forms), the notions of \emph{consistency, failure, correctness} can be defined as in   Section~\ref{sec:unif} (see Definition~\ref{def:successfulnf}). So, if a normal form contains equality constraints, or inconsistent fixed-point constraints of the form 
 %$\new \overline{c}.\pi \efixp{C}^? a$ 
 $\pi \efixp{C}^? a$ such that $\pi(a) \neq a$ 
 then this normal form is a failure. Only leaves containing consistent fixed-point constraints produce solutions.

We now prove that the \theory{C}-unification algorithm is sound and complete. The proof is done in two stages, first we show that the non-instantiating rules preserve solutions if we consider all the branches of the derivation tree (here it is important to consider all the branches: due to the non-deterministic application of rules involving commutative operators, if we consider just one branch we may loose solutions). Then we show that the set of solutions computed from all the successful leaves is a complete set of solutions for the initial problem.

 \begin{lem}[Correctness of non-instantiating rules] 
\label{lem.cpreservation.of.solutions}
Let $\probl$ be a \theory{C}-unification problem and $n$ a natural number. Assume  $\probl\overset{{n}}{\Longrightarrow_\theory{C}}\probl'_i$ $(i \in I)$ are all the 
reduction sequences of length smaller than or equal to $n$ starting from $\probl$ that do not use
instantiating rules $(\caleq inst1)$ and $(\caleq inst2)$.  Then
\begin{enumerate}
 \item
 $\mathcal{U}_\theory{C}(\probl)= \bigcup_{i\in I}\mathcal{U}_\theory{C}(\probl'_i)$, and 
\item
if $\probl'_i$ contains  inconsistent reduced fixed-point constraints then $\mathcal{U}_\theory{C}(\probl'_i)=\emptyset$.
 \end{enumerate}
\end{lem}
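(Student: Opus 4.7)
The plan is to mimic the structure of Lemma~\ref{lem.preservation.of.solutions} in the syntactic case, proceeding by induction on the length $n$ of the reduction sequences, but adapted to account for the branching introduced by the commutative rules. The base case $n=0$ is trivial since then $I=\{i_0\}$ and $\probl'_{i_0}=\probl$. For the inductive step, I would first prove the single-step version: if $\probl \Longrightarrow_\theory{C} \probl_k$ for $k$ ranging over \emph{all} possible one-step successors of $\probl$, then $\mathcal{U}_\theory{C}(\probl)=\bigcup_k \mathcal{U}_\theory{C}(\probl_k)$. The general statement then follows by applying the induction hypothesis to each immediate successor $\probl_k$ and taking the union.

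The one-step claim is proved by case analysis on the rule applied. For the deterministic (non-branching) rules $(\efixp{C}\ at)$, $(\efixp{C}\ \tf{f}^\emptyset)$, $(\efixp{C}\ tuple)$, $(\efixp{C}\ abs)$, $(\efixp{C}\ var)$, $(\caleq a)$, $(\caleq \tf{f})$, $(\caleq tuple)$, $(\caleq abs1)$, $(\caleq abs2)$, $(\caleq var)$, the argument is a straightforward adaptation of the corresponding cases in Lemma~\ref{lem.preservation.of.solutions}, appealing to Theorem~\ref{th:cfix.alpha} (the equational analogue of Theorem~\ref{th:fix.alpha}) and to Inversion and Equivariance for the relations $\efixp{C}$ and $\ealeq{C}$. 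In each case there is a unique successor $\probl_1$ and one proves $\mathcal{U}_\theory{C}(\probl)=\mathcal{U}_\theory{C}(\probl_1)$ by showing each derivation rule used to justify a solution can be read bottom-up and top-down.

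The interesting cases are the branching rules $(\efixp{C}\ \tf{f}^\theory{C}1/2)$ and $(\caleq \tf{f}^\theory{C}1/2)$, where I must prove $\mathcal{U}_\theory{C}(\probl)=\mathcal{U}_\theory{C}(\probl_1)\cup\mathcal{U}_\theory{C}(\probl_2)$. For $(\caleq \tf{f}^\theory{C}1/2)$: suppose $\pair{\Upsilon}{\sigma}\in\mathcal{U}_\theory{C}(\probl)$ where $\probl=\probl''\uplus\{\tf{f}^\theory{C}(t_0,t_1)\approx^? \tf{f}^\theory{C}(s_0,s_1)\}$. Then $\Upsilon\cent \tf{f}^\theory{C}(t_0,t_1)\sigma \ealeq{C} \tf{f}^\theory{C}(s_0,s_1)\sigma$; by Inversion on the rule $\rulefont{\ealeq{E}\tf{f}^\theory{C}}$ in Figure~\ref{fig.rules.eequal}, either $\Upsilon\cent t_0\sigma\ealeq{C} s_0\sigma$ and $\Upsilon\cent t_1\sigma\ealeq{C}s_1\sigma$, placing $\pair{\Upsilon}{\sigma}$ in $\mathcal{U}_\theory{C}(\probl_1)$, or the swapped variant holds, placing it in $\mathcal{U}_\theory{C}(\probl_2)$. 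The converse inclusion is immediate by applying rule $\rulefont{\ealeq{E}\tf{f}^\theory{C}}$. For $(\efixp{C}\ \tf{f}^\theory{C}1/2)$ one uses Theorem~\ref{th:cfix.alpha} to replace the fixed-point constraint $\pi\efixp{C}\tf{f}^\theory{C}(t_0,t_1)\sigma$ by the equality $\pi\cdot\tf{f}^\theory{C}(t_0,t_1)\sigma\ealeq{C}\tf{f}^\theory{C}(t_0,t_1)\sigma$ and then proceeds as in the previous case, observing that $\pi$ distributes homomorphically over $\tf{f}^\theory{C}$. The main obstacle is to be careful that Inversion on rule $\rulefont{\ealeq{E}\tf{f}^\theory{C}}$ is truly disjunctive on a \emph{per-solution} basis, so that each individual $\pair{\Upsilon}{\sigma}$ lands in at least one of the two branches, even if different solutions land in different branches, which is exactly what justifies replacing equality of solution sets by a union.

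Part (2) is a straightforward inspection: if $\probl'_i$ contains a reduced inconsistent fixed-point constraint $\pi \efixp{C}^? a$ with $\pi(a)\neq a$, then for every candidate $\pair{\Upsilon}{\sigma}$ we would need $\Upsilon\cent \pi\efixp{C} a\sigma=a$, but by rule $\rulefont{\efixp{E} a}$ in Figure~\ref{fig.rules.efixp} this forces $\pi(a)=a$, a contradiction; hence $\mathcal{U}_\theory{C}(\probl'_i)=\emptyset$.
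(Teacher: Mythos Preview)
Your proposal is correct and follows essentially the same approach as the paper: induction on $n$, case analysis on the applied rule, with the non-branching rules handled as in Lemma~\ref{lem.preservation.of.solutions} and the branching rules $(\efixp{C}\ \tf{f}^\theory{C}1/2)$ and $(\caleq \tf{f}^\theory{C}1/2)$ treated by inversion on $\rulefont{\ealeq{E}\tf{f}^\theory{C}}$ (via Theorem~\ref{th:cfix.alpha} in the fixed-point case) to show each solution lands in one of the two successor problems. The only cosmetic difference is that the paper analyses the last step of a chain whereas you analyse the first step and recurse, which is an equivalent way to organise the same induction.
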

\begin{proof}
For part (1), 
as for the proof of Lemma~\ref{lem.preservation.of.solutions}, we proceed by induction on $n$, but here we need to consider all the branches of length smaller than or equal to $n$ in the derivation tree to ensure completeness.

The interesting cases are for the rules involving $\theory{C}$ function symbols, all the other cases are very similar to the proof of Lemma~\ref{lem.preservation.of.solutions}.  
\begin{itemize}
\item Suppose that the last step of a simplification chain  has the form
 \begin{equation*}
 \begin{aligned}
% \probl_{n-1}&=\probl'\uplus \{\new\overline{c}. \pi \efixp{C}^? \tf{f}^\theory{C}(s_0,s_1)\} \Longrightarrow \probl'\cup \{\new\overline{c}.\pi \cdot s_0 \ealeq{C}^? s_i, \new \overline{c}. \pi \cdot s_1 \ealeq{C} s_{(i+1)mod 2}\}= \probl_n^i.
\probl_{n-1}&=\probl'\uplus \{\pi \efixp{C}^? \tf{f}^\theory{C}(s_0,s_1)\} \Longrightarrow \probl'\cup \{\pi \cdot s_0 \ealeq{C}^? s_i,  \pi \cdot s_1 \ealeq{C} s_{(i+1)mod 2}\}= \probl_n^i.
\end{aligned}
\end{equation*}

In this case, the rule used is either $(\efixp{C} \tf{f}^\theory{C}1)$ or $(\efixp{C} \tf{f}^\theory{C}2)$. 

Assume $i = 0$ and $(\efixp{C} \tf{f}^\theory{C}1)$ was used (the case $i = 1$ is identical). 

There is another reduction sequence of the same length using $(\efixp{C} \tf{f}^\theory{C}2)$ and ending on $\probl_n^1$ (the same problem with $i = 1$).

Let $\pair{\Psi}{\sigma}\in \mathcal{U}(\probl_{n-1}$) be a solution for $\probl_{n-1}$:
\begin{enumerate}
 \item 
% $ \Psi \cent \new\overline{c}.\pi' \fixp t\sigma$, for all $\new\overline{c}.\pi'\fixp^?t\in \probl'$ and $\Psi \cent \new\overline{c}.\pi \efixp{C} \tf{f}^\theory{C}(s_0,s_1)\sigma$.
 $\Psi \cent \pi' \efixp{C} t\sigma$, for all $\pi'\fixp^?t\in \probl'$ and $\Psi \cent \pi \efixp{C} \tf{f}^\theory{C}(s_0,s_1)\sigma$.
 \item $ \Psi \cent t\sigma\ealeq{C} s\sigma$, for all $t\approx^?s\in \probl'$.
  \end{enumerate}

Since 
%$\Psi \cent \new\overline{c}.\pi \efixp{C} \tf{f}^\theory{C}(s_0,s_1)\sigma$
$\Psi \cent \pi \efixp{C} \tf{f}^\theory{C}(s_0,s_1)\sigma$
and $\tf{f}^\theory{C}(s_0,s_1)\sigma= \tf{f}^\theory{C}(s_0\sigma,s_1\sigma)$, it follows that 
%$\Psi \cent \new\overline{c}.\pi \fixp \tf{f}^\theory{C}(s_0\sigma,s_1\sigma)$.
$\Psi \cent \pi \efixp{C} \tf{f}^\theory{C}(s_0\sigma,s_1\sigma)$. From  $\rulefont{\efixp{E} \tf{f}^\theory{C}}$, one has  that either there exist a  proof for 
%$\Psi \cent \new\overline{c}.\pi \cdot s_0\sigma \ealeq{C}s_i\sigma$ 
$\Psi \cent \pi \cdot s_0\sigma \ealeq{C}s_i\sigma$ 
and 
%$\Psi \cent \new\overline{c}.\pi \cdot s_1\sigma \ealeq{C}s_{(i+1)mod\ 2}\sigma$, 
$\Psi \cent \pi \cdot s_1\sigma \ealeq{C}s_{(i+1)mod\ 2}\sigma$, 
for $i=0$ or for $i = 1$. Therefore, 
$\pair{\Psi}{\sigma}\in \mathcal{U}(\probl_{n}^0)$ or $\pair{\Psi}{\sigma}\in \mathcal{U}(\probl_{n}^1)$ and the result follows.

The other direction is similar:  if $\pair{\Psi}{\sigma}\in \mathcal{U}(\probl_{n}^i)$, then $\pair{\Psi}{\sigma}\in \mathcal{U}(\probl_{n-1})$.

\item Suppose the last step of the simplification chain has the form 
\begin{equation*}
 \begin{aligned}
 %\probl_{n-1}&=\probl'\uplus \{\new\overline{c}. \tf{f}^\theory{C}(s_0,s_1)\ealeq{C} \tf{f}^\theory{C}(t_0,t_1) \} \Longrightarrow  \probl_n^i
 \probl_{n-1}&=\probl'\uplus \{ \tf{f}^\theory{C}(s_0,s_1)\ealeq{C} \tf{f}^\theory{C}(t_0,t_1) \} \Longrightarrow  \probl_n^i
\end{aligned}
\end{equation*}
where  $\probl_n^i=\probl'\cup 
%\{\new\overline{c}. s_0\ealeq{C}t_i, \new\overline{c}. s_1\ealeq{C}t_{(i+1)mod\ 2}\}$.
\{ s_0\ealeq{C}t_i,  s_1\ealeq{C}t_{(i+1)mod\ 2}\}$.

As above, it follows that there two branches of this form, for $i = 0$ and $i=1$.

Let $\pair{\Psi}{\sigma}\in \mathcal{U}(\probl_{n-1}$) be a solution for $\probl_{n-1}$. In particular, $\Psi \cent %\new\overline{c}.
\tf{f}^\theory{C}(s_0\sigma,s_1\sigma)\ealeq{C}\tf{f}^\theory{C}(t_0\sigma,t_1\sigma)$. By applying rule $\rulefont{\ealeq{E} \tf{f}^\theory{C}}$, one has that there exist proofs of $\Psi \cent 
%\new\overline{c}.
s_0\sigma\ealeq{C}t_i\sigma$ and $\Psi \cent 
%\new\overline{c}. 
s_1\sigma\ealeq{C}t_{(i+1)mod\ 2}\sigma$, for $i=0$ or $i=1$. Therefore, 
$\pair{\Psi}{\sigma}\in \mathcal{U}(\probl_{n}^i)$ for $i=0$ or $i=1$, and the result follows.

Similarly, if $\pair{\Psi}{\sigma}\in \mathcal{U}(\probl_{n}^i)$, then $\pair{\Psi}{\sigma}\in \mathcal{U}(\probl_{n-1})$.
\end{itemize}

The second part of the lemma follows directly from the fact that inconsistent constraints are not derivable, therefore have no solutions.
\end{proof}

As in Section~\ref{sec:unif},  when $\probl$ is a successful leaf, $\sol{\probl}$ consists of the composition $\sigma$ of all substitutions applied through the simplification steps and the fixed point context obtained.

\begin{thm}[Soundness and Completeness]
 Let $\probl=\pair{\Upsilon}{P}$ be a \theory{C}-unification problem and let
 $\{\probl'_i \mid  \probl \overset{*}{\Longrightarrow_\theory{C}} \probl'_i ~and~  \probl'_i \mbox{ successful  normal form}\}$ be the set of all the successful normal forms of $\probl$ (i.e., leaves in the derivation tree without equality constraints or inconsistent fixed-point constraints).

\begin{enumerate}
\item
If $\pair{\Phi}{\sigma}\in \bigcup \sol{\probl'_i}$ then  $\pair{\Phi}{\sigma}\in\mathcal{U}(\probl)$, and
\item  If $\pair{\Phi}{\sigma}\in\mathcal{U}(\probl)$, there exists 
$\pair{\Phi'}{\sigma'}$ such that $\pair{\Phi'}{\sigma'}\in \bigcup \sol{\probl'_i}$ and  $\pair{\Phi'}{\sigma'}\preceq\pair{\Phi}{\sigma}$, 
that is, the set $\bigcup \sol{\probl'_i}$ is a complete set of solutions.
\end{enumerate}
\end{thm}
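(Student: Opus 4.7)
The plan is to prove both parts by induction on the length of the reduction chain $\probl \Longrightarrow_\theory{C}^{*} \probl'_i$, leveraging the Correctness of Non-Instantiating Rules Lemma (Lemma~\ref{lem.cpreservation.of.solutions}) to handle non-instantiating steps (including the branching C-rules) and adapting the argument from Theorem~\ref{thm.sol.is.principal} to handle the instantiating rules $(\caleq inst1)$ and $(\caleq inst2)$.

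For soundness (part 1), fix a successful normal form $\probl'_i$ with computed solution $\pair{\Phi}{\sigma}$ and induct on the length $n$ of $\probl \Longrightarrow_\theory{C}^{*} \probl'_i$. In the base case $n=0$, the problem $\probl$ itself is a successful normal form, $\sigma$ is the identity, $\Phi$ is the collection of primitive consistent fixed-point constraints of $\probl$, and the three conditions of Definition~\ref{def:c.probl} hold trivially. For the inductive step, examine the first rule applied. If it is a non-instantiating rule producing $\probl_1$ (or, for a branching C-rule, the specific branch $\probl_1$ leading to $\probl'_i$), the induction hypothesis gives $\pair{\Phi}{\sigma} \in \mathcal{U}_\theory{C}(\probl_1)$, and Lemma~\ref{lem.cpreservation.of.solutions}(1) (applied with $n=1$) yields $\mathcal{U}_\theory{C}(\probl_1) \subseteq \bigcup_j \mathcal{U}_\theory{C}(\probl_1^j) = \mathcal{U}_\theory{C}(\probl)$. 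If it is an instantiating rule with $\theta = [X \mapsto \pi^{-1}\cdot t]$ producing $\probl\theta$, then $\sigma = \theta \circ \sigma'$ for some $\sigma'$, and Proposition~\ref{prop:weak} (Preservation under Substitution) together with Equivariance show that each original constraint of $\probl$ is satisfied by $\pair{\Phi}{\sigma}$; idempotency of $\theta \circ \sigma'$ follows from $X \notin \var{t}$ and from $X$ not appearing in the range of $\sigma'$ (since $\sigma'$ is computed on $\probl\theta$, which no longer contains $X$).

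For completeness (part 2), fix $\pair{\Phi}{\sigma} \in \mathcal{U}_\theory{C}(\probl)$ and induct on the maximum length of a reduction chain from $\probl$ to any successful normal form (which exists and is finite by Lemma~\ref{lem:C-termination}). Again, the base case is trivial. In the inductive step, one must \emph{choose} the first rule to apply so that the given solution survives in the chosen branch. For non-branching non-instantiating rules this is immediate. For the branching C-rules, say $(\caleq \tf{f}^\theory{C}1)$ vs.\ $(\caleq \tf{f}^\theory{C}2)$ applied to $\tf{f}^\theory{C}(t_0,t_1) \approx^? \tf{f}^\theory{C}(s_0,s_1)$, Lemma~\ref{lem.cpreservation.of.solutions}(1) states that $\pair{\Phi}{\sigma}$ is a solution of at least one branch (this is where rule $\rulefont{\ealeq{E} \tf{f}^\theory{C}}$ comes in: $\Phi \vdash \tf{f}^\theory{C}(t_0,t_1)\sigma \ealeq{C} \tf{f}^\theory{C}(s_0,s_1)\sigma$ forces one of the two pairing choices to hold); pick such a branch and apply the induction hypothesis. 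The reasoning is analogous for $(\efixp{C} \tf{f}^\theory{C}1)$ vs.\ $(\efixp{C} \tf{f}^\theory{C}2)$. For an instantiating rule with $\theta = [X \mapsto \pi^{-1}\cdot t]$, factor $\sigma$ as $\theta \circ \sigma''$ where $\sigma''$ agrees with $\sigma$ on variables other than $X$ and maps $X$ to $X$; then $\pair{\Phi}{\sigma''} \in \mathcal{U}_\theory{C}(\probl\theta)$, and the induction hypothesis delivers a successful normal form $\probl'_i$ of $\probl\theta$ with $\sol{\probl'_i} = \pair{\Phi'}{\sigma'}$ satisfying $\pair{\Phi'}{\sigma'} \preceq \pair{\Phi}{\sigma''}$; composing with $\theta$ gives $\pair{\Phi'}{\theta \circ \sigma'} \preceq \pair{\Phi}{\sigma}$, as required.

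The principal obstacle is ensuring that the branching induced by the C-rules is handled correctly in both directions: for soundness we must show that picking any single branch that leads to a successful leaf still produces a valid solution of the parent problem, and for completeness we must choose branches that respect the given solution. Both hinge on Lemma~\ref{lem.cpreservation.of.solutions}, which sums solutions across branches of one-step reductions; iterating this lemma along the entire reduction tree, together with the translation between $\efixp{C}$ and $\ealeq{C}$ provided by Theorem~\ref{th:cfix.alpha}, produces the desired equality $\mathcal{U}_\theory{C}(\probl) = \bigcup_i \mathcal{U}_\theory{C}(\probl'_i)$ modulo the instantiations accumulated along each branch, from which both parts of the theorem follow.
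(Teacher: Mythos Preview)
Your proposal is correct and follows essentially the same approach as the paper: induction on the length of the derivation, case analysis on the first rule applied, using Lemma~\ref{lem.cpreservation.of.solutions} for the non-instantiating steps (with the branching C-rules handled by summing over children) and the argument of Theorem~\ref{thm.sol.is.principal} for the instantiating steps. Your write-up is in fact considerably more detailed than the paper's own proof, which merely points to these two ingredients; in particular, your explicit treatment of how to choose the branch in the completeness direction and how to factor $\sigma$ through $\theta$ is exactly what the paper leaves implicit.
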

 
 \begin{proof}
 The proof is by induction on the length of a derivation $\probl {\Longrightarrow}^*\probl'_i,$ distinguishing cases according to the first rule used.
 By Lemma \ref{lem.cpreservation.of.solutions}, it is sufficient to check generality of solutions after each application of instantiation rules.
If the first step uses a non-instantiating rule, then the previous lemma, together with the induction hypothesis, ensures that the set of  solutions of $\probl$ is exactly the set of solutions of its children. If the first step is instantiating, we proceed as in the proof of Theorem~\ref{thm.sol.is.principal}.
 \end{proof}
 
 \begin{obs}
 Using the approach to nominal $\alpha$-equivalence via freshness, a nominal  \theory{C}-unification algorithm was presented in~\cite{Ayala-Rincon2018,Ayala2017}, which outputs solutions represented as triples   $\pair{\nabla, \sigma}{P}$ consisting
of a freshness context $\nabla$, a substitution $\sigma$ and a set  $P$  of fixed-point equations
of the form $\pi \cdot X\appAC^? X$.

As with standard nominal unification, one can use the functions $[\_]_\#$ and $[\_]_\fixp$ to translate solutions $\pair{\nabla, \sigma}{P}$ of nominal \theory{C}-unification problems with freshness constraints as solutions $\pair{[\nabla]_{\fixp}\cup \{P_{\efixp{C}}\}}{\sigma}$ of nominal \theory{C}-unification problems via \theory{C}-fixed-point constraints,
where $P_{\efixp{C}}=\{\pi \efixp{C} X\ |\ \pi \cdot X\appAC^? X\in P \}$.
\end{obs}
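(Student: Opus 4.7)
The plan is to verify that, whenever $\pair{\nabla, \sigma}{P}$ is a solution of a nominal $\theory{C}$-unification problem $\mathcal{I}$ in the freshness-based framework of~\cite{Ayala-Rincon2018,Ayala2017}, the translated pair $\pair{[\nabla]_\fixp \cup P_{\efixp{C}}}{\sigma}$ satisfies the three conditions of Definition~\ref{def:c.probl} for being a solution of the corresponding fixed-point-based $\theory{C}$-unification problem $\pair{\emptyset}{\mathcal{I}}$. Idempotency of $\sigma$ (condition (3)) is inherited verbatim since the substitution is unchanged by the translation, and condition (1) is vacuous because the input problem carries no fixed-point constraints on its left-hand side.

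The central work is condition (2): for every input equation $s\approx^? t \in \mathcal{I}$ one must exhibit $[\nabla]_\fixp \cup P_{\efixp{C}} \cent s\sigma \ealeq{C} t\sigma$. By the defining property of a freshness-based solution, $\nabla \cent s\sigma \appAC t\sigma$, and Theorem~\ref{thm:aleq.to.appAc}(2) immediately delivers $[\nabla]_\fixp \cent s\sigma \ealeq{C} t\sigma$; the Weakening corollary of Proposition~\ref{prop:weak} then extends the context with the additional primitive constraints of $P_{\efixp{C}}$. For the leftover equations $\pi\cdot X \appAC^? X$ in $P$ — which in the freshness-based framework stand for constraints that any further refinement must preserve — each translates to a primitive constraint $\pi \efixp{C} X \in P_{\efixp{C}}$; Theorem~\ref{th:cfix.alpha} confirms that this primitive form captures precisely the intended meaning of the original equation, and the constraint is trivially derivable from the output context by rule $\rulefont{\efixp{E} var}$.

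The main obstacle is not mathematical but notational: the two frameworks package ``leftover constraints'' differently, with the freshness-based algorithm keeping an explicit component $P$ of unresolved fixed-point equations alongside $\nabla$ and $\sigma$, whereas the fixed-point-based algorithm absorbs the same information directly into a single output context. Once the correspondence $P \leftrightarrow P_{\efixp{C}}$ is made precise and the equivalence of $\appAC$ and $\ealeq{C}$ is invoked via Theorem~\ref{thm:aleq.to.appAc}, the translation is immediate. A symmetric converse, translating fixed-point-based solution pairs back into freshness-based triples by applying $[\_]_\#$ to the fixed-point context and splitting off the primitive constraints on variables as a residual set of fixed-point equations, would follow in the same way using Theorem~\ref{thm:aleq.to.appAc}(1) and Theorem~\ref{lem:fresh.fixp.E}.
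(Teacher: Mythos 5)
Your overall route is the one the paper intends: the observation is justified by the equational translation theorems (Theorems~\ref{lem:fresh.fixp.E} and~\ref{thm:aleq.to.appAc}) together with preservation under substitution, and your identification of the residual set $P$ with the context fragment $P_{\efixp{C}}$ is exactly the point of the remark. However, one step in your treatment of condition (2) does not hold as stated. You claim that ``by the defining property of a freshness-based solution, $\nabla \cent s\sigma \appAC t\sigma$,'' and only afterwards enlarge the context by $P_{\efixp{C}}$ via weakening. In~\cite{Ayala-Rincon2018} the component $P$ of the triple $\pair{\nabla,\sigma}{P}$ consists precisely of the fixed-point equations that could \emph{not} be discharged into the freshness context: the original equations are derivable from $\nabla$ only under the additional hypothesis that the equations in $P$ hold, which is why those papers hand $P$ to a separate procedure. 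For the degenerate problem $\pi\cdot X \appAC^{?} X$ itself one gets $\sigma = id$, $\nabla = \emptyset$ and $P = \{\pi\cdot X \appAC^{?} X\}$, and certainly $\emptyset \not\cent \pi\cdot X \appAC X$. So the two-stage argument (derive from $[\nabla]_{\fixp}$ alone, then weaken) fails at its first stage.

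The repair is to run the derivation against the joint context from the start: each residual equation $\pi\cdot X \appAC^{?} X$ in $P$ becomes derivable as $[\nabla]_{\fixp}\cup P_{\efixp{C}} \cent \pi\cdot X \ealeq{C} X$ by rule $\rulefont{\ealeq{E} var}$ once $\pi\efixp{C} X$ sits in the context, the freshness hypotheses of $\nabla$ transfer by Theorem~\ref{lem:fresh.fixp.E}, and combining these with Theorem~\ref{thm:aleq.to.appAc}(2) applied to the reduced form of the problem yields $[\nabla]_{\fixp}\cup P_{\efixp{C}} \cent s\sigma \ealeq{C} t\sigma$ for the original equations. You in fact assemble all of these pieces in your second paragraph --- the only defect is that the argument must not pass through the intermediate judgement $\nabla \cent s\sigma \appAC t\sigma$, which is false in general.
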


A set of simplification rules generalising the  \theory{C}-unification algorithm to take into account  \theory{A} and \theory{AC} symbols was proposed in Ribeiro's thesis \cite{WashThese2019} following the freshness constraint approach.  The main difficulty is in the rules to deal with the \theory{AC} symbols and with treatment of fixed-point equations of the form  $\pi\cdot X \approx^?_{\{\alpha,\theory{AC}\}} X$.  Analytical proofs of soundness and completeness of such rules were given, and a   formalisation in Coq was developed for the \theory{C}-unification algorithm presented in \cite{Ayala-Rincon2018}.  In future work we will consider and relate \theory{AC}-nominal unification with freshness and fixed-point constraints. 

Regarding the complexity of the \theory{C}-unification algorithm based on fixed-point constraints, we observe that in the syntactic case (i.e., without $\alpha$-equivalence rules), the \theory{C}-unification problem is NP-complete so it is expected that the algorithm will be exponential (Chapter 10 on Equational Unification in \cite{BaNi98} surveys in detail the cases of \theory{C} and \theory{AC} unification). Comparing the nominal unification modulo \theory{C} based on freshness and on fixed-point constraints, we can again notice that there is a one-to-one correspondence in the simplification rules, and thus the algorithms have the same behaviour during the simplification phase. The main difference is that using the freshness approach, a second algorithm is needed to solve fixed-point equations (generating an infinite number of solutions in general), which is avoided with the fixed-point approach. 

\section{Conclusions and Future Work}
\label{sec:conclusions}
The notion of fixed-point constraint allowed us to obtain a finite representation of solutions for nominal \theory{C}-unification problems. 
This brings a novel alternative to standard nominal unification approaches in which just the algebra of atom permutations and the logic of freshness constraints are used to implement equational reasoning (e.g., \cite{Aoto2016a,Calves2013,Calves2008a,Cheney2010,Fernandez2004}), and in particular to their extensions modulo commutativity, for which only infinite representations were  possible in the standard approach.    We have shown that with the new proposed approach the development of an algorithm to solve nominal equational problems modulo \theory{C} is simpler,  avoiding, thanks to  the use of fixed-point constraints, the development of procedures for the generation of infinite independent sets of solutions.

In future work we plan to extend this approach to matching and unification modulo different equational theories as well as to the treatment of equational problems in nominal rewriting modulo.  Future study will also address handling the case of Mal'cev permutative theories, which include $n$-ary functions with permutative arguments \cite{HComon93}, as well as the more general and complex case of permutative equational theories \cite{Schmidt-Schauss89}.  Finally, exploring the relation between  Higher-Order Pattern unification in the style of Levy and Villaret~\cite{LevyVillaretRTA08} and nominal unification with fixed-point constrains would be also of great interest.

\bibliographystyle{alpha}
\bibliography{biblio}

%
%\begin{thebibliography}{Kos97}
%
%\bibitem[Kos97]{koslowski:mib}
%J{\"u}rgen Koslowski.
%\newblock Monads and interpolads in bicategories.
%\newblock {\em Theory Appl. Categ.}, 3(8):182--212, 1997.
%
%\end{thebibliography}

% \appendix
% \section{}
%   Here is a check-list to be completed before submitting the paper to
%   LMCS:
% \begin{itemize}[label=$\triangleright$]
% \item your submission uses the latest version of lmcs.cls
% \item the text of your submission is contained in a single file,
%   except for macros and graphics
% \item your graphics use only one format 
% \item you have employed the Journal's original proclamation environments,
%   or suitable extensions thereof 
% \item you have loaded the hyperref package
% \item you have \emph{not} loaded the times package
% \item you have not routinely adjusted vertical spacing manually by issuing
%   \texttt{\textbackslash vspace} or \texttt{\textbackslash vskip} commands
% \item you have used the command \texttt{\textbackslash sloppy} only
%   locally and in emergency cases
% \item your displayed equations use the
%   \texttt{\textbackslash[\dots\textbackslash]} construct
% \item your abstract only contains as few math-expressions as possible and no
%   references 
% \end{itemize}

%   This listing also shows how to override the default bullet $\bullet$
%   of the \texttt{itemize}-envronment by a different symbol, in this
%  case \texttt{\textbackslash triangleright}.
\end{document}